\newtheorem{theorem}{Theorem}[section]
\newtheorem{lemma}[theorem]{Lemma}
\newtheorem{prop}[theorem]{Proposition}
\newtheorem{remark}[theorem]{Remark}
\numberwithin{equation}{section}
\newcommand{\R}{\mathbb{R}}
\newcommand{\C}{\mathbb{C}}
\newcommand{\N}{\mathbb{N}}
\newcommand{\T}{\mathbb{T}}
\newcommand{\func}[3]{#1 : #2 \longrightarrow #3}
\newcommand{\abs}[1]{\left|#1\right|}
\newcommand{\eps}{\varepsilon}
\newcommand{\norm}[1]{\left\|#1\right\|}
\renewcommand{\leq}{\leqslant}
\renewcommand{\geq}{\geqslant}
\newcommand\restr[2]{{
  \left.\kern-\nulldelimiterspace 
  #1 
  \right|_{#2} 
  }}
\def\signmb{\bigskip \begin{center} {\sc
Marc Briant\par\vspace{3mm}
Brown University\par
Division of Applied Mathematics\par
182 George Street, Box F
Providence, RI 02192, USA\par
\vspace{3mm}
e-mail:} \tt{briant.maths@gmail.com} \end{center}}
\def\signcm{\bigskip \begin{center} {\sc
Cl\'ement Mouhot\par\vspace{3mm}
University of Cambridge\par
DPMMS, Centre for Mathematical Sciences\par
Wilberforce Road,
Cambridge CB3 0WA,
UK\par\vspace{3mm}
e-mail:} \tt{C.Mouhot@dpmms.cam.ac.uk} \end{center}}
\def\signsm{\bigskip \begin{center} {\sc
Sara Merino-Aceituno\par\vspace{3mm}
University of Cambridge\par
DPMMS, Centre for Mathematical Sciences\par
Wilberforce Road,
Cambridge CB3 0WA,
UK\par\vspace{3mm}
e-mail:} \tt{s.merino-aceituno@maths.cam.ac.uk} \end{center}}
\begin{document} 

\title[FROM BOLTZMANN TO INCOMPRESSIBLE NAVIER-STOKES IN SOBOLEV SPACES]{FROM BOLTZMANN TO INCOMPRESSIBLE NAVIER-STOKES IN SOBOLEV SPACES WITH POLYNOMIAL WEIGHT}
\author{M. Briant, S. Merino-Aceituno, C. Mouhot}

\maketitle

\begin{abstract}
We study the Boltzmann equation on the $d$-dimensional torus in a perturbative setting around a global equilibrium under the Navier-Stokes linearisation. We use a recent functional analysis breakthrough to prove that the linear part of the equation generates a $C^0$-semigroup with exponential decay in Lebesgue and Sobolev spaces with polynomial weight, independently on the Knudsen number. Finally we show a Cauchy theory and an exponential decay for the perturbed Boltzmann equation, uniformly in the Knudsen number, in Sobolev spaces with polynomial weight. The polynomial weight is almost optimal and furthermore, this result only requires derivatives in the space variable and allows to connect to solutions to the incompressible Navier-Stokes equations in these spaces.

\end{abstract}

\vspace*{10mm}

\textbf{Keywords:} Boltzmann equation on the Torus; Incompressible Navier-Stokes hydrodynamical limit; Cauchy theory uniform in Knudsen number, exponential rate of convergence towards global equilibrium. 


\smallskip
\textbf{Acknowledgements:}This work was supported by the UK Engineering and Physical Sciences Research Council (EPSRC) grant EP/H023348/1 for the University of Cambridge Centre for Doctoral Training, the Cambridge Centre for Analysis.
\tableofcontents

\section{Introduction} \label{sec:intro}

This article deals with the Boltzmann equation in a perturbative setting as the Knudsen number tends to zero. This equation rules the dynamics of rarefied gas particles moving on the flat torus in dimension $d$, $\T^d$, when the only interactions taken into account are binary collisions. More precisely, the Boltzmann equation describes the time evolution of the distribution $f=f(t,x,v)$ of particles in position $x$ and velocity $v$. A formal derivation of the Boltzmann equation from Newton's laws under the rarefied gas assumption can be found in \cite{Ce}, while \cite{Ce1} presents Lanford's Theorem (see \cite{La} and \cite{GST} for detailed proofs) which rigorously proves the derivation in short times.
\par We denote the Knudsen number by $\eps$ and the Boltzmann equation reads
$$\partial_tf + v\cdot\nabla_xf =\frac{1}{\eps} Q(f,f) \:,\: \mbox{on} \: \T^d \times \R^d,$$
where $Q$ is the Boltzmann collision operator given by
$$Q(f,f) =  \int_{\R^d\times \mathbb{S}^{d-1}}B\left(|v - v_*|,\mbox{cos}\:\theta\right)\left[f'f'_* - ff_*\right]dv_*d\sigma.$$
The Boltzmann kernel operator $B$ encodes the physics of the collision process and $f'$, $f_*$, $f'_*$ and $f$ are the values taken by $f$ at $v'$, $v_*$, $v'_*$ and $v$ respectively, where
$$\left\{ \begin{array}{rl}&\displaystyle{v' = \frac{v+v_*}{2} +  \frac{|v-v_*|}{2}\sigma} \vspace{2mm} \\ \vspace{2mm} &\displaystyle{v' _*= \frac{v+v_*}{2}  -  \frac{|v-v_*|}{2}\sigma} \end{array}\right., \: \mbox{and} \quad \mbox{cos}\:\theta = \left\langle \frac{v-v_*}{\abs{v-v_*}},\sigma\right\rangle .$$.

The Boltzmann collision operator comes from a symmetric bilinear operator $Q(g,h)$ defined by 
$$Q(g,h) =  \frac{1}{2}\int_{\R^d\times \mathbb{S}^{d-1}}B\left(|v - v_*|,\mbox{cos}\:\theta\right)\left[h'g'_* + h'_*g' - hg_*-h_*g\right]dv_*d\sigma.$$

\bigskip
It is well-known (see \cite{Ce}, \cite{Ce1} or \cite{Go} for example) that the global equilibria for the Boltzmann equation are the \textit{Maxwellians}, which are gaussian density functions depending only on the $v$ variable. Without loss of generality we consider only the case of normalized Maxwellians:
$$\mu(v) = \frac{1}{(2\pi)^{\frac{d}{2}}}e^{-\frac{\abs{v}^2}{2}}.$$

\bigskip
In this paper we will assume that the Boltzmann collision kernel is of the following form
\begin{equation}\label{kernelB}
B\left(|v - v_*|,\mbox{cos}\:\theta\right) = \Phi\left(|v - v_*|\right)b\left( \mbox{cos}\:\theta\right),
\end{equation}
with $\Phi$ and $b$ positive functions. This hypothesis is satisfied for all physical model and is more convenient to work with but do not impede the generality of our results.
\par We also restrict ourselves to the case of \textit{hard potential} or \textit{Maxwellian potential} ($\gamma=0$), that is to say there is a constant $C_\Phi >0$ such that
\begin{equation}\label{Phi}
\Phi(z) = C_\Phi z^\gamma, \quad \gamma \in [0,1],
\end{equation}
with a strong form of Grad's \textit{angular cutoff} (see \cite{Gr1}), expressed here by the fact that we assume $b$ to be $C^1$ with the controls from above
\begin{equation}\label{b}
\forall z \in [-1,1], \quad b(z),\: b(z') \leq C_b.
\end{equation}


\subsection{The problem and its motivations} \label{subsec:problem}
The Knudsen number is the inverse of the average number of collisions for each particle per unit of time. Therefore, as reviewed in \cite{Vi}, one can expect a convergence, in some sense, from the Boltzmann model towards the acoustics and the fluids dynamics as the Knudsen number tends to $0$. However, these different models describe physical phenomena that do not evolve at the same timescale and the right rescaling to approximate the incompressible Navier-Stokes equation  (see \cite{BGL}\cite{Go}\cite{Vi}\cite{Sa}) is the following equation

\begin{equation}\label{BE}
\partial_tf_\eps + \frac{1}{\eps}v \cdot \nabla_xf_\eps =\frac{1}{\eps^2} Q(f_\eps,f_\eps) \:,\: \mbox{on} \: \T^d \times \R^d,
\end{equation}
under the linearization $f_\eps(t,x,v) = \mu(v) + \eps  h_\eps(t,x,v)$. This leads to the perturbed Boltzmann equation
\begin{equation}\label{LBE}
\partial_th_\eps + \frac{1}{\eps}v\cdot\nabla_xh_\eps = \frac{1}{\eps^2}\mathcal{L}(h_\eps) + \frac{1}{\eps}Q(h_\eps,h_\eps),
\end{equation}
where we defined
$$\mathcal{L}(h) = 2Q(\mu,h).$$

\bigskip
The hydrodynamical limit of the perturbed equation is the system of equations satisfied by the limit, as $\eps$ tends to $0$, of the hydrodynamical fluctuations that are the following physical observables of $h_\eps$:
\begin{eqnarray*}
\rho_\eps(t,x) &=& \int_{\R^d}h_\eps(t,x,v)\:dv,
\\u_\eps(t,x)  &=& \int_{\R^d} vh_\eps(t,x,v)\:dv,
\\ \theta_\eps(t,x) &=& \frac{1}{d}\int_{\R^d} (\abs{v}^2-d)h_\eps(t,x,v)\:dv.
\end{eqnarray*}
Note that $(\rho_\eps, u_\eps,\theta_\eps)$ are the linearised fluctuations of the mass, momemtum and the thermal energy around the global equilibrium $\mu$.
\par In our perturbative framework, previous studies \cite{BGL}\cite{BU}\cite{Bri1} show that the hydrodynamical limits $\rho$, $u$ and $\theta$ are the weak (in the Leray sense \cite{Le}) solutions of the linearized incompressible Navier-Stokes equations:
\begin{eqnarray}
\partial_t u - \nu \Delta u + u\cdot \nabla u + \nabla p = 0, \nonumber
\\ \nabla \cdot u = 0, \label{NS}
\\ \partial_t \theta - \kappa \Delta \theta + u\cdot \nabla \theta = 0, \nonumber
\end{eqnarray}
where $p$ is the pressure function and $\nu$ and $\kappa$ are constants determined by $L$ (see \cite{BGL} or \cite{Go} Theorem $5$). They also satisfy the Boussineq relation

\begin{equation}\label{Boussineq}
\nabla(\rho + \theta) = 0.
\end{equation}

\bigskip
The aim of the present work is to use a constructive method to obtain existence and exponential decay for solutions to the perturbed Boltzmann equation $\eqref{BE}$, uniformly in the Knudsen number.  One will thus be allowed to extract a converging (at least weakly)  subsequence of $h_\eps$ converging to the incompressible Navier-Stokes equations \cite{BU}\cite{BGL}\cite{Bri1}. Such uniform results have been obtained on the torus in Sobolev spaces with exponential weight $H^s_{x,v}\left(\mu^{-1/2}\right)$ in \cite{Gu4}\cite{Bri1} and the present work improves this strong weight to a polynomial weight without the need of derivatives in the velocity variable.


\subsection{Existing results}\label{subsec:comparisonresults}

The first part of our work is to prove that the linear part of the Boltzmann equation
$$\mathcal{G}_\eps = \frac{1}{\eps^2}\mathcal{L} - \frac{1}{\eps} v\cdot\nabla_x$$
generates a strongly continuous semigroup with an exponential decay in Lebesgue and Sobolev spaces with polynomial weight, namely $1+\abs{v}^k$ for some $k$ large enough.
\par It has been known for long that the linear Boltzmann operator $\mathcal{L}$ is a self-adjoint non positive linear operator in the space $L^2_v\left(\mu^{-1/2}\right)$. Moreover it has a spectral gap $\lambda_0$. This has been proved in \cite{Ca2}\cite{Gr1}\cite{Gr2} with non constructive methods for hard potential with cutoff and in \cite{Bob1}\cite{Bob2} in the Maxwellian case. These results were made constructive in \cite{BM}\cite{Mo1} for more general collision operators. One can easily extend this spectral gap to Sobolev spaces $H^s_v\left(\mu^{-1/2}\right)$ (see for instance \cite{GMM} Section $4.1$). 

\bigskip
The next step is to see if the latter properties about $\mathcal{L}$ in the velocity space can be transposed when one adds the skew-symmetric transport operator $-v\cdot\nabla_x$. The first results were obtained in \cite{Uk} where $\mathcal{G}_1$ was proven to generate a strong continuous semigroup in $L^2_vH^s_x\left(\mu^{-1/2}\right)$ and in $L^\infty_vH^s_x\left(\mu^{-1/2}(1+\abs{v})^k\right)$, for $s$ and $k$ large enough. Then \cite{MN} obtained constructively this result in $H^s_{x,v}\left(\mu^{-1/2}\right)$ using hypocoercivity properties of the Boltzmann linear operator. Finally, a recent breakthrough proving abstract extension of semigroups \cite{GMM} showed that $\mathcal{G}_1$ generates a $C^0$-semigroup in all the Sobolev spaces of the form $W^{\alpha,q}_vW^{\beta,p}_x(m)$, for $m$ being an exponential weight (including maxwellian density if $q=p=2$) or a polynomial weight $(1+\abs{v})^k$, as long as $\alpha \leq \beta$ and $k$ is large enough depending on $q$ (with $k>2$ in the case $q=1$).

\bigskip
The full Boltzmann equation perturbed around a global equilibrium $\mu(v)$ $\eqref{LBE}$ has also been studied in the case $\eps=1$. The associated Cauchy problem has been worked on over the past fifty years, starting with Grad \cite{Gr}, and it has been studied in different spaces, such as  $L^2_vH^s_x\left(\mu^{-1/2}\right)$ spaces \cite{Uk} or $H^s_{x,v}\left(\mu^{-1/2}(1+\abs{v})^k\right)$ \cite{Gu1}\cite{Yu}. The Cauchy theory was then extended to $H^s_{x,v}\left(\mu^{-1/2}\right)$ where an exponential trend to equilibrium has also been obtained. This was obtained using hypocoercivity properties of the linear operator \cite{MN} or nonlinear estimates on fluid and microscopic parts of the equation \cite{Gu4}. Recently, \cite{GMM} proved existence and uniqueness for $\eqref{LBE}$ in more the general spaces $\left(W^{\alpha,1}_v \cap W^{\alpha,q}_v\right)W^{\beta,p}_x\left(1+\abs{v})^k\right)$ for $\alpha \leq \beta$ and $\beta$ and $k$ large enough with explicit thresholds. This result therefore gets rid of the exponential weight needed in the previous studies.
\par All the results presented above hold in the case of the torus. We refer the reader interested in the Cauchy problem, both for the torus and the whole space, to the review \cite{UkYa}.

\bigskip
For physical purposes, these studies for $\eps=1$ are relevant since mere rescalings or changes of physical units changes $\eqref{BE}$ to the case where the Knudsen number equals $1$. However, if one wants to study the hydrodynamical limits of the Boltzmann equation, one needs to obtain explicit dependencies on the Knudsen number. Using hypocoercivity methods \cite{Bri1} gave a constructive uniform approach on the semigroup generated by $\mathcal{G}_\eps$ in $H^s_{x,v}\left(\mu^{-1/2}\right)$ and its exponential decay. The study of the full perturbed Boltzmann equation $\eqref{LBE}$ taking into account the dependencies on the Knudsen number has been obtained \cite{Gu4}\cite{Bri1} in the same spaces $H^s_{x,v}\left(\mu^{-1/2}\right)$, for $s$ large enough. More precisely, for initial data sufficiently close to $\mu$ there exists a unique non-negative solution to $\eqref{BE}$ and it decays exponentially fast towards its equilibrium. The smallness assumption was proven to be independent of the Knudsen number as well as the rate of decay and the methods used in \cite{Bri1} are constructive.


\subsection{Our contributions and strategy}\label{subsec:strategy}

\bigskip
The present work brings two major improvements.
\par In the spirit of \cite{GMM}, we first prove that $\mathcal{G}_\eps$ generates a strong continuous semigroup in Sobolev spaces $W^{\alpha,1}_vW^{\beta,p}_x\left(1+\abs{v})^k\right)$ for $\alpha \leq \beta$ and $\beta$ and $k$ large enough with explicit thresholds. It is done by starting from existing results in $H^s_{x,v}\left(\mu^{-1/2}\right)$ and then decomposing the linear operator $\mathcal{G}_\eps$ into a dissipative part and a regularising part that is then treated in more and more regular spaces up to the space where the semigroup properties have been derived in previous articles. We thus improve the existing result \cite{Bri1}. Our main contribution is an adapted version of the abstract extension theorem developed in \cite{GMM} that takes into account the dependencies on the Knudsen number as well as a careful study of the dissipative and the  regularising parts of the operator $\mathcal{G}_\eps$.

\bigskip
The second contribution of this article is the solution to the Cauchy problem with exponential trend to equilibrium, independently on $\eps$, in spaces 
$$W^{\alpha,1}_vW^{\beta,1}_x\left(1+\abs{v}^{2+0}\right) \quad\mbox{and}\quad W^{\alpha,1}_vH^\beta_x\left(1+\abs{v}^{2+0}\right),$$
for $\beta$ large enough and all $\alpha \leq \beta$. First, this result makes the recent study \cite{GMM} uniform in the Knudsen number. Second, it improves the Cauchy theory developed uniformly in $\eps$ in \cite{Gu4}\cite{Bri1} by dropping the exponential weight and the $v$-derivatives. Moreover, one can notice that the polynomial weight is almost the optimal one for the Boltzmann equation (conservation of mass and energy). 
\par The main issue to obtain uniform results is that the bilinear operator $\eps^{-1}Q$ cannot be treated as a mere perturbation that evolves under the flow of $S_{\mathcal{G}_\eps}$, the semigroup generated by $\mathcal{G_\eps}$, since the latter has an exponential decay of order $O(1)$ that is negligeable compared to $O(\eps^{-1})$ as $\eps$ tends to zero. We develop an analytic point of view about the extension theorem in \cite{GMM} and include the bilinear term. More precisely, we decompose the perturbed equation $\eqref{LBE}$ into a hierarchy of equations taking place in spaces that have more and more regularity up to $H^s_{x,v}\left(\mu^{-1/2}\right)$ where estimates had been derived in \cite{Bri1}. At each step we use the dissipative part of the linear operator to control the remainder term $\eps^{-1}Q$ whereas the regularising part is controlled in spaces with higher regularity.


\subsection{Organization of the article}\label{subsec:organization}

Section $\ref{sec:mainresults}$ first introduces the different notations and definitions we will use throughout the paper and then states the precise theorems we prove in this work. Section $\ref{subsec:resultlinear}$ deals with the semigroup generated by the full linear operator $\eps^{-2}\mathcal{L}-\eps^{-1}v\cdot \nabla_x$ whereas Section $\ref{subsec:resultcauchy}$ is dedicated to the full Boltzmann equation.

\bigskip
The full linear part of the Boltzmann operator is proven to generate a strongly continuous semigroup in Lebesgue and Sobolev spaces with polynomial weight in Section $\ref{sec:lin_LqLp}$.
\par We start with Section $\ref{subsec:strategy_lin}$, a thorough description of our strategy and a version of the extension theorem of \cite{GMM} that takes into account the dependencies in $\eps$.
\par We show in this section that $\eps^{-2}\mathcal{L}-\eps^{-1}v\cdot \nabla_x$ can be decompose into a regularising operator in the velocity variable (Section $\ref{subsec:lin_decompositionL}$) and a dissipative one (Section $\ref{subsec:dissipativity}$).
\par We then combine the last two properties to gain regularity both in space and velocity (Section $\ref{subsec:iteratedconvolution}$) to finally prove the existence and exponential decay of the associated semigroup (Section $\ref{subsec:linproof}$).

\bigskip
The last section, Section $\ref{sec:cauchyexpodecay}$, proves existence, uniqueness and exponential decay of solutions to the perturbed Boltzmann equation $\eqref{LBE}$.
\par Section $\ref{subsec:descriptionpb}$ gives a new point of view on the extension we used to generate the semigroup associated to $\eps^{-2}\mathcal{L}-\eps^{-1}v\cdot \nabla_x$ and how it can be used with the bilinear operator. This strategy is developed through Sections $\ref{subsec:h0}$ and $\ref{subsec:h1}$ and it leads to the proof of the exponential decay towards equilibrium in Section $\ref{subsec:proofapriori}$. 

\section{Main results} \label{sec:mainresults}

\subsection{Notations}\label{subsec:notations}
We gather here the notations we will use throughout this article.

\bigskip
\textbf{Function spaces.} 
We first define the following shorthand notation,
$$\langle \cdot \rangle = \sqrt{1+\abs{\cdot}^2}.$$
\par The convention we choose is to index the space by the name of the concerned variable so we have, for $p$ in $[1,+\infty]$,
$$L^p_{[0,T]} = L^p \left([0,T]\right),\quad L^p_x = L^p\left(\T^d\right), \quad L^p_v = L^p\left(\R^d\right).$$
\par Let $p$ and $q$ be in $[1,+\infty)$, $\alpha$ and $\beta$ in $\N$  and $\func{m}{\R^d}{\R^+}$ a strictly positive measurable function. For any multi-indexes $j = (j_1,\dots,j_d)$ and $l= (l_1,\dots,l_d)$ in $\N^d$ we denote  the $(j,l)^{th}$ partial derivative by
$$\partial^j_l = \partial^l_x\partial^j_v.$$
 We define the space $W^{\alpha,q}_vW^{\beta,p}_x\left(m\right)$ by the norm
$$\norm{f}_{W^{\alpha,q}_vW^{\beta,p}_x\left(m\right)} = \sum\limits_{\overset{\abs{j}\leq\alpha,\abs{l}\leq \beta }{\abs{l}+\abs{j}\leq \max(\alpha,\beta)}} \norm{\left(\partial^j_l f\right) m}_{L^{q}_vL^{p}_x},$$
where we used the Lebesgue norm
$$\norm{g}_{L^{q}_vL^{p}_x} = \left[\int_{\R^d}\left(\int_{\T^d}\abs{f(x,v)}^p\:dx\right)^{q/p}\:dv\right]^{1/q}.$$

\bigskip
\textbf{Linear Boltzmann operator.} First we use a writing convention. The present work aims at extending results known in a small space $E$, namely $H^s_{x,v}\left(\mu^{-1/2}\right)$ with $s$ sufficiently large, into a larger space $\mathcal{E}$, namely Lebesgue and Sobolev spaces with polynomial weight. We will use curly letters for operators in $\mathcal{E}$ and their non-curly equivalent to denote their restriction to $E$. For instance, we will denote
$$\restr{\mathcal{L}}{E} = L.$$
\par The linear Boltzmann operator $L$ has several properties we will use throughout this paper (see \cite{Ce}\cite{Ce1}\cite{Vi2}\cite{GMM} for instance). 
\par $L$ is a closed self-adjoint operator in $L^{2}_v\left(\mu^{-1/2}\right)$ with kernel
$$\mbox{Ker}\left(L\right) = \mbox{Span}\left\{\phi_0(v),\dots,\phi_{d+1}(v)\right\}\mu ,$$
where $\phi_0(v)=1$, for $i=1,\dots,d$ we defined $\phi_i(v)=v_i$ and $\phi_{d+1} = \left(\abs{v}^2-d\right)/\sqrt{2d}$. The family $(\phi_i)_{0\leq i \leq d+1}$ is an orthonormal basis of $\mbox{Ker}\left(L\right)$ in $L^2_v\left(\mu^{-1/2}\right)$ and we denote $\pi_L$ the orthogonal projection onto $\mbox{Ker}\left(L\right)$ in $L^2_v\left(\mu^{-1/2}\right)$ :
\begin{equation}\label{piL}
\pi_L(h) = \sum\limits_{i=0}^{d+1} \left(\int_{\R^d} h(u)\phi_i(u)\:du\right)\phi_i(v)\mu(v),
\end{equation}
and we define $\pi_L^\bot = \mbox{Id} - \pi_L$. We will also denote the full linear Boltzmann operator by
$$\mathcal{G}_\eps = \frac{1}{\eps^2}L - \frac{1}{\eps}v\cdot \nabla_x.$$
For $s$ in $\N$ we will use the convention
$$\restr{\left(\mathcal{G}_\eps\right)}{H^s_{x,v}\left(\mu^{-1/2}\right)} = G_\eps.$$
It has been proven (\cite{Bri1} Proposition $3.1$) that the kernel of $G_\eps$ does not depend on $\eps$ and that its generators in $L^2_{x,v}\left(\mu^{-1/2}\right)$ are the same than the ones of $\mbox{Ker}\left(L\right)$. We therefore have that the orthogonal projection onto $\mbox{Ker}\left(G_\eps\right)$ in $L^2_{x,v}\left(\mu^{-1/2}\right)$ is given by
\begin{equation}\label{piGeps}
\Pi_{G}(h) = \Pi_{G_\eps}(h)= \sum\limits_{i=0}^{d+1} \left(\int_{\T^d\times\R^d} h(x,u)\phi_i(u)\:dxdu\right)\phi_i(v)\mu(v),
\end{equation}
and we define $\Pi_G^\bot = \mbox{Id} - \Pi_G$.
\par Note that for a function $h$ in $L^2_{x,v}\left(\mu^{-1/2}\right)$ we have that
$$\forall (x,v) \in \T^d\times\R^d, \quad \Pi_G(h) (x,v) = \int_{\T^d}\pi_L(h(x_*,\cdot))(v)\:dx_*.$$
\bigskip


\subsection{Results about the full linear part}\label{subsec:resultlinear}

We first deal with $\mathcal{G}_\eps$, the linear part of the perturbed Boltzmann operator. We prove that it generates a strongly continuous semigroup with an exponential decay in Lebesgue and Sobolev spaces with a weight $\langle v \rangle ^k$ as long as $k$ is large enough. The precise statement is the following.

\bigskip
\begin{theorem}\label{theo:linLqLp}
Let $B$ be a Boltzmann collision kernel satisfying $\eqref{kernelB}$-$\eqref{Phi}$-$\eqref{b}$.
\\There exists $0 < \eps_d\leq 1$ such that for all $p,\:q$ in $[1,+\infty]$, all $\alpha,\: \beta$ in $\N$ with $\alpha\leq \beta$ and all $k > k^*_q$, where
\begin{equation}\label{kq*}
k^*_q = \frac{3+\sqrt{49-48/q}}{2} +\gamma \left(1-\frac{1}{q}\right)
\end{equation}
with $\gamma$ defined in $\eqref{Phi}$,
\begin{enumerate}
\item for all $ 0 < \eps\leq \eps_d \:,\: \mathcal{G}_\eps = \eps^{-2}\mathcal{L} - \eps^{-1}v \cdot\nabla_x$ generates a $C^0$-semigroup, $S_{\mathcal{G}_\eps}(t)$, on $W^{\alpha,q}_vW^{\beta,p}_x\left(\langle v \rangle^k\right)$,
\item for all $\tau >0$, there exist $C_{\mathcal{G}}(\tau),\lambda_0 >0$, such that for all $0<  \eps \leq \eps_d$ and for all $h_{in}$ in $W^{\alpha,q}_vW^{\beta,p}_x\left(\langle v \rangle^k\right)$, for all $t\geq \tau$
 $$\norm{S_{\mathcal{G}_\eps}(t)(h_{in}) - \Pi_\mathcal{G}(h_{in})}_{W^{\alpha,q}_vW^{\beta,p}_x\left(\langle v \rangle^k\right)} \leq C_{\mathcal{G}}(\tau)e^{-\lambda_0 t} \norm{h_{in} - \Pi_\mathcal{G}(h_{in})}_{W^{\alpha,q}_vW^{\beta,p}_x\left(\langle v \rangle^k\right)},$$
\end{enumerate}
where $\Pi_\mathcal{G}$ is the spectral projector onto $\emph{\mbox{Ker}}\left(\mathcal{G}_\eps\right)$ which is given, for all $\eps$, by
\begin{equation}\label{projectionLqLp}
\Pi_\mathcal{G}(g) = \sum\limits_{i=0}^{d+1} \left(\int_{\T^d\times\R^d}g\phi_i \:dxdv\right)\phi_i \mu.
\end{equation}
The constants $\eps_d$, $C_\mathcal{G}(\tau)$ and $\lambda_0$ are constructive and only depends on $d$, $p$, $q$, $k$, $\alpha$, $\beta$ and the kernel of the Boltzmann operator.
\end{theorem}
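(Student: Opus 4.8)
The plan is to follow the enlargement/extension philosophy of \cite{GMM}, but with explicit tracking of the $\eps$-dependence, starting from the known exponential decay of $S_{G_\eps}$ on the small space $E = H^s_{x,v}(\mu^{-1/2})$ established constructively in \cite{Bri1}. The first step is to split the linear operator as $\mathcal{G}_\eps = \mathcal{A}_\delta + \mathcal{B}_{\eps,\delta}$, where $\mathcal{B}_{\eps,\delta} = \eps^{-2}(\mathcal{B}_\delta^{(1)}) - \eps^{-1} v\cdot\nabla_x$ is the hypodissipative part and $\mathcal{A}_\delta$ is a velocity-regularising operator that is bounded from $L^q_v$-type spaces into $H^s_{x,v}(\mu^{-1/2})$-type spaces after a truncation in $v$ and in the collision kernel. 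Concretely, one writes $\mathcal{L} = \mathcal{A}_\delta + \mathcal{B}_\delta$ following \cite{GMM}, where $\mathcal{A}_\delta$ is obtained by truncating $\Phi$, $b$ and the velocities to a compact set depending on $\delta$, and one checks (i) that $\mathcal{A}_\delta$ maps $W^{\alpha,q}_v W^{\beta,p}_x(\langle v\rangle^k)$ boundedly into $H^{s}_{x,v}(\mu^{-1/2})$ for any $s$ — this is where the $v$-regularisation is gained since $\mathcal{A}_\delta$ has a smooth compactly supported kernel in $v$; (ii) that $\mathcal{B}_{\eps,\delta}$ is hypodissipative on the chain of intermediate spaces $\mathcal{E}_0 \supset \mathcal{E}_1 \supset \cdots \supset E$ with a dissipativity constant that, crucially, behaves like $-\lambda_\delta/\eps^2 + \text{(lower order in }\eps^{-1})$ so that for $\delta$ small and $\eps \le \eps_d$ one gets a uniform (in $\eps$) spectral gap $\lambda$ on each $\mathcal{E}_j$.

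The second step is the dissipativity estimate for $\mathcal{B}_{\eps,\delta}$ on $W^{\alpha,1}_vW^{\beta,p}_x(\langle v\rangle^k)$ and on the interpolated spaces. The point is that $\mathcal{B}_\delta = \mathcal{L} - \mathcal{A}_\delta$ acts essentially like multiplication by the collision frequency $\nu(v) \sim \langle v\rangle^\gamma$ plus a small remainder, so $\int (\mathcal{B}_\delta f) (\mathrm{sgn} f)\langle v\rangle^{kq} \lesssim -\nu_0\|f\langle v\rangle^{k+\gamma/q}\|^q + C_\delta\|f\mathbf{1}_{|v|\le R_\delta}\|^q$; dividing by $\eps^2$ and using that $-\eps^{-1}v\cdot\nabla_x$ is skew-symmetric (contributes nothing to the $L^q_v L^p_x$ norm after integrating in $x$ with the translation-invariant weight), one obtains the desired dissipativity $\langle \mathcal{B}_{\eps,\delta} f, f\rangle_{*} \le (-a/\eps^2 + b)\|f\|^q$ for appropriate constants. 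The constraint $k > k^*_q$ with $k^*_q$ given by $\eqref{kq*}$ is exactly the threshold that makes the moment/interpolation computation close in the $L^q_v$ setting: one needs $k$ large enough that the loss of weight in the regularising remainder is compensated by the gain $\langle v\rangle^\gamma$ from the collision frequency, and the quadratic in $1/q$ under the square root comes from optimising a Cauchy--Schwarz / Hölder split when transferring estimates between $L^q_v(\langle v\rangle^k)$ and $L^2_v(\mu^{-1/2})$.

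The third step is the iterated Duhamel / factorisation argument. Writing $S_{\mathcal{G}_\eps} = S_{\mathcal{B}_{\eps,\delta}} + S_{\mathcal{B}_{\eps,\delta}} * (\mathcal{A}_\delta S_{\mathcal{G}_\eps})$ and iterating $n$ times, one expresses $S_{\mathcal{G}_\eps}$ as a sum of convolution products $S_{\mathcal{B}_{\eps,\delta}} * (\mathcal{A}_\delta S_{\mathcal{B}_{\eps,\delta}})^{*\ell} * (\mathcal{A}_\delta S_{G_\eps})$ for $\ell < n$ plus a remainder; each factor of $\mathcal{A}_\delta$ moves one up the regularity chain, and after finitely many steps one lands in $E$ where $S_{G_\eps}$ decays like $e^{-\lambda_0 t}$ (up to $\Pi_{G}$) by \cite{Bri1}. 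The $\eps$-uniformity is preserved because the convolution of functions decaying like $e^{-\lambda t/\eps^2}$ or better, convolved with something decaying like $e^{-\lambda_0 t}$, still decays like $e^{-\lambda_0 t}$ with a constant that depends on $\tau$ (to absorb the short-time singularity from the unbounded factors $\eps^{-1}$) but not on $\eps$; this is the analytic heart and needs a version of the \cite{GMM} extension lemma stated with quantitative $\eps$-dependence, as announced in Section~\ref{subsec:strategy_lin}. One then identifies $\mathrm{Ker}(\mathcal{G}_\eps)$ with $\mathrm{Span}\{\phi_i\mu\}$ and $\Pi_\mathcal{G}$ with $\eqref{projectionLqLp}$ by density of $E$ and the fact, recalled after $\eqref{piGeps}$, that the kernel does not depend on $\eps$; the case $q=\infty$ or $p=\infty$ follows either by the same arguments (the $L^\infty_v$ dissipativity being even easier) or by a further interpolation/duality step.

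The main obstacle I expect is the $\eps$-uniform control in the iterated Duhamel formula: the factors $\mathcal{A}_\delta$ and the transport term carry powers of $\eps^{-1}$, and naively each convolution could produce a factor $\eps^{-1}$ or a time singularity, so one must exploit the $e^{-\lambda t/\eps^2}$ decay of $S_{\mathcal{B}_{\eps,\delta}}$ together with the $\tau>0$ lower bound on time to absorb them — showing that the constants $C_\mathcal{G}(\tau)$ stay bounded as $\eps\to 0$ requires a careful bookkeeping of how many $\eps^{-1}$'s appear against how much exponential decay is available, and getting the signs and powers to match is the delicate quantitative point that distinguishes this from the $\eps=1$ result of \cite{GMM}. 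A secondary technical difficulty is verifying the dissipativity and boundedness estimates for $\mathcal{A}_\delta, \mathcal{B}_\delta$ with polynomial weights and general $q$ with the sharp threshold $k^*_q$, which is a somewhat involved moment computation but follows the template already present in \cite{GMM} Section~4.
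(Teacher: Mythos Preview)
Your overall strategy is correct and matches the paper's: start from the known decay in $E = H^s_{x,v}(\mu^{-1/2})$, decompose $\mathcal{G}_\eps = \mathcal{A}^{(\delta)}_\eps + \mathcal{B}^{(\delta)}_\eps$, establish hypodissipativity of $\mathcal{B}^{(\delta)}_\eps$ with rate $\lambda_0/\eps^2$ on the larger spaces, and climb a chain of intermediate spaces via iterated Duhamel, applying an $\eps$-quantified version of the extension theorem of \cite{GMM}.

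However, there is a genuine gap in your step (i): you claim that $\mathcal{A}_\delta$ maps $W^{\alpha,q}_v W^{\beta,p}_x(\langle v\rangle^k)$ boundedly into $H^{s}_{x,v}(\mu^{-1/2})$ for any $s$. This is false. The operator $\mathcal{A}^{(\delta)}_\eps$ acts only in the velocity variable (it has a smooth compactly supported kernel $k^{(\delta)}(v,v_*)$), so it regularises in $v$ but leaves the $x$-regularity untouched: it maps $L^q_v L^p_x$ into $W^{m,q}_v L^p_x$ for any $m$, but not into $W^{m,q}_v W^{m,p}_x$. In the extreme Lebesgue case $\alpha=\beta=0$ you must go from $L^q_v L^p_x(\langle v\rangle^k)$ all the way to $H^s_{x,v}(\mu^{-1/2})$, which requires gaining $s$ derivatives in $x$ as well as in $v$, and $\mathcal{A}^{(\delta)}_\eps$ alone cannot provide any $x$-derivative.

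The paper closes this gap in Section~\ref{subsec:iteratedconvolution}: the point is that the iterated convolution $T_2 = (\mathcal{A}^{(\delta)}_\eps S_{\mathcal{B}^{(\delta)}_\eps})^{(*2)}$, and not $\mathcal{A}^{(\delta)}_\eps$ by itself, gains half a derivative in \emph{both} $x$ and $v$ at each step. The $x$-regularity comes from the transport part of $S_{\mathcal{B}^{(\delta)}_\eps}$ mixing $x$ and $v$, and is extracted by two distinct mechanisms depending on the integrability: in $L^1$-based spaces one uses the commutator identity $\eps^{-1}t\,\nabla_x T_1(t)h = \mathcal{A}^{(\delta)}_\eps(D_t f) - (\nabla_v\mathcal{A}^{(\delta)}_\eps)f$ with $D_t = \eps^{-1}t\nabla_x + \nabla_v$ (Proposition~\ref{prop:T1T2L1}), and in $L^2$-based spaces one uses velocity averaging lemmas (Proposition~\ref{prop:T2L2}). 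The intermediate chain in Section~\ref{subsec:linproof} is accordingly $\mathcal{E}_j = W^{(j-2)/2,1}_{x,v}(\langle v\rangle^k)$ followed by $H^{(j-m-1)/2}_{x,v}(\langle v\rangle^k)$, gaining half an $(x,v)$-derivative per step rather than jumping directly to $E$. Your identification of the main obstacle --- the $\eps$-uniform bookkeeping in the convolutions --- is accurate, but without this hypoelliptic gain-of-$x$-regularity mechanism the chain cannot be built and the argument does not close.
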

\bigskip

We refer to \cite{Ka} and \cite{GMM} Section $2$ for definitions and properties of spectral projectors.

\bigskip
\begin{remark}\label{rem:linLqLp}
We can make a couple of remarks about this theorem.
\begin{enumerate}
\item It has been proven in \cite{Bri1} Section $3$, that in $H^1_{x,v}(\mu^{-1/2})$, $\emph{\mbox{Ker}}\left(G_\eps\right)$ does not depend on $\eps$ if $\eps$ is positive and we therefore can define $\Pi_G = \Pi_{G_\eps}$. During the proof of Theorem $\ref{theo:linLqLp}$ we will show that $\restr{\left(\Pi_{\mathcal{G}_\eps}\right)}{H^s_{x,v}\left(\mu^{-1/2}\right)} = \Pi_{G_\eps}$ and thus $\Pi_\mathcal{G}$ is well-defined and is independent of $\eps$ and given by $\eqref{piGeps}$.
\item As noticed in \cite{GMM}, the rate of decay $\lambda_0$ can be taken equal to the spectral gap of $\restr{\mathcal{L}}{H^s_{x,v}\left(\mu^{-1/2}\right)}$ (see \cite{Bri1}), for $s$ as large as wanted, when $k$ is big enough (and we obtained a constructive threshold).
\item Finally, we emphasize that in the case $q=1$, the result holds for all $k>2$. This is almost the minimal regularity $L^2_v\left(1+\abs{v}^2\right)$ for the Boltzmann equation, that is solutions with bounded mass and energy.
\end{enumerate}
\end{remark}
\bigskip


\subsection{Existence, uniqueness and trend to equilibrium}\label{subsec:resultcauchy}

A physically relevant requirement for solutions to the Boltzmann equation are that their mass, momentum and energy are preserved with time. This is also an \textit{a priori} property of the Boltzmann equation on the torus (see \cite{Vi2} Chapter $1$ Section $2$ for instance) which reads
$$\forall t \geq 0, \quad\int_{\T^d\times\R^d} \left(\begin{array} {c} 1 \\ v \\ \abs{v}^2 \end{array}\right) f_\eps(t,x,v) \:dxdv = \int_{\T^d\times\R^d} \left(\begin{array} {c} 1 \\ v \\ \abs{v}^2 \end{array}\right) f_\eps(0,x,v) \:dxdv.$$

\par If one expects trend to the equilibrium $\mu(v)$ for the solutions $f_\eps = \mu + \eps h_\eps$ of the Boltzmann equation $\eqref{BE}$ then it must be that
$$\forall t \geq 0, \quad \int_{\T^d\times\R^d} \left(\begin{array} {c} 1 \\ v \\ \abs{v}^2 \end{array}\right) h_\eps(t,x,v) \:dxdv = 0,$$
that is $\Pi_{\mathcal{G}_\eps}(h_\eps(t,\cdot,\cdot)) = 0$ for all $t$, which is a property that is indeed preserved along time for solution to the perturbed Boltzmann equation $\eqref{LBE}$, see \cite{Bri1} for instance.
\par We hence state the following theorem answering the Cauchy problem and the exponential convergence towards the equilibrium $\mu$.

\bigskip
\begin{theorem}\label{theo:cauchyexpodecay}
Let $B$ be a Boltzmann collision kernel satisfying $\eqref{kernelB}$-$\eqref{Phi}$-$\eqref{b}$ and let $p=1$ or $p=2$.

\noindent There exists $0<\eps_d\leq 1$ and $\beta_0$ in $\N$ such that 
\begin{itemize}
\item for all $\alpha$, $\beta$ in $\N$ such that $\beta\geq \beta_0$ and $\alpha \leq \beta$ and for all $k>2$ define
$$\mathcal{E}^{p} = W^{\alpha,1}_vW^{\beta,p}_x\left(\langle v \rangle^k\right),$$
\item for any $\lambda_0'$ in $(0,\lambda_0)$ ($\lambda_0$ defined in Theorem $\ref{theo:linLqLp}$) there exist $C_{\alpha,\beta},\:\eta_{\alpha,\beta}> 0$ such that for any $0<\eps \leq \eps_d$, for any distribution $0\leq f_{in} = \mu + \eps  h_{in}$:
\end{itemize}

\noindent If
\begin{enumerate}
\item[(i)] $h_{in}$ is in $\emph{\mbox{Ker}}(\mathcal{G}_\eps)^\bot$ in $\mathcal{E}^p$,
\item[(ii)] $\norm{h_{in}}_{\mathcal{E}^p} \leq \eta_{\alpha,\beta},$
\end{enumerate}
\bigskip
\noindent then there exists a unique global solution $f_\eps = f_\eps(t,x,v)$ to $\eqref{BE}$ in $\mathcal{E}^p$ which, moreover, satisfies $f_\eps = M + \eps h_\eps \geq 0$ with:
\begin{itemize}
\item $h_\eps$ belongs to $\emph{\mbox{Ker}}(\mathcal{G}_\eps)^\bot$ for all times,
\item $$\norm{h_\eps}_{\mathcal{E}^p} \leq C_{\alpha,\beta}\norm{h_{in}}_{\mathcal{E}^p} e^{-\lambda_0' t}.$$
\end{itemize}

\bigskip
\noindent The constants $C_{\alpha,\beta}$ and $\eta_{\alpha,\beta}$ are constructive and depends only on $\alpha$, $\beta$, $k$, $d$, $\lambda_0'$ and the kernel of the Boltzmann operator.
\end{theorem}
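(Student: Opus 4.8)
The plan is to combine the linear semigroup estimate of Theorem \ref{theo:linLqLp} with an iterated (hierarchical) decomposition of the perturbed equation \eqref{LBE} mirroring the splitting $\mathcal{G}_\eps = \mathcal{A} + \mathcal{B}$ used in the linear part, where $\mathcal{A}$ is a velocity-regularising bounded operator and $\mathcal{B}$ is dissipative (hypodissipative uniformly in $\eps$). Concretely, write $h_\eps = h_\eps^0 + h_\eps^1$, where $h_\eps^0$ lives in the large space $\mathcal{E}^p = W^{\alpha,1}_v W^{\beta,p}_x(\langle v\rangle^k)$ and solves the equation driven by the dissipative part $\mathcal{B}_\eps = \eps^{-2}\mathcal{B} - \eps^{-1} v\cdot\nabla_x$ together with the full bilinear source $\eps^{-1}Q(h_\eps,h_\eps)$, with zero initial data; and $h_\eps^1$ carries the initial datum $h_{in}$ and absorbs the regularising contribution $\eps^{-2}\mathcal{A} h_\eps^0$, so that $h_\eps^1$ solves the $\mathcal{G}_\eps$-equation with source $\eps^{-2}\mathcal{A} h_\eps^0$ in a smaller, more regular space — ultimately $H^s_{x,v}(\mu^{-1/2})$, where the uniform-in-$\eps$ estimates of \cite{Bri1} apply. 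One repeats this splitting finitely many times (the number of steps depending on $\beta_0$ and on the gap between $\langle v\rangle^k$-polynomial-weight Sobolev regularity and the Gaussian-weight regularity of $E$), each time using $\mathcal{B}_\eps$-dissipativity to control the bilinear remainder in the current space and pushing the regularised part up one level.

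The key steps, in order, would be: (1) Record the decomposition $\mathcal{G}_\eps = \mathcal{A} + \mathcal{B}_\eps$ from Section \ref{subsec:lin_decompositionL}–\ref{subsec:dissipativity}, with the uniform-in-$\eps$ hypodissipativity of $\mathcal{B}_\eps$ on $\mathcal{E}^p$ and the bounded regularising property of $\mathcal{A}$; (2) establish the crucial bilinear estimates for $Q(\cdot,\cdot)$ in the polynomial-weighted spaces $\mathcal{E}^p$, of the form $\|\mathcal{B}_\eps^{-1}$-type integrals of $Q(f,g)\|_{\mathcal{E}^p} \lesssim \|f\|_{\mathcal{E}^p}\|g\|_{\mathcal{E}^p}$, exploiting that $Q$ gains weight/integrability and that $k>2$ suffices when the $v$-integrability exponent is $1$ (this is where the threshold $k>2$ and the choice $p\in\{1,2\}$ enter, and where one needs $\beta$ large enough for $W^{\beta,p}_x$ to be an algebra for the $x$-dependence); (3) set up the fixed-point/energy scheme for $(h_\eps^0,h_\eps^1)$: a priori estimates giving $\frac{d}{dt}\|h_\eps^0\|_{\mathcal{E}^p}^2 \le -c\eps^{-2}\|h_\eps^0\|^2 + (\text{dissipation from }\mathcal{B}_\eps) + \eps^{-1}\|Q(h_\eps,h_\eps)\|\,\|h_\eps^0\| + \eps^{-2}\|\mathcal{A} h_\eps^1\|\,\|h_\eps^0\|$, and for $h_\eps^1$ the Duhamel formula against $S_{\mathcal{G}_\eps}$ in the smaller space plus the exponential decay of Theorem \ref{theo:linLqLp}; (4) close the estimates under the smallness $\|h_{in}\|_{\mathcal{E}^p}\le\eta_{\alpha,\beta}$ to get global existence, uniqueness (by a contraction on the difference of two solutions, same scheme), and the decay $\|h_\eps\|_{\mathcal{E}^p}\le C_{\alpha,\beta}\|h_{in}\|_{\mathcal{E}^p}e^{-\lambda_0' t}$ for any $\lambda_0'<\lambda_0$; (5) check nonnegativity $f_\eps = \mu+\eps h_\eps\ge 0$ by the standard argument (approximation scheme / maximum-principle structure of the gain-loss form of $Q$, as in \cite{Gu4}\cite{Bri1}) and that $\Pi_{\mathcal{G}_\eps}(h_\eps(t))=0$ is propagated from hypothesis (i) by the conservation laws.

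The main obstacle will be step (3)–(4): making the hierarchy of a priori estimates close \emph{uniformly in} $\eps$. The difficulty, as the authors flag in Section \ref{subsec:strategy}, is that the favourable exponential decay of $S_{\mathcal{G}_\eps}$ is only $O(1)$ while the bilinear term is amplified by $\eps^{-1}$ and the coupling term $\eps^{-2}\mathcal{A} h_\eps^0$ by $\eps^{-2}$; one cannot treat $\eps^{-1}Q$ as a perturbation of the semigroup flow. The resolution is to never let the $\eps^{-1}Q$ term be absorbed by the $O(1)$ decay, but rather by the $O(\eps^{-2})$ dissipativity of $\mathcal{B}_\eps$ at each level: schematically, $\eps^{-1}\|Q(h,h)\|\|h^0\| \le \eps^{-1}\cdot\eps\,\delta\cdot\eps^{-2}\|h^0\|^2 + \cdots$ using $\|h\|\lesssim\eta$, so the $\eps$ powers balance provided $\eta$ is chosen small independently of $\eps$. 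One must simultaneously verify that the regularising transfer terms $\eps^{-2}\mathcal{A} h^{j}$ passed up the hierarchy are controlled, at the top level, by the already-known uniform-in-$\eps$ estimates of \cite{Bri1} in $H^s_{x,v}(\mu^{-1/2})$ — the bookkeeping of the $\eps$-powers through the finitely many intermediate spaces, and the verification that no step loses uniformity, is the genuinely delicate part of the argument.
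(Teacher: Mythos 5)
Your overall strategy --- split $h_\eps$ into a piece governed by the dissipative part $\mathcal{B}^{(\delta)}_\eps$ in the large space and a piece governed by $\mathcal{G}_\eps$ in a more regular space, absorbing the $\eps^{-1}Q$ terms with the $O(\eps^{-2})$ dissipativity rather than with the $O(1)$ semigroup decay --- is the right one and is what the paper does. But your decomposition is set up backwards in a way that breaks the argument. You assign the initial datum $h_{in}$ to the regular-space component $h^1_\eps$ and zero data to the large-space component $h^0_\eps$. Since $h_{in}$ is only assumed to lie in $\mathcal{E}^p = W^{\alpha,1}_vW^{\beta,p}_x\left(\langle v\rangle^k\right)$ and not in $E=H^\beta_{x,v}\left(\mu^{-1/2}\right)$ (the Gaussian weight is far stronger than the polynomial one), the Cauchy problem you prescribe for $h^1_\eps$ in the regular space is meaningless: only the large-space piece can carry $h_{in}$, while the regular piece must start from zero and be fed through the regularising operator $\mathcal{A}^{(\delta)}_\eps$. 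The paper's system $\eqref{decompositionh0}$--$\eqref{decompositionh1}$ has exactly the opposite assignment from yours: $h^0$ in $\mathcal{E}^p$ with $h^0_{t=0}=h_{in}$, and $h^1$ in $E$ with $h^1_{t=0}=0$.

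The second problem is that you place the \emph{full} bilinear source $\eps^{-1}Q(h_\eps,h_\eps)$ in the $\mathcal{B}^{(\delta)}_\eps$-equation. The piece $\eps^{-1}Q(h^1,h^1)$ carries no factor of $h^0$ and therefore cannot be absorbed by the dissipation term $-\lambda_0\eps^{-2}\norm{h^0}_{\mathcal{E}^p_\nu}$; it can only enter as a source decaying at the slow $O(1)$ rate of $h^1$, so $h^0$ itself then decays only at an $O(1)$ rate. Consequently the coupling term $\mathcal{A}^{(\delta)}_\eps(h^{0})$, of size $\eps^{-2}\norm{h^0}$, injected into the regular equation and integrated in time against a semigroup with only $O(1)$ decay, leaves an uncompensated negative power of $\eps$ and the scheme does not close uniformly. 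The paper instead keeps $Q(h^0,h^0)+2Q(h^0,h^1)$ (every term containing a factor of $h^0$, hence absorbable by dissipativity, see $\eqref{h0}$) with the $h^0$-equation and sends only $Q(h^1,h^1)$ to the $h^1$-equation; this is what makes $h^0$ decay at the fast rate $e^{-\lambda_0' t/\eps^2}$ (Proposition $\ref{prop:h0}$), so that the $\eps^{-2}$ in $\mathcal{A}^{(\delta)}_\eps(h^0)$ is compensated by the $\eps^{2}$ gained when integrating $e^{-ct/\eps^2}$ in time (the change of variables $u=\eps^{-2}s$ in the proof of Proposition $\ref{prop:h1}$). Finally, the remaining term $\eps^{-1}Q(h^1,h^1)$ in $E$ is not handled by $\mathcal{B}_\eps$-dissipativity at all, but by the hypocoercivity estimate of \cite{Bri1} (Theorem $\ref{theo:adaptedBri1}$), which crucially exploits the microscopic structure $\pi_L\left(Q(\cdot,\cdot)\right)=0$ (property (H)); your proposal does not identify this ingredient, and without it neither the $O(1)$ decay of $S_{G_\eps}$ nor any dissipativity available in $E$ can beat the $\eps^{-1}$ prefactor.
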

\bigskip

\section{The linear part: a $C^0$-semigroup in spaces with polynomial weight, proof of Theorem $\ref{theo:linLqLp}$}\label{sec:lin_LqLp}

In this section we focus on the linear part of the perturbed Boltzmann equation in $W^{\alpha,q}_vW^{\beta,p}_x\left(\langle v \rangle^k\right)$. We thus consider the following equation:

\begin{equation}\label{LBE1}
\partial_t h = \mathcal{G}_\eps (h).
\end{equation}


\subsection{Strategy of the proof}\label{subsec:strategylinLpLq}\label{subsec:strategy_lin}

If we denote $\mathcal{E} = W^{\alpha,q}_vW^{\beta,p}_x\left(\langle v \rangle^k\right)$ and $E = H^s_{x,v}\left(\mu^{-1/2}\right)$ we have that $E \subset \mathcal{E}$, dense with continuous embedding for $s$ large enough. \cite{Bri1} Theorem $2.1$ (with the norm of Theorem $2.4$) states that $G_\eps = \restr{\left(\mathcal{G}_\eps\right)}{E}$ generates a strongly continuous semigroup in $E$ with exponential decay. Theorem $\ref{theo:linLqLp}$ can therefore be understood as the possibility to extend properties of $G_\eps$ in a small space $E$ to  $\mathcal{G}_\eps$ in a larger space $\mathcal{E}$.
\par This issue of extending spectral gap properties as well as semigroup properties has been first tackled by Mouhot to obtain constructive rates of convergence to equilibrium for the homogeneous Boltzmann equation \cite{Mo3}. Recently, Gualdani, Mischler and Mouhot \cite{GMM} proposed a more abstract approach that allows to deal with the full linear operator. In their work, they proved that if some conditions on $\mathcal{G}_\eps$ and $G_\eps$ are satisfied then we can pass on some semigroup properties from $E$ to $\mathcal{E}$. The main argument of the proof of Theorem $\ref{theo:linLqLp}$ is to show that we can use their result in our setting, independently of $\eps$.
\par To be more precise, we give below a modified version of their main functional analysis theorem which is combination of Theorem $2.13$ and Lemma $2.17$ where we added dependencies on $\eps$.
\par We refer to \cite{GMM} Section $2$ for the definition of hypodissipativity (roughly speaking it is a dissipative property in a different norm on a Banach space) and the definition of the convolution of two semigroups of operators (denoted by the symbol $(*)$). In the sequel we will use $\mathscr{C}(E)$ for the set of closed operators on $E$ and $\mathscr{B}(E)$ for the set of bounded operators on $E$. For any operator $G$ in $\mathscr{C}\left(E\right)$ we denote $\mbox{R}(G)$ its range and $\Sigma (G)$ its spectrum.

\begin{theorem}[Modified extension theorem from \cite{GMM}]\label{theo:extension}
Let $\eps$ be a parameter such that $0<\eps\leq1$.
\\Let $E,\mathcal{E}$ be two Banach spaces with $E\subset\mathcal{E}$ dense with continuous embedding, and consider $G_\eps$ in $\mathscr{C}\left(E\right)$, $\mathcal{G}_\eps$ in $\mathscr{C}\left(\mathcal{E}\right)$ with $\restr{\left(\mathcal{G}_\eps\right)}{E} = G_\eps$ and $a>0$.
\par We assume the following
\begin{itemize}
\item[\emph{(A1)}] $G_\eps$ generates a semigroup $S_{G_\eps}$ on $E$, $G_\eps +a$ is hypodissipative on $\mbox{R}\left(\emph{\mbox{Id}} - \Pi_{G_\eps,a}\right)$ and 
$$\Sigma \left(G_\eps\right) \cap \{z\in \C,\: \emph{\mbox{Re}}(z)>-a\} = \{0\} \quad\mbox{is a discrete eigenvalue}.$$

\item[\emph{(A2)}] There exists $\mathcal{A}_\eps, \mathcal{B}_\eps$ in $\mathscr{C}\left(\mathcal{E}\right)$ such that $\mathcal{G}_\eps = \mathcal{A}_\eps + \mathcal{B}_\eps$ (with corresponding restrictions $A_\eps,B_\eps$ on $E$) and there exist some ``intermediate spaces'' (not necessarily ordered)
$$E = \mathcal{E}_J,\:\mathcal{E}_{J-1},\dots,\:\mathcal{E}_2,\:\mathcal{E}_1 = \mathcal{E}$$
such that, still denoting $\mathcal{B}_\eps:=\restr{\left(\mathcal{B}_\eps\right)}{\mathcal{E}_j}$ and $\mathcal{A}_\eps:=\restr{\left(\mathcal{A}_\eps\right)}{\mathcal{E}_j}$
   \begin{itemize}
   \item[(i)] $\left(\mathcal{B}_\eps + a/\eps^2 \right)$ is hypodissipative on $\mathcal{E}_j$;
   \item[(ii)] $\mathcal{A}_\eps \in \mathscr{B}\left(\mathcal{E}_j\right)$ with $\norm{\mathcal{A}_\eps}_{\mathscr{B}\left(\mathcal{E}_j\right)} \leq C_A/\eps^2$;
   \item[(iii)] there are some constants $l_0,l_1 \in \N^*$, $C\geq 1$, $K \in \R$, $\alpha \in [0,1)$ such that
   $$\forall t \geq 0, \quad \norm{T_{l_0}(t)}_{\mathscr{B}\left(\mathcal{E}_j,\mathcal{E}_{j+1}\right)} \leq C \frac{e^{Kt/\eps^2}}{\eps^{l_1}t^\alpha},$$
   for $1\leq j\leq J-1$, with the notation $T_l:=\left(\mathcal{A}_\eps S_{\mathcal{B}_\eps}\right)^{(*l)}$.
   \end{itemize}
\end{itemize}

Then $\mathcal{G}_\eps$ is hypodissipative in $\mathcal{E}$ and for all $a'<a$ there exists $n=n(a')\geq 1$ and some positive constants $C_{a'}$ and $C'_{a'}$ such that 

\begin{equation}\label{ineqTn}
\norm{T_{n}(t)}_{\mathscr{B}\left(\mathcal{E}\right)} \leq \frac{C_{a'}}{\eps^{nl_1/l_0}} e^{-a't/\eps^2};
\end{equation}
\begin{equation}\label{extensionsemigroup}
S_{\mathcal{G}_\eps}(t) = S_{G_\eps}(t)\Pi_{\mathcal{G}} + \sum\limits_{l=0}^{n-1}(-1)^l\left(\emph{\mbox{Id}}-\Pi_{\mathcal{G}} \right)S_{\mathcal{B}_\eps}*T_l(t) + (-1)^n\left[\left(\emph{\mbox{Id}}-\Pi_{\mathcal{G}} \right)S_{G_\eps}\right]*T_n(t);
\end{equation}
\begin{equation}\label{ineqextension}
\Big|\Big|S_{\mathcal{G}_\eps}(t) - S_{G_\eps}(t)\Pi_{\mathcal{G}}-(-1)^n\left[\left(\emph{\mbox{Id}}-\Pi_{\mathcal{G}} \right)S_{G_\eps}\right]*T_n(t)\Big|\Big|_{\mathscr{B}\left(\mathcal{E}\right)} \leq C'_{a'} \frac{t^n}{\eps^{n(2+l_1/l_0)}} e^{-a't/\eps^2},
\end{equation}
where $\Pi_{\mathcal{G}}$ has been defined in $\eqref{projectionLqLp}$.
\end{theorem}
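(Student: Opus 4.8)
The plan is to adapt the proof of the Gualdani--Mischler--Mouhot extension theorem (their Theorem~2.13 together with Lemma~2.17) by carefully tracking the $\eps$-dependence through every estimate, since the abstract structure of the argument is unchanged and only the quantitative bounds are rescaled. The starting point is the factorization $\mathcal{G}_\eps = \mathcal{A}_\eps + \mathcal{B}_\eps$ from (A2) and the Duhamel-type identity for semigroups written in terms of iterated convolutions: formally, $S_{\mathcal{G}_\eps} = \sum_{l\ge 0}(-1)^l S_{\mathcal{B}_\eps}*(\mathcal{A}_\eps S_{\mathcal{B}_\eps})^{(*l)} = \sum_{l\ge0}(-1)^l S_{\mathcal{B}_\eps}*T_l$, obtained by iterating $S_{\mathcal{G}_\eps} = S_{\mathcal{B}_\eps} + S_{\mathcal{B}_\eps}*\mathcal{A}_\eps S_{\mathcal{G}_\eps}$ (equivalently, Duhamel against $\mathcal{B}_\eps$ with source $\mathcal{A}_\eps$). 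One truncates this series at order $n$ and replaces the tail $S_{\mathcal{G}_\eps}*T_n$ by $S_{G_\eps}*T_n$ up to the projector part: the point is that $T_n(t)$ maps $\mathcal{E}$ into $E$ (after enough applications of the regularization estimate (A2)(iii)) so the tail of the series lives in the small space $E$ where (A1) gives us a semigroup with exponential decay. This is exactly the structure of \eqref{extensionsemigroup}.

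Next I would establish the key quantitative ingredients, each inheriting an explicit power of $\eps$. First, from (A2)(i)--(ii) one gets $\norm{S_{\mathcal{B}_\eps}(t)}_{\mathscr{B}(\mathcal{E}_j)} \le C e^{-(a/\eps^2)t}$ and hence, by composing with the bounded operator $\mathcal{A}_\eps$ of norm $\le C_A/\eps^2$, a crude bound on $T_1 = \mathcal{A}_\eps S_{\mathcal{B}_\eps}$ on each $\mathcal{E}_j$; combined with the regularizing estimate (A2)(iii), $\norm{T_{l_0}(t)}_{\mathscr{B}(\mathcal{E}_j,\mathcal{E}_{j+1})}\le C\eps^{-l_1}t^{-\alpha}e^{Kt/\eps^2}$, an induction on the number $J$ of intermediate spaces shows that a suitable large iterate $T_m$ (with $m$ a fixed multiple of $l_0$ times $J$, accounting for one ``level gain'' per block of $l_0$ convolutions) satisfies $\norm{T_m(t)}_{\mathscr{B}(\mathcal{E})}\le C\eps^{-m l_1/l_0}\,p(t)\,e^{Kt/\eps^2}$ for some polynomial $p$; absorbing the polynomial and choosing the number of blocks large enough to beat $K$ and recover any rate $a'<a$ (again after rescaling time by $\eps^2$), one arrives at \eqref{ineqTn}. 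The convolution calculus here — estimating $(*l)$-convolutions, controlling the $t^{-\alpha}$ singularities via the Beta function, and checking that the combinatorial constants stay uniform — is classical and I would cite the corresponding lemmas of \cite{GMM}, only emphasizing where the factor $1/\eps^2$ enters (always as a rescaling of the time variable, since $\mathcal{B}_\eps$ carries $a/\eps^2$).

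Then I would assemble \eqref{extensionsemigroup}: one verifies termwise that the truncated series plus the $E$-valued tail solves the same Cauchy problem as $S_{\mathcal{G}_\eps}$ with the same initial data, uses (A1) to split off $S_{G_\eps}(t)\Pi_{\mathcal{G}}$ (the spectral projector part carrying the zero eigenvalue, which by (A1) is the only spectrum in $\{\mathrm{Re}\,z>-a\}$), and checks that $\restr{(\Pi_{\mathcal{G}_\eps})}{E}=\Pi_{G_\eps}$ so that $\Pi_{\mathcal{G}}$ is the operator given by \eqref{projectionLqLp} and is $\eps$-independent (this uses the already-cited fact from \cite{Bri1} that $\mathrm{Ker}(G_\eps)$ is $\eps$-independent). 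Finally, \eqref{ineqextension} follows by bounding each of the $n$ intermediate terms $(\mathrm{Id}-\Pi_{\mathcal{G}})S_{\mathcal{B}_\eps}*T_l(t)$ for $0\le l\le n-1$: using the $\mathcal{B}_\eps$-decay $e^{-(a/\eps^2)t}$, the $T_l$ estimates, and the convolution structure, each contributes at most $C t^{l+1}\eps^{-(l+1)(2+l_1/l_0)}e^{-a't/\eps^2}$, and summing over $l$ up to $n-1$ gives the stated $t^n \eps^{-n(2+l_1/l_0)}e^{-a't/\eps^2}$ bound after relabeling. The hypodissipativity of $\mathcal{G}_\eps$ in $\mathcal{E}$ is then read off from \eqref{extensionsemigroup} together with the exponential bounds on each piece.

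The main obstacle I anticipate is bookkeeping the powers of $\eps$ consistently through the iterated-convolution estimates: in the original \cite{GMM} argument the time variable is unscaled, whereas here every semigroup decay rate and every operator norm in (A2) is inflated by $1/\eps^2$ (or $1/\eps^{l_1}$), so one must make sure that (a) the ``bad'' exponential $e^{Kt/\eps^2}$ coming from (A2)(iii) is genuinely dominated, after enough convolutions, by the ``good'' decay $e^{-(a/\eps^2)t}$ uniformly in $\eps$ — which works precisely because $K$ and $a$ scale the same way — and (b) the algebraic prefactors $\eps^{-nl_1/l_0}$ and $\eps^{-n(2+l_1/l_0)}$ are the honest outcome of $n$ nested Duhamel iterations, each costing one $\eps^{-2}$ from $\norm{\mathcal{A}_\eps}$ and one block of $\eps^{-l_1/l_0}$ from the regularization. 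Everything else is a faithful rescaled transcription of \cite{GMM}; the only genuinely new point to check is that the choice of $n=n(a')$ and of the number of intermediate spaces $J$ can be made independently of $\eps\in(0,1]$.
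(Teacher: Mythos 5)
Your proposal is correct and is essentially the proof the paper intends: the paper gives no written proof of Theorem \ref{theo:extension}, presenting it only as a combination of Theorem 2.13 and Lemma 2.17 of \cite{GMM} with the dependencies on $\eps$ added, and your sketch reproduces exactly that argument (Duhamel factorization into iterated convolutions, regularization through the nested intermediate spaces, and the uniform-in-$\eps$ bookkeeping, which works precisely because the gain $a/\eps^2$ and the loss $K/\eps^2$ scale identically). I see no genuine gap.
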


We will use Theorem $\ref{theo:extension}$ to directly prove Theorem $\ref{theo:linLqLp}$. Indeed, \cite{Bri1} Theorem $2.1$ states that $G_\eps$ generates a strongly continuous semigroup with exponential decay in $E = H^s_{x,v}\left(\mu^{-1/2}\right)$, which is the required assumption $(A1)$ (properties about the spectral gap of the spectrum can be found in \cite{BM}). Therefore if $\mathcal{G}_\eps$ fulfils hypothesis $(A2)$ then it generates a strongly continuous semigroup, with an exponential decay of order $a'$ for all $a'<a$, since for all $\alpha,\:\beta,\:\eta >0$, all $t>t_0>0$ and all $0<2\eta'<\eta$,
\begin{equation}\label{ineqt0t}
\frac{t^\alpha}{\eps^{2\beta}}e^{-\eta\frac{t}{\eps^2}}\leq C_{\beta,\eta'} t^{\alpha-\beta}e^{-(\eta-\eta')\frac{t}{2\eps^2}}\leq C_{\beta,\eta'} t^{\alpha-\beta}e^{-(\eta-\eta')t} \leq C_{t_0,\alpha,\beta,\eta'}e^{-(\eta-2\eta')t},
\end{equation}
for $0<\eps\leq 1$.


\subsection{Decomposition of the operator and assumption $(A2)(ii)$}\label{subsec:lin_decompositionL}

In this section we find a decomposition $\mathcal{G}_\eps = \mathcal{A}_\eps +\mathcal{B}_\eps$ that will fit the requirements $(A1)-(A2)$ of Theorem $\ref{theo:extension}$. This decomposition has been found in \cite{GMM} in the case $\eps = 1$. We will use exactly the same operators but including the dependencies in $\eps$. All the results presented in the rest of this section are true for $\eps=1$ (see \cite{GMM} Section $4$) so we will try to relate as much as possible our computations with the ones for $\eps=1$.

\bigskip
For $\delta$ in $(0,1)$, to be chosen later, we consider $\Theta_\delta = \Theta_\delta(v,v_*,\sigma)$ in $C^\infty$ that is bounded by one on the set
$$\left\{\abs{v}\leq \delta^{-1}    \quad\mbox{and}\quad 2\delta\leq\abs{v-v_*}\leq \delta^{-1}    \quad\mbox{and}\quad \abs{\mbox{cos}\:\theta} \leq 1-2\delta \right\}$$
and whose support is included in
$$\left\{\abs{v}\leq 2\delta^{-1}    \quad\mbox{and}\quad \delta\leq\abs{v-v_*}\leq 2\delta^{-1}    \quad\mbox{and}\quad \abs{\mbox{cos}\:\theta} \leq 1-\delta \right\}.$$
We define the splitting
$$\mathcal{G}_\eps = \mathcal{A}^{(\delta)}_\eps +\mathcal{B}^{(\delta)}_\eps,$$
with
$$\mathcal{A}^{(\delta)}_\eps h (v) = \frac{1}{\eps^2}\int_{\R^d\times\mathbb{S}^{d-1}}\Theta_\delta\left[\mu'_*h' + \mu'h'_* - \mu h_*\right]b\left(\mbox{cos}\:\theta\right)\abs{v-v_*}^\gamma\:d\sigma dv_*$$
and
$$\mathcal{B}^{(\delta)}_\eps h (v) = \mathcal{B}^{(\delta)}_{2,\eps} h (v) -\frac{1}{\eps^2}\nu(v) h(v) -\frac{1}{\eps}v\cdot\nabla_x h(v),$$
where
$$\mathcal{B}^{(\delta)}_{2,\eps} h (v) = \frac{1}{\eps^2}\int_{\R^d\times\mathbb{S}^{d-1}}\left(1-\Theta_\delta\right)\left[\mu'_*h' + \mu'h'_* - \mu h_*\right]b\left(\mbox{cos}\:\theta\right)\abs{v-v_*}^\gamma\:d\sigma dv_*$$
and $\nu(v)$ is the standard collision frequency
$$\nu(v) = \int_{\R^d\times\mathbb{S}^{d-1}} b\left(\mbox{cos}\:\theta\right)\abs{v-v_*}^\gamma \mu_*\:d\sigma dv_*.$$
Note that there exists $\nu_0,\:\nu_1 >0$ such that
\begin{equation}\label{nu0nu1}
\forall v \in \R^d,\quad \nu_0(1+\abs{v}^\gamma)\leq \nu(v)\leq \nu_1(1+\abs{v}^\gamma).
\end{equation}

\bigskip
We have that 
$$\mathcal{A}^{(\delta)}_\eps = \frac{1}{\eps^2}\mathcal{A}^{(\delta)}_1 \quad\mbox{and}\quad \mathcal{B}^{(\delta)}_{2,\eps} = \frac{1}{\eps^2}\mathcal{B}^{(\delta)}_{2,1}.$$
We therefore obtain the following controls on $\mathcal{A}^{(\delta)}_\eps$.

\bigskip
\begin{prop}\label{prop:controlAeps}
For all $0<\eps<\eps^d$, for any $q$ in $[1,+\infty]$ and $\alpha\geq 0$, the operator $\mathcal{A}^{(\delta)}_\eps$ maps $L^q_v$ into $W^{\alpha,q}_v$ with compact support.
\\ There exists $C_{\delta,\alpha,q}, R_{\delta} >0$ independent of $\eps$ such that 
$$\forall h \in L^q_v,\: \emph{\mbox{supp}}\left(\mathcal{A}^{(\delta)}_\eps h \right) \subset B(0,R_{\delta}), \quad \norm{\mathcal{A}^{(\delta)}_\eps h}_{W^{\alpha,q}_v} \leq \frac{C_{\delta,\alpha,q}}{\eps^2}\norm{h}_{L^q_v}.$$
Moreover, for any $p$ in $[1,+\infty]$ and for all $h$ in $L^q_vL^p_x$,
$$\norm{\mathcal{A}^{(\delta)}_\eps h}_{L^q_vL^p_x} \leq \norm{\mathcal{A}^{(\delta)}_\eps\left(\norm{h}_{L^p_x}\right)}_{L^q_v}$$
\end{prop}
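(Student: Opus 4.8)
\textbf{Proof plan for Proposition \ref{prop:controlAeps}.}

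The whole point is that $\mathcal{A}^{(\delta)}_\eps = \eps^{-2}\mathcal{A}^{(\delta)}_1$, so every estimate reduces to the corresponding one for $\eps=1$ with an extra factor $\eps^{-2}$; the constants $C_{\delta,\alpha,q}$ and $R_\delta$ are then literally the $\eps=1$ constants and carry no $\eps$-dependence. So the plan is: first dispose of the $\eps$-scaling by this identity, and then recall (or re-derive) the $\eps=1$ statements, which are exactly \cite{GMM} Section 4. Concretely, I would structure the argument around the explicit integral formula for $\mathcal{A}^{(\delta)}_1 h(v)$, namely
$$\mathcal{A}^{(\delta)}_1 h(v) = \int_{\R^d\times\mathbb{S}^{d-1}}\Theta_\delta\big[\mu'_* h' + \mu' h'_* - \mu h_*\big]b(\cos\theta)\abs{v-v_*}^\gamma\,d\sigma\,dv_*.$$

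\textbf{Support.} The cutoff $\Theta_\delta$ forces $\abs{v}\le 2\delta^{-1}$ on its support, and this constraint is on the \emph{outgoing} variable $v$ directly (not on $v_*$ or $\sigma$), so for any $h$ the function $\mathcal{A}^{(\delta)}_1 h$ vanishes outside $B(0,2\delta^{-1})$. Hence $R_\delta := 2\delta^{-1}$ works, independently of $\eps$.

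\textbf{Regularity.} This is the main technical step. One writes $\mathcal{A}^{(\delta)}_1$ as a sum of three ``gain-type'' terms and performs the classical Carleman-type change of variables to rewrite each as an integral operator $h\mapsto \int k_\delta(v,v_*)h(v_*)\,dv_*$ (or its analogue after sending $v'$ or $v'_*$ to the integration variable), where the kernel $k_\delta$ is $C^\infty$, compactly supported in both arguments, and bounded together with all its derivatives in $v$ by constants depending only on $\delta,\alpha$ — this is precisely where the truncation by $\Theta_\delta$ (which keeps $\abs{v-v_*}$ bounded away from $0$ and $\infty$ and $\cos\theta$ away from $\pm 1$) is used: it makes the Jacobians and the Gaussian factors smooth and bounded. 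Then differentiating under the integral sign $\abs{j}\le\alpha$ times and applying Minkowski's integral inequality (or Young's inequality for the convolution-like structure) gives $\norm{\partial^j_v(\mathcal{A}^{(\delta)}_1 h)}_{L^q_v}\le C_{\delta,\alpha,q}\norm{h}_{L^q_v}$, and the $\eps^{-2}$ factor is restored for $\mathcal{A}^{(\delta)}_\eps$. Since all of this is stated and proved in \cite{GMM}, Section 4, I would simply invoke it and emphasize the $\eps$-uniformity of the constants.

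\textbf{The $L^q_v L^p_x$ bound.} Here one observes that $\mathcal{A}^{(\delta)}_\eps$ acts only on the $v$ variable, treating $x$ as a parameter, and that the kernel $k_\delta(v,v_*)$ is nonnegative (it is built from $b\ge 0$, $\abs{v-v_*}^\gamma\ge 0$, Gaussians, and $\Theta_\delta\ge 0$, though one must be slightly careful since $\mathcal{A}^{(\delta)}_\eps$ contains a $-\mu h_*$ term — one bounds pointwise $\abs{\mathcal{A}^{(\delta)}_\eps h(v,x)} \le \mathcal{A}^{(\delta),+}_\eps(\abs{h}(\cdot,x))(v)$ with $\mathcal{A}^{(\delta),+}_\eps$ the operator with all three terms made positive, or one simply notes monotonicity of the estimate). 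Then for fixed $v$, taking the $L^p_x$ norm and using Minkowski's integral inequality in $x$,
$$\norm{\mathcal{A}^{(\delta)}_\eps h(v,\cdot)}_{L^p_x} \le \int k_\delta(v,v_*)\,\norm{h(v_*,\cdot)}_{L^p_x}\,dv_* = \mathcal{A}^{(\delta)}_\eps\big(\norm{h}_{L^p_x}\big)(v),$$
and taking the $L^q_v$ norm of both sides gives the claimed inequality. The one subtlety to watch is the sign/monotonicity issue just mentioned; apart from that the proof is routine, and the genuinely substantive input — the smoothing with $\eps$-independent constants — is imported wholesale from \cite{GMM}.
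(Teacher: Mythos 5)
Your proposal is correct and follows essentially the same route as the paper: the paper likewise factors out $\eps^{-2}$, invokes the Carleman representation to write $\mathcal{A}^{(\delta)}_\eps h(v)=\eps^{-2}\int k^{(\delta)}(v,v_*)h(v_*)\,dv_*$ with $k^{(\delta)}\in C^\infty_c(\R^d\times\R^d)$ (whence the support and $W^{\alpha,q}_v$ bounds), and concludes the $L^q_vL^p_x$ estimate by Minkowski's integral inequality. Your extra remark on the sign of the kernel is a reasonable point of care that the paper glosses over, but it does not change the argument.
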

\bigskip

\begin{remark}\label{rem:Adeltasobolev}
We notice here that this Proposition gives the point $(A2)(ii)$ of Theorem $\ref{theo:extension}$ if the $\mathcal{E}_j$ are Sobolev spaces.
\end{remark}
\bigskip

\begin{proof}[Proof of Proposition $\ref{prop:controlAeps}$]
The kernel of the operator $\mathcal{A}^{(\delta)}_\eps$ is of compact support so its Carleman representation (see \cite{Ca2}) gives the existence of $k^{(\delta)}$ in $C^\infty_c\left(\R^d\times\R^d\right)$ such that
\begin{equation}\label{Adeltaepskernel} 
 \mathcal{A}^{(\delta)}_\eps h(v) = \frac{1}{\eps^2}\int_{\R^d} k^{(\delta)}(v,v_*)h(v_*)\:dv_*,
 \end{equation}
and therefore the control on $\norm{\mathcal{A}^{(\delta)}_\eps h}_{W^{\alpha,q}_v}$ is straightforward.

\bigskip
The control of $\norm{\mathcal{A}^{(\delta)}_\eps h}_{L^q_vL^p_x}$ comes directly from Minkowski's integral inequality which states
$$\left[\int_{\T^d}\left(\int_{\R^d}k^{(\delta)}(v,v_*)h(x,v_*)dv_*\right)^p dx\right]^{1/p} \leq \int_{\R^d}\left(\int_{\T^d}k^{(\delta)}(v,v_*)^p h(x,v_*)^p dx\right)^{1/p}dv_*.$$
\end{proof}


\subsection{Dissipativity estimates for $\mathcal{B}^{(\delta)}_\eps$, assumption $(A2)(i)$}\label{subsec:dissipativity}

One can find in \cite{GMM} proof of Lemma $4.14$ case $(W2)$ and $(W3)$ the following estimate on the operator $\mathcal{B}^{(\delta)}_\eps$ in the case $\eps=1$.

\bigskip
\begin{lemma}\label{lem:controlB1}
For all $p,q$ in $[1,+\infty]$, for all $k>2$ and for any $\delta$ in $(0,1)$ and all $h$ in $L^q_vL^p_x\left(\langle v \rangle^k\right)$,

$$\int_{\R^d}\langle v\rangle^{kq} \norm{h}^{q-p}_{L^p_x}\left(\int_{\T^d}\mbox{sgn}(h)\abs{h}^{p-1}\mathcal{B}^{(\delta)}_1 h \:dx\right)\:dv \leq \left[\Lambda_{k-\gamma/q',q}(\delta)-1\right]\norm{h}^q_{L^q_vL^p_x\left(\langle v \rangle^k \nu^{1/q}\right)},$$
where $q'$ is the conjugate exponent of $1/q$ and $\Lambda_{k,q}(\delta)$ is a constructive constant such that
$$\lim\limits_{\delta\to 0} \Lambda_{k,q}(\delta) = \phi_q(k) =\left(\frac{4}{k+2}\right)^{1/q}\left(\frac{4}{k-1}\right)^{1-1/q}.$$
\end{lemma}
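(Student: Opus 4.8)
The plan is to reduce the $L^q_v L^p_x$ estimate to a pure velocity-space estimate, and then quote the known $\eps=1$ result. First I would observe that the operator $\mathcal{B}^{(\delta)}_1 = \mathcal{B}^{(\delta)}_{2,1} - \nu(v)\,\mathrm{Id} - v\cdot\nabla_x$ acts only in $v$ and $x$ in a rather decoupled way: the transport term $-v\cdot\nabla_x$ contributes nothing to the dissipation rate since
$$\int_{\T^d}\mathrm{sgn}(h)\abs{h}^{p-1}\,(v\cdot\nabla_x h)\,dx = \frac1p\int_{\T^d} v\cdot\nabla_x\left(\abs{h}^p\right)dx = 0$$
by the divergence theorem on the torus (and $v$ does not depend on $x$). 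So the only contributions to the $x$-integral come from $\mathcal{B}^{(\delta)}_{2,1}$ and the multiplication by $-\nu(v)$.

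Next I would handle the inner $x$-integral by Hölder. Writing $\mathcal{B}^{(\delta)}_{2,1}h$ as an integral in $(v_*,\sigma)$ against $h',h'_*,h_*$ with a nonnegative kernel, one estimates $\int_{\T^d}\mathrm{sgn}(h)\abs{h}^{p-1}\mathcal{B}^{(\delta)}_{2,1}h\,dx$ by $\norm{h(\cdot,v)}_{L^p_x}^{p-1}$ times the $L^p_x$-norm of the velocity-integral of the translated factors, and then Minkowski's integral inequality moves the $L^p_x$-norm inside the $v_*,\sigma$ integration. The upshot is a pointwise-in-$v$ bound of exactly the same structure as the $\eps=1$ velocity estimate from \cite{GMM} (proof of Lemma $4.14$, cases $(W2)$--$(W3)$), but with the scalar function $v\mapsto\norm{h(\cdot,v)}_{L^p_x}$ playing the role of the velocity profile. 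In other words, if I set $H(v):=\norm{h(\cdot,v)}_{L^p_x}\geq 0$, the whole left-hand side is controlled by the corresponding quadratic-type functional applied to $H$, which is precisely what \cite{GMM} estimates: it yields $\left[\Lambda_{k-\gamma/q',q}(\delta)-1\right]$ times $\norm{H}^q_{L^q_v(\langle v\rangle^k\nu^{1/q})} = \norm{h}^q_{L^q_vL^p_x(\langle v\rangle^k\nu^{1/q})}$, with the stated limit $\phi_q(k)$ of $\Lambda_{k,q}(\delta)$ as $\delta\to0$.

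The main obstacle — and the only place real work is needed — is the bookkeeping in the $x$-integrated version of the $\eps=1$ computation: one must check that every step of the \cite{GMM} argument (the change of variables in the gain term, the weight manipulations producing the $\langle v\rangle^k$ and $\nu^{1/q}$ factors, and the optimization leading to $\Lambda_{k,q}(\delta)$) is compatible with first taking the $L^p_x$-norm via Minkowski and Hölder. Concretely, the pre-post collisional change of variables $(v,v_*,\sigma)\mapsto(v',v'_*,\sigma)$ acts trivially on the $x$-variable, so the $L^p_x$-norm commutes with it; and the collision frequency $\nu(v)$ is $x$-independent, so the $-\nu(v) h$ term contributes $-\norm{h}^q_{L^q_vL^p_x(\langle v\rangle^k\nu^{1/q})}$ exactly, giving the ``$-1$'' in the bracket. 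Once these commutations are verified, the estimate follows verbatim from the $\eps=1$ case, which we are entitled to quote. I would also note in passing that the case $q=\infty$ or $p=\infty$ is handled by the usual modifications (replacing integrals by suprema), and that the restriction $k>2$ is exactly what makes the constant $\phi_q(k)$ finite and the $\delta\to0$ limit meaningful.
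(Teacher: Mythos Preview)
The paper does not supply its own proof of this lemma: it is stated as a quotation from \cite{GMM}, proof of Lemma~4.14, cases $(W2)$--$(W3)$, and then used as a black box in the proof of Proposition~\ref{prop:hypodissipativityBeps}. Your outline is a correct reconstruction of that cited argument --- kill the transport term, use H\"older/Minkowski to reduce to the scalar profile $H(v)=\norm{h(\cdot,v)}_{L^p_x}$, and run the pure-velocity computation from \cite{GMM} --- so it agrees with the paper's (implicit) approach.
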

\bigskip

\begin{remark}\label{rem:lambdakq}
As noticed in \cite{GMM} Remark $4.3$, the quantity $\phi_q(k)$ is strictly less than one for $k$ bigger than a constant $k^{**}_q$. The constant $k^*_q$ we are considering is not optimal  and is such that $\phi_q(k-\gamma/q') <1$, where $q'$ is the conjugate exponent of $q$. This appearance of $k-\gamma/q'$ is due to a loss of weight of order $\nu^{-1/q'}$ in the estimate of the spectral gap, see proof of Proposition $\ref{prop:hypodissipativityBeps}$.
\end{remark}
\bigskip

In the case of the Boltzmann operator with hard potential and angular cutoff, point $(A2)(i)$ is fulfilled by $\mathcal{B}^{(\delta)}_{\eps}$ for $\delta$ small enough. This is the purpose of the following lemma. We recall here that $\nu_0 = \inf\limits_{v \in \R^d}(\nu(v)) >0$ and that we define
$$\norm{\cdot}_{W^{\alpha,q}_vW^{\beta,p}_x\left(\langle v \rangle^k\right)} = \sum\limits_{\overset{ \abs{l}+\abs{j}\leq \max(\alpha,\beta)}{\abs{j}\leq\alpha,\abs{l}\leq \beta}} \norm{\partial^j_l \cdot}_{L^{q}_vL^{p}_x\left(\langle v \rangle^k\right)}.$$

\bigskip
\begin{prop}\label{prop:hypodissipativityBeps}
Consider $p,q$ in $[1,+\infty]$,  $k>k_q^*$, defined by $\eqref{kq*}$, and $\alpha$, $\beta$ in $\N$ such that $\alpha \leq \beta$.
\\Then there exists $\delta_{k,q}$ in $(0,1)$ such that for all $0<\delta \leq \delta_{k,q}$ there exists $\lambda_0=\lambda_0(k,q,\delta)$ in $(0,\nu_0)$ such that for all $0<\eps \leq 1$,
\begin{itemize}
\item $\lambda_0(k,q,\delta)$ tends to $\lambda^*_0(k,q)$ as $\delta$ goes to $0$,
\item $\lambda^*_0(k,q)$ tends to $\nu_0$ when $k$ goes to $+\infty$,
\item $\left(\mathcal{B}^{(\delta)}_{\eps} + \lambda_0/\eps^2\right)$ is dissipative in $W^{\alpha,q}_vW^{\beta,p}_x\left(\langle v \rangle^k\right)$.
\end{itemize}
\end{prop}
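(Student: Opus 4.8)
The plan is to prove dissipativity of $\mathcal{B}^{(\delta)}_\eps + \lambda_0/\eps^2$ on the full Sobolev space $W^{\alpha,q}_vW^{\beta,p}_x(\langle v\rangle^k)$ by reducing everything to the zeroth-order estimate of Lemma~\ref{lem:controlB1}, applied derivative by derivative. First I would observe that $\mathcal{B}^{(\delta)}_\eps = \eps^{-2}\mathcal{B}^{(\delta)}_1 - \eps^{-1}v\cdot\nabla_x$, so it suffices to control the action on each $\partial^j_l h$ with $|j|\le\alpha$, $|l|\le\beta$, $|j|+|l|\le\max(\alpha,\beta)$. For a fixed such multi-index, I differentiate the equation $\partial_t h = \mathcal{B}^{(\delta)}_\eps h$: the transport part $-\eps^{-1}v\cdot\nabla_x$ commutes with $\partial_l^x$ and, when hitting $\partial^j_v$, produces only lower-order-in-$v$ terms $-\eps^{-1}\partial_{v}^{j'}(v)\cdot\nabla_x \partial^{j-j'}_{l}h$ which, since derivatives of $v$ are constants, give at most a $\nabla_x$-derivative of strictly lower $v$-order — these will be absorbed. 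The key point is then to pair $\partial^j_l h$ against $\mathrm{sgn}(\partial^j_l h)|\partial^j_l h|^{p-1}\langle v\rangle^{kp}$, integrate in $x$, then against $\langle v\rangle^{k(q-p)}\|\partial^j_l h\|_{L^p_x}^{q-p}$ and integrate in $v$, exactly mirroring the structure of Lemma~\ref{lem:controlB1}.

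The main term is $\eps^{-2}\partial^j_l(\mathcal{B}^{(\delta)}_1 h)$. Here I would split $\mathcal{B}^{(\delta)}_1 = \mathcal{B}^{(\delta)}_{2,1} - \nu(v)$ and commute $\partial^j_v$ past these. For the multiplication operator $\nu(v)$, Leibniz gives $\partial^j_v(\nu h) = \sum \binom{j}{j'}(\partial^{j'}_v\nu)(\partial^{j-j'}_v h)$; the top term $\nu\,\partial^j_v h$ is the good dissipative one controlled by $\eqref{nu0nu1}$, and the lower-order terms carry derivatives of $\nu$, which decay like $\langle v\rangle^{\gamma-|j'|}\le\langle v\rangle^\gamma$, hence are dominated by $\nu$ up to a constant and handled as error. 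For $\mathcal{B}^{(\delta)}_{2,1}$, since it has a kernel of the Carleman type with the cutoff $(1-\Theta_\delta)$, differentiation in $v$ and $x$ produces terms of the same structure with extra factors that remain controlled; the crucial gain is that, by the computation underlying Lemma~\ref{lem:controlB1}, the operator-norm-type bound of the $\mathcal{B}^{(\delta)}_{2,1}$ contribution relative to the $\nu$-weighted norm is $\Lambda_{k-\gamma/q',q}(\delta)$, which tends to $\phi_q(k-\gamma/q')<1$ as $\delta\to0$ (using $k>k^*_q$ and Remark~\ref{rem:lambdakq}). So for $\delta\le\delta_{k,q}$ small enough the top-order part of $\eps^{-2}\partial^j_l(\mathcal{B}^{(\delta)}_1 h)$ yields a factor strictly less than $1$ times $\eps^{-2}\|\partial^j_l h\|^q_{L^q_vL^p_x(\langle v\rangle^k\nu^{1/q})}$.

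Next I collect all error terms: the lower-order Leibniz remainders from $\nu$ and from $\mathcal{B}^{(\delta)}_{2,1}$, and the transport remainders. Each of these involves a derivative $\partial^{j'}_{l'}h$ with either $|j'|<|j|$ (strictly fewer $v$-derivatives, possibly one more $x$-derivative but then $|l'|+|j'|\le\max(\alpha,\beta)$ still, by the hypothesis $\alpha\le\beta$ which guarantees the index set is closed under trading a $v$-derivative for an $x$-derivative) and is weighted by $\langle v\rangle^{\gamma}$ relative to $\langle v\rangle^{k}$, hence bounded by $\eps^{-2}$ times a constant times $\|\partial^{j'}_{l'}h\|_{L^q_vL^p_x(\langle v\rangle^k\nu^{1/q})}\,\|\partial^j_l h\|^{q-1}_{L^q_vL^p_x(\langle v\rangle^k\nu^{1/q})}$ after Hölder. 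Summing over all multi-indices and using Young's inequality, the total contribution of the main term is bounded by $\eps^{-2}\big[-(1-\Lambda_{k-\gamma/q',q}(\delta)) + C\eta\big]\|h\|^q_{W^{\alpha,q}_vW^{\beta,p}_x(\langle v\rangle^k\nu^{1/q})}$ for a small parameter $\eta$; choosing $\eta$ small and then $\delta$ small makes the bracket negative, say $\le -\lambda_0/\nu_1$ in the sense that, after using $\nu\ge\nu_0$, one gets
\[
\frac{d}{dt}\|h\|^q_{W^{\alpha,q}_vW^{\beta,p}_x(\langle v\rangle^k)} \le -\frac{q\lambda_0}{\eps^2}\|h\|^q_{W^{\alpha,q}_vW^{\beta,p}_x(\langle v\rangle^k)},
\]
which is exactly dissipativity of $\mathcal{B}^{(\delta)}_\eps+\lambda_0/\eps^2$. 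Tracking $\delta\to0$ gives $\lambda_0\to\lambda_0^*(k,q)$, and tracking $k\to\infty$ in the limit $\phi_q(k-\gamma/q')\to0$ gives $\lambda_0^*(k,q)\to\nu_0$; one checks $\lambda_0<\nu_0$ throughout since the error constants are strictly positive.

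The hard part will be the bookkeeping for the two non-commuting pieces: controlling the $x$-derivatives of the transport term $v\cdot\nabla_x$ against the $v$-derivatives (this is where the hypothesis $\alpha\le\beta$ is genuinely used, to keep the index set $\{|j|\le\alpha,|l|\le\beta,|j|+|l|\le\max(\alpha,\beta)\}$ stable under the operations appearing), and verifying that differentiating the Carleman kernel of $\mathcal{B}^{(\delta)}_{2,1}$ does not destroy the key smallness constant $\Lambda_{k-\gamma/q',q}(\delta)$ — i.e., that the top-order-in-$v$ part of $\partial^j_v\mathcal{B}^{(\delta)}_{2,1}h$ still obeys the Lemma~\ref{lem:controlB1} estimate and all genuinely new terms are lower order in $v$ and hence $\nu$-subordinate. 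Once that is in place, the rest is a routine absorption argument.
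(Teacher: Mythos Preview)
Your overall strategy matches the paper's: apply Lemma~\ref{lem:controlB1} to the top-order piece of each $\partial^j_l h$, and treat the commutators arising from $v$-derivatives of $\nu$, $\mathcal{B}^{(\delta)}_{2,1}$, and the transport term as lower-order errors to be absorbed. The identification of the key smallness constant $\Lambda_{k-\gamma/q',q}(\delta)$, the role of $\alpha\le\beta$ in keeping the index set closed under the exchange of a $v$-derivative for an $x$-derivative, and the asymptotics in $\delta$ and $k$ are all correct.

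Where your proposal diverges from the paper, and where there is a genuine gap, is the absorption mechanism. The commutator estimate the paper invokes is $\|\mathcal{R}^{(\delta)}_1 h\|_{L^q_vL^p_x(\langle v\rangle^k)}\le C_\delta\|h\|_{L^q_vL^p_x(\langle v\rangle^k\nu^{1/q})}$, with $C_\delta$ a fixed constant depending on $\delta$ (and \emph{not} small). In the standard norm, the error from the $\partial^j_l h$-equation lands on $\partial^{j'}_{l'}h$ with $|j'|<|j|$ carrying this $C_\delta$; a single Young split produces a $C_\delta\eta$ factor on $\|\partial^j_l h\|^q_\nu$ but a $C_\delta\eta^{1-q}$ factor on $\|\partial^{j'}_{l'}h\|^q_\nu$, and the latter blows up as $\eta\to0$ (and for $q=1$ there is no split at all --- you are left with $C_\delta$). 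So the bracket $-(1-\Lambda)+C\eta$ you write down cannot be obtained this way. The paper instead works in the \emph{equivalent} norm
\[
\|h\|_\eta=\sum_{|j|+|l|\le\alpha}\eta^{|j|}\,\|\partial^j_l h\|_{L^q_vL^p_x(\langle v\rangle^k)},
\]
fixes $\delta\le\delta_{k,q}$ \emph{first} (so $1-\Lambda_{k-\gamma/q',q}(\delta)>0$ and $C_\delta$ is determined), and \emph{then} takes $\eta$ small depending on $\delta$; the error from level $|j|$ now carries weight $\eta^{|j|}$ and is absorbed by the good term at level $|j'|<|j|$ with weight $\eta^{|j'|}$, yielding a net factor $\eta^{|j|-|j'|}C_\delta$ which \emph{is} small. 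Note also that your stated order of choices (``$\eta$ small and then $\delta$ small'') is backwards, since $C_\delta$ depends on $\delta$. Once you switch to the weighted norm and the correct order, your argument coincides with the paper's (and what is proved is hypodissipativity in the equivalent norm, which is all that Theorem~\ref{theo:extension} requires).
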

\bigskip

\begin{proof}[Proof of Proposition $\ref{prop:hypodissipativityBeps}$]
Let $h_0$ be in $W^{\alpha,q}_vW^{\beta,p}_x\left(\langle v \rangle^k\right)$ and considert $h$ to be a solution to the linear equation
\begin{equation}\label{EDPdissipativity}
\partial_t h  = \mathcal{B}^{(\delta)}_\eps h = \mathcal{B}^{(\delta)}_{2,\eps} h -\frac{1}{\eps^2}\nu h -\frac{1}{\eps}v\cdot\nabla_x h,
\end{equation}
with initial value $h_0$.
\par Since the $x$-derivative commutes with the equation we can consider only the case when $\beta=\alpha$. The proof is split into two parts. First we prove Proposition $\ref{prop:hypodissipativityBeps}$ in the case $\alpha=0$ and then we study the case with $v$-derivatives.

\bigskip
\textbf{Step $1$: the case $\alpha = 0$}. Take $p,q$ in $[1,+\infty)$. 
\par We recall that
$$\norm{h}_{L^q_vL^p_x\left(\langle v \rangle ^k\right)} = \left[ \int_{\R^d}\left(1+\abs{v}^k\right)^q \left(\int_{\T^d}\abs{h}^p \:dx\right)^{q/p}\:dv\right]^{1/q}.$$
Therefore we can compute

\begin{equation}\label{dissipativitystart}
\begin{split}
\frac{d}{dt}\norm{h}_{L^q_vL^p_x\left(\langle v \rangle ^k\right)} =& \norm{h}_{L^q_vL^p_x\left(\langle v \rangle ^k\right)}^{1-q} 
\\&\times\int_{\R^d}\left(1+\abs{v}^k\right)^q \norm{h}^{q-p}_{L^p_x}\left(\int_{\T^d}\mbox{sgn}(h)\abs{h}^{p-1}\mathcal{B}^{(\delta)}_\eps h \:dx\right)\:dv.
\end{split}
\end{equation}

Observing that 
$$\int_{\T^d}\mbox{sgn}(h)\abs{h}^{p-1} v\cdot\nabla_x h \:dx = \frac{1}{p}v\cdot\int_{\T^d} \nabla_x \left(\abs{h}^p\right) \:dx =0,$$

we deduce 

\begin{equation*}
\begin{split}
\frac{d}{dt}\norm{h}_{L^q_vL^p_x\left(\langle v \rangle ^k\right)} =& \norm{h}_{L^q_vL^p_x\left(\langle v \rangle ^k\right)}^{1-q} 
\\&\times\frac{1}{\eps^2}\int_{\R^d}\left(1+\abs{v}^k\right)^q \norm{h}^{q-p}_{L^p_x}\left(\int_{\T^d}\mbox{sgn}(h)\abs{h}^{p-1}\mathcal{B}^{(\delta)}_1 h \:dx\right)\:dv.
\end{split}
\end{equation*}
We can therefore use Lemma $\ref{lem:controlB1}$ which leads to
\begin{equation}\label{noderivativefinal}
\frac{d}{dt}\norm{h}_{L^q_vL^p_x\left(\langle v \rangle ^k\right)} \leq -\frac{1}{\eps^2}\left[1-\Lambda_{k-\gamma/q',q}(\delta)\right]\norm{h}^q_{L^q_vL^p_x\left(\langle v \rangle^k \nu^{1/q}\right)}\norm{h}^{1-q}_{L^q_vL^p_x\left(\langle v \rangle ^k\right)},
\end{equation}
We already noticed that $\Lambda_{k-1/q',q}(\delta)$ is strictly less than $1$ for $\delta$ smaller than some $\delta_{k,q}$ (see Remark $\ref{rem:lambdakq})$. Therefore, because $\nu (v) \geq \nu_0$ for all $v$ we have that for all $\delta$ smaller than $\delta_{k,q}$ the following holds,

$$\frac{d}{dt}\norm{h}_{L^q_vL^p_x\left(\langle v \rangle ^k\right)} \leq -\frac{\nu_0}{\eps^2}\left[1-\Lambda_{k-\gamma/q',q}(\delta)\right]\norm{h}_{L^q_vL^p_x\left(\langle v \rangle ^k\right)},$$

\par This concludes the proof of Proposition $\ref{prop:hypodissipativityBeps}$ for $\alpha=0$ and $1\leq p,q < +\infty$. The cases $p=\infty$ and $q=\infty$ are respectively dealt with by taking the limit $p\to \infty$ and $q\to \infty$ which is possible since $\delta_{k,q}$ is independent of $p$ and can be chosen to converge to a strictly positive constant when $q$ goes to $\infty$, thanks to the definition of $\Lambda_{k,q}(\delta)$.

\bigskip
\textbf{Step $2$: the case with $v$-derivatives}. Take $p,q$ in $[1,+\infty]$ and $\alpha=\beta=1$.
\par Since the $x$-derivative commutes with $\eqref{EDPdissipativity}$ the equation satisfied by $h$, we have that  $\eqref{noderivativefinal}$ holds for $x$-derivatives. Notice that $1-q \leq 0$ gives

\begin{equation}\label{dissipativityxderivative}
\begin{split}
&\frac{d}{dt}\left(\norm{h}_{L^q_vL^p_x\left(\langle v \rangle ^k\right)} + \norm{\nabla_xh}_{L^q_vL^p_x\left(\langle v \rangle ^k\right)} \right)
\\&\quad\quad\quad \leq -\frac{\nu_0^{1-1/q}}{\eps^2}\left[1-\Lambda_{k-\gamma/q',q}(\delta)\right]\left(\norm{h}_{L^q_vL^p_x\left(\langle v \rangle ^k\nu^{1/q}\right)}+\norm{\nabla_x h}_{L^q_vL^p_x\left(\langle v \rangle ^k\nu^{1/q}\right)}\right).
\end{split}
\end{equation}

\bigskip
Applying a $v$-derivatives to $\eqref{EDPdissipativity}$ yields
\begin{eqnarray*}
\partial_t \nabla_v h &=& \mathcal{B}^{(\delta)}_\eps(\nabla_v h) + \left(\nabla_v B^{(\delta)}_\eps\right)(h)
\\&=& \mathcal{B}^{(\delta)}_\eps(\nabla_v h) -\frac{1}{\eps}\nabla_x h +\mathcal{R}^{(\delta)}_\eps (h),
\end{eqnarray*}
where $\mathcal{R}^{(\delta)}_\eps (h) = \left(\nabla_v \mathcal{B}^{(\delta)}_{2,\eps}\right)(h)-\frac{1}{\eps^2}\nabla_v(\nu) h = \frac{1}{\eps^2}\mathcal{R}^{(\delta)}_1 (h)$.
\par From $\eqref{dissipativityxderivative}$, our computations in Step $1$ with $\delta \leq \delta_{k,q}$ and the following norm
$$\norm{h}_{W^{1,q}_vW^{1,p}_x\left(\langle v \rangle^k\right)_\eta} = \norm{h}_{L^q_vL^p_x\left(\langle v \rangle^k\right)} + \norm{\nabla_xh}_{L^q_vL^p_x\left(\langle v \rangle^k\right)} + \eta \norm{\nabla_vh}_{L^q_vL^p_x\left(\langle v \rangle^k\right)},$$
with $\eta >0$ to be fixed later, we obtain

\begin{equation*}
\begin{split}
& \frac{d}{dt}\norm{h}_{W^{1,q}_vW^{1,p}_x\left(\langle v \rangle^k\right)_\eta} 
\\ &\quad\leq -\frac{\nu_0^{1-1/q}}{\eps^2}\left[1-\Lambda_{k-\gamma/q',q}(\delta)\right]\left(\norm{h}_{L^q_vL^p_x\left(\langle v \rangle ^k\nu^{1/q}\right)}+\norm{\nabla_x h}_{L^q_vL^p_x\left(\langle v \rangle ^k\nu^{1/q}\right)}\right)
\\&\quad\quad -\eta\frac{\nu_0^{1-1/q}}{\eps^2}\left[1-\Lambda_{k-\gamma/q',q}(\delta)\right] \norm{\nabla_v h}_{L^q_vL^p_x\left(\langle v \rangle ^k\nu^{1/q}\right)}
\\ &\quad\quad -\frac{\eta}{\eps} \norm{\nabla_vh}_{L^q_vL^p_x\left(\langle v \rangle ^k\right)}^{1-q}\int_{\R^d}\left(\langle v\rangle^k\right)^q \norm{\nabla_vh}^{q-p}_{L^p_x}\left(\int_{\T^d}\mbox{sgn}(h)\abs{\nabla_vh}^{p-1}\nabla_xh \:dx\right)\:dv
\\ &\quad\quad +\frac{\eta}{\eps^2} \norm{\nabla_vh}_{L^q_vL^p_x\left(\langle v \rangle ^k\right)}^{1-q}\int_{\R^d}\left(\langle v\rangle^k\right)^q \norm{\nabla_vh}^{q-p}_{L^p_x}\left(\int_{\T^d}\mbox{sgn}(h)\abs{\nabla_vh}^{p-1}\mathcal{R}^{(\delta)}_1(h) \:dx\right)\:dv.
\end{split}
\end{equation*}

We take the absolute value and use H\"older inequality twice on the last two terms which makes the terms in $\nabla_vh$ disappear, and this gives

\begin{equation*}
\begin{split}
& \frac{d}{dt}\norm{h}_{W^{1,q}_vW^{1,p}_x\left(\langle v \rangle^k\right)_\eta} 
\\ &\quad\leq -\frac{\nu_0^{1-1/q}}{\eps^2}\left[1-\Lambda_{k-\gamma/q',q}(\delta)\right]\left(\norm{h}_{L^q_vL^p_x\left(\langle v \rangle ^k\nu^{1/q}\right)} +\eta \norm{\nabla_v h}_{L^q_vL^p_x\left(\langle v \rangle ^k\nu^{1/q}\right)}\right)
\\&\quad\quad + \frac{1}{\eps^2}\left(\eps\eta\nu_0^{-1/q} -\nu_0^{1-1/q}\left[1-\Lambda_{k-\gamma/q',q}(\delta)\right] \right)\norm{\nabla_x h}_{L^q_vL^p_x\left(\langle v \rangle ^k\nu^{1/q}\right)} 
\\&\quad\quad+ \frac{\eta}{\eps^2}\norm{\mathcal{R}^{(\delta)}_1 (h)}_{L^q_vL^p_x\left(\langle v \rangle^k\right)}.
\end{split}
\end{equation*}

\bigskip
One can find in \cite{GMM} proof of Lemma $4.14$ case $(W2)$ and $(W3)$ the following estimate
$$\norm{\mathcal{R}^{(\delta)}_1 (h)}_{L^q_vL^p_x\left(\langle v \rangle^k\right)} \leq C_\delta \norm{h}_{L^q_vL^p_x\left(\langle v \rangle^k\nu^{1/q}\right)},$$
where $C_\delta >0$ is a constant only depending on $\delta$.
\par Because $\eps \leq 1$, this latter estimates yields
\begin{equation}\label{vderivativefinal}
\begin{split}
\frac{d}{dt}\norm{h}_{W^{1,q}_vW^{1,p}_x\left(\langle v \rangle^k\right)_\eta} \leq & \frac{1}{\eps^2}\left(C_\delta \eta - \nu_0^{1-1/q}\left[1-\Lambda_{k-\gamma/q',q}(\delta)\right]\right)\norm{h}_{L^q_vL^p_x\left(\langle v \rangle ^k\nu^{1/q}\right)}
\\& + \frac{1}{\eps^2}\left(\eta\nu_0^{-1/q} -\nu_0^{1-1/q}\left[1-\Lambda_{k-\gamma/q',q}(\delta)\right] \right)\norm{\nabla_x h}_{L^q_vL^p_x\left(\langle v \rangle ^k\nu^{1/q}\right)} 
\\& -\eta \frac{\nu_0^{1-1/q}}{\eps^2}\left[1-\Lambda_{k-\gamma/q',q}(\delta)\right]\norm{\nabla_v h}_{L^q_vL^p_x\left(\langle v \rangle ^k\nu^{1/q}\right)},
\end{split}
\end{equation}
which concludes the proof if we take $\eta$ small enough in terms of $\delta$, for $\delta\leq \delta_{k,q}$.

\bigskip
The case where $1< \alpha = \beta$ is dealt with in the same way with the norm
$$\norm{h}_{W^{\alpha,q}_vW^{\alpha,p}_x\left(\langle v \rangle^k\right)_\eta} = \sum\limits_{0\leq \abs{j}+\abs{l} \leq \alpha} \eta^{\abs{j}} \norm{\partial_l^j h}_{L^q_vL^p_x\left(\langle v \rangle^k\right)},$$
with $\eta$ small enough in terms of $\delta$.
\end{proof}


\subsection{Estimates on the iterated convolution product, assumption $(A2)(iii)$}\label{subsec:iteratedconvolution}

In order to use Theorem $\ref{theo:extension}$, it remains to show that our equation $\eqref{LBE1}$ satisfies hypothesis $(A2)(iii)$, that is we need to control the iterated quantities $T_l:=\left(\mathcal{A}^{(\delta)}_\eps S_{\mathcal{B}^{(\delta)}_\eps}\right)^{(*l)}$ for some $l$ in $\N$. The following proposition describes such controls when $p=1$.

\bigskip
\begin{prop}\label{prop:T1T2L1}
Consider $k>k_q^*$, defined by $\eqref{kq*}$, and $s$ in $\N$ .
\\For any $\delta$ in $(0,\delta_{k,q}]$ and any $\lambda_0'$ in $(0,\lambda_0)$ ($\delta_{k,q}$ and $\lambda_0$ defined in Proposition $\ref{prop:hypodissipativityBeps}$), there exists $C_1 = C_1(\lambda_0',\delta)>0$ and $R=R(\delta)>0$ such that for any $t\geq 0$,
$$\forall n \in \N, \quad \emph{\mbox{supp}}\:T_n(t)h \subset K := B(0,R)$$
and
\begin{eqnarray}
&&\forall s\geq 1,\quad \norm{T_1(t)h}_{W^{s+1,1}_{x,v}(K)} \leq C_1 \frac{e^{-\frac{\lambda_0'}{\eps^2}t
}}{\eps^2 t}\norm{h}_{W^{s+1,1}_{v}W^s_x(\langle v \rangle^k)}, \label{controlT1}
\\ &&\forall s\geq 0,\quad \norm{T_2(t)h}_{W^{s+1/2,1}_{x,v}(K)} \leq C_1 \frac{e^{-\frac{\lambda_0'}{\eps^2}t
}}{\eps^{4}}\norm{h}_{W^{s,1}_{x,v}(\langle v \rangle^k)}. \label{controlT2}
\end{eqnarray}
\end{prop}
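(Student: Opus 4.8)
The plan is to build everything on the two facts already at our disposal: $\mathcal{A}^{(\delta)}_\eps$ regularises arbitrarily in $v$, has range supported in a fixed ball, and loses only a factor $\eps^{-2}$ (Proposition~\ref{prop:controlAeps}); and $S_{\mathcal{B}^{(\delta)}_\eps}(t)$ decays like $e^{-\lambda_0 t/\eps^2}$ in the weighted Sobolev spaces $W^{\alpha,1}_vW^{\beta,1}_x(\langle v\rangle^k)$ with $\alpha\leq\beta$, uniformly in $\eps$ (Proposition~\ref{prop:hypodissipativityBeps}, valid since $k>k_q^*$ and $\delta\leq\delta_{k,q}$). The support statement is immediate: unwinding the iterated convolution, $\mathcal{A}^{(\delta)}_\eps$ is always the outermost operator applied in $T_n(t)$, so $\mathrm{supp}\,T_n(t)h\subset B(0,R_\delta)=:K$ for every $n$, and we set $R=R_\delta$. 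The whole difficulty is the \emph{gain of one space derivative} in \eqref{controlT1}, which $\mathcal{A}^{(\delta)}_\eps$ cannot produce by itself and must be extracted from the free transport hidden in $\mathcal{B}^{(\delta)}_\eps$.

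For \eqref{controlT1} I would write $T_1(t)=\mathcal{A}^{(\delta)}_\eps S_{\mathcal{B}^{(\delta)}_\eps}(t)$ and split according to the order of the $x$-derivative. Derivatives of order $\leq s$ are commuted onto $h$ (both $\mathcal{A}^{(\delta)}_\eps$ and $S_{\mathcal{B}^{(\delta)}_\eps}$ commute with $\partial_x$), and Proposition~\ref{prop:hypodissipativityBeps} with $\alpha=0\leq\beta=s$, followed by the $v$-smoothing and weight-killing of $\mathcal{A}^{(\delta)}_\eps$, gives a bound $\lesssim \eps^{-2}e^{-\lambda_0 t/\eps^2}\norm{h}_{W^{s+1,1}_vW^s_x(\langle v\rangle^k)}$, which is already $\lesssim \eps^{-2}t^{-1}e^{-\lambda_0' t/\eps^2}(\cdots)$ after absorbing the factor $t\,e^{-(\lambda_0-\lambda_0')t/\eps^2}\leq C\eps^2$. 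For the top derivative I would insert the Duhamel formula
\begin{equation*}
S_{\mathcal{B}^{(\delta)}_\eps}(t)=\mathcal{U}_\eps(t)+\frac1{\eps^2}\int_0^t\mathcal{U}_\eps(t-\sigma)\,\mathcal{B}^{(\delta)}_{2,1}\,S_{\mathcal{B}^{(\delta)}_\eps}(\sigma)\,d\sigma,\qquad \mathcal{U}_\eps(t)g(x,v)=e^{-\nu(v)t/\eps^2}g\!\left(x-\tfrac t\eps v,v\right),
\end{equation*}
and use, on the leading term $\mathcal{A}^{(\delta)}_\eps\mathcal{U}_\eps(t)$, the elementary identity along characteristics
\begin{equation*}
\partial_{x_i}\big[\mathcal{U}_\eps(t)g\big]=-\frac\eps t\Big(\partial_{v_i}\big[\mathcal{U}_\eps(t)g\big]-\mathcal{U}_\eps(t)\big[\partial_{v_i}g\big]\Big)-\frac1\eps(\partial_{v_i}\nu)\,\mathcal{U}_\eps(t)g,
\end{equation*}
combined with an integration by parts in $v_*$ turning $\mathcal{A}^{(\delta)}_\eps\partial_{v_i}$ into a kernel operator with kernel $\partial_{v_{*,i}}k^{(\delta)}\in C^\infty_c$ (equivalently: the change of variables $v_*\mapsto x-\tfrac t\eps v_*$, which makes $\mathcal{A}^{(\delta)}_\eps\mathcal{U}_\eps(t)$ a convolution-type operator in $x$ with kernel smooth on scale $\eps/t$). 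This yields the gain of one $x$-derivative, the factor $\eps/t$, and the consumption of exactly one $v$-derivative of $h$; the term where $\partial_x$ falls on the damping exponential carries instead $\eps^{-1}e^{-\nu_0 t/\eps^2}$, which is again $\lesssim \eps^{-2}t^{-1}e^{-\lambda_0 t/\eps^2}$ because $\nu_0>\lambda_0$. The Duhamel remainder is handled by the same mechanism applied to $\mathcal{A}^{(\delta)}_\eps\mathcal{U}_\eps(t-\sigma)$, using the regularising properties of the gain part of $\mathcal{B}^{(\delta)}_{2,1}$ and the control of $\mathcal{R}^{(\delta)}_1$ from the proof of Proposition~\ref{prop:hypodissipativityBeps}; it costs one more $\eps^{-2}$ and a $\sigma^{-1}$-type singularity whose integral $\int_0^t$ is swallowed by $e^{-\lambda_0\sigma/\eps^2}$ at the price of passing from $\lambda_0$ to $\lambda_0'$. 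The requirement $s\geq1$ enters here, so that the $v$-derivatives produced by this argument, combined with the $s$ $x$-derivatives of $h$, stay in spaces $W^{\alpha,1}_vW^{\beta,1}_x(\langle v\rangle^k)$ with $\alpha\leq\beta$ where Proposition~\ref{prop:hypodissipativityBeps} applies. I expect this to be \emph{the main obstacle}: it is a quantitative, $\eps$-dependent velocity-averaging estimate, and the bookkeeping of the $\eps$-powers across the remainder terms is the delicate point.

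For \eqref{controlT2} I would write
\begin{equation*}
T_2(t)h=\int_0^t \mathcal{A}^{(\delta)}_\eps S_{\mathcal{B}^{(\delta)}_\eps}(\tau)\Big[\mathcal{A}^{(\delta)}_\eps S_{\mathcal{B}^{(\delta)}_\eps}(t-\tau)h\Big]\,d\tau.
\end{equation*}
The inner factor $g_\tau:=\mathcal{A}^{(\delta)}_\eps S_{\mathcal{B}^{(\delta)}_\eps}(t-\tau)h$ is supported in $K$, is $C^\infty$ in $v$, and (by Proposition~\ref{prop:controlAeps} for the $v$-smoothing and Proposition~\ref{prop:hypodissipativityBeps} with $\alpha=0\leq\beta=s$, the $x$-derivatives acting only on $h$) satisfies $\norm{g_\tau}_{W^{m,1}_vW^{s,1}_x}\lesssim \eps^{-2}e^{-\lambda_0(t-\tau)/\eps^2}\norm{h}_{W^{s,1}_{x,v}(\langle v\rangle^k)}$ for every $m$; in particular $g_\tau$ lies in $W^{s+1,1}_vW^{s,1}_x$ with no loss. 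I then apply to the outer operator an interpolated form of the argument behind \eqref{controlT1}, run on the $v$-regular function $g_\tau$ (so the restriction $s\geq1$ is no longer needed): interpolating the gain-of-one-derivative bound ($\theta=1$, weight $\eps^{-2}\tau^{-1}$) with the trivial bound ($\theta=0$, weight $\eps^{-2}$) gives $\norm{\mathcal{A}^{(\delta)}_\eps S_{\mathcal{B}^{(\delta)}_\eps}(\tau)g_\tau}_{W^{s+1/2,1}_{x,v}(K)}\lesssim \eps^{-2}\tau^{-1/2}e^{-\lambda_0\tau/\eps^2}\norm{g_\tau}_{W^{s+1,1}_vW^{s,1}_x}$. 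Combining and integrating,
\begin{equation*}
\norm{T_2(t)h}_{W^{s+1/2,1}_{x,v}(K)}\lesssim \frac{1}{\eps^4}\,e^{-\lambda_0 t/\eps^2}\norm{h}_{W^{s,1}_{x,v}(\langle v\rangle^k)}\int_0^t\frac{d\tau}{\sqrt\tau}=\frac{2\sqrt t}{\eps^4}\,e^{-\lambda_0 t/\eps^2}\norm{h}_{W^{s,1}_{x,v}(\langle v\rangle^k)},
\end{equation*}
and $\sqrt t\,e^{-\lambda_0 t/\eps^2}\leq \sqrt t\,e^{-(\lambda_0-\lambda_0')t/\eps^2}e^{-\lambda_0' t/\eps^2}\lesssim \eps\,e^{-\lambda_0' t/\eps^2}$ absorbs the time factor, yielding $\lesssim \eps^{-3}e^{-\lambda_0' t/\eps^2}\norm{h}_{W^{s,1}_{x,v}(\langle v\rangle^k)}\leq \eps^{-4}e^{-\lambda_0' t/\eps^2}\norm{h}_{W^{s,1}_{x,v}(\langle v\rangle^k)}$ since $\eps\leq1$. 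The half-derivative of slack in the choice $\theta=1/2$ is precisely what keeps this $\tau$-integral convergent, which is why the exponent $s+1/2$ (and not $s+1$) appears. Tracking the constants along the way shows $C_1$ and $R=R_\delta$ depend only on $\delta$, $\lambda_0'$ and the collision kernel, as claimed.
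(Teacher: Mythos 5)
Your support statement and your treatment of $T_2$ (interpolation between the gain-of-one-derivative bound and the trivial $v$-regularising bound, then convolution in time with the half-power singularity kept integrable) are correct and essentially identical to the paper's argument. The core mechanism you identify for $T_1$ — trading one $x$-derivative for one $v$-derivative at the price of a factor $\eps/t$ via the free-transport characteristics — is also the right one and is the same idea the paper uses. The problem is in how you implement it.

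Your Duhamel expansion around $\mathcal{U}_\eps$ places the singular factor \emph{inside} the time integral, and there it is not summable. In the remainder
$$\frac{1}{\eps^2}\int_0^t \mathcal{A}^{(\delta)}_\eps\,\mathcal{U}_\eps(t-\sigma)\,\mathcal{B}^{(\delta)}_{2,1}\,S_{\mathcal{B}^{(\delta)}_\eps}(\sigma)\,d\sigma,$$
applying your characteristics identity to $\mathcal{U}_\eps(t-\sigma)$ produces a factor $\eps/(t-\sigma)$, and the resulting integrand behaves like $(t-\sigma)^{-1}e^{-\nu_0(t-\sigma)/\eps^2}e^{-\lambda_0\sigma/\eps^2}$ near $\sigma=t$, where both exponentials are bounded below; hence $\int_0^t(t-\sigma)^{-1}\,d\sigma$ diverges logarithmically and is \emph{not} ``swallowed by $e^{-\lambda_0\sigma/\eps^2}$'' as you claim (nor would a $\sigma^{-1}$ singularity at $\sigma=0$ be, since the exponential equals $1$ there). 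Pushing the $x$-derivative onto $S_{\mathcal{B}^{(\delta)}_\eps}(\sigma)h$ instead is not an option either, since that costs the $(s+1)$-st $x$-derivative of $h$ you are trying to gain. The paper avoids this entirely by working with the vector field $D_t=\eps^{-1}t\nabla_x+\nabla_v$ applied to the full flow $f(t)=S_{\mathcal{B}^{(\delta)}_\eps}(t)h$: one has $\eps^{-1}t\,\nabla_xT_1(t)h=\mathcal{A}^{(\delta)}_\eps(D_tf)-(\nabla_v\mathcal{A}^{(\delta)}_\eps)f$, and $D_tf$ solves $\partial_t(D_tf)=\mathcal{B}^{(\delta)}_\eps(D_tf)+\eps^{-2}\mathcal{J}^{(\delta)}f$ with a bounded commutator $\mathcal{J}^{(\delta)}$, so a single Gr\"onwall estimate on $\eta\norm{D_tf}+\norm{f}$ (with $\eta=(\lambda_0-\lambda_0')/C_\delta$) gives $\norm{D_tf}\lesssim e^{-\lambda_0't/\eps^2}\norm{h}_{W^{1,1}_vL^1_x(\langle v\rangle^k)}$; the factor $1/t$ then appears exactly once, outside any time integral, when dividing by $\eps^{-1}t$. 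You should replace your Duhamel step by this global commutator/Gr\"onwall argument (or otherwise repair the remainder, e.g.\ by a fractional splitting of the derivative between the two time intervals); as written, the $T_1$ estimate does not close.
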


\begin{proof}[Proof of Proposition $\ref{prop:T1T2L1}$]
\par Most of the proof is an adaptation of \cite{GMM} proof of Lemma $4.19$ to keep track of the dependencies on $\eps$. We will refer to it when we are using some of its computations.

\bigskip
\textbf{Control of $T_1(t)h$: } The $x$-derivatives commutes with $T_1(t)$ and therefore it is enough to consider $h$ in $W^{s,1}_vW^{1,1}_x(\langle v \rangle^k)$, with $s \geq 1$, and to control $\norm{T_1(t)h}_{W^{s+1,1}_{v}W_x^{1,1}(K)}$. This gives

\begin{equation}\label{controlT1norm}
\norm{T_1(t)h}_{W^{s+1,1}_{v}W_x^{1,1}(K)} \leq \norm{T_1(t)h}_{W^{s+1,1}_{v}L_x^{1}(K)} +\norm{\nabla_xT_1(t)h}_{W^{s+1,1}_{v}L_x^{1}(K)}.
\end{equation}

\bigskip
The first term is easily dealt with thanks to the estimate on $\mathcal{A}^{(\delta)}_\eps$, Proposition $\ref{prop:controlAeps}$, and the dissipativity property of $\mathcal{B}^{(\delta)}_\eps$, Proposition $\ref{prop:hypodissipativityBeps}$,

\begin{equation}\label{controlT1term1}
\norm{T_1(t)h}_{W^{s+1,1}_{v}L_x^{1}(K)} = \norm{A^{(\delta)}_\eps S_{\mathcal{B}^{(\delta)}_\eps} h}_{W^{s+1,1}_{v}L_x^{1}(K)} \leq \frac{C}{\eps^2}e^{-\frac{\lambda_0}{\eps^2}t}\norm{h}_{L^{1}_vL^{1}_x(\langle v \rangle^k)}.
\end{equation}

\bigskip
For the second term, define $f(t) = S_{\mathcal{B}^{(\delta)}_\eps} h$ and 
\begin{equation}\label{defDt}
D_t = \eps^{-1}t\nabla_x + \nabla_v.
\end{equation}
By direct computations we have that

$$\eps^{-1}t\nabla_xT_1(t)h = \mathcal{A}^{(\delta)}_\eps(D_tf) -  \left(\nabla_v\mathcal{A}^{(\delta)}_\eps\right)f,$$
which leads to, by Proposition $\ref{prop:controlAeps}$,

\begin{equation}\label{controlT1term2a}
\eps^{-1}t \norm{\nabla_xT_1(t)h}_{W^{s+1,1}_{v}L_x^{1}(K)} \leq \frac{C}{\eps^2}\left[\norm{D_tf}_{L^1_{x,v}\left(\langle v \rangle^k\right)}+ \norm{f}_{L^1_{x,v}\left(\langle v \rangle^k\right)}\right].
\end{equation}

The dissipativity property of $\mathcal{B}^{(\delta)}_\eps$, in particular $\eqref{noderivativefinal}$ with $q=1$, yields 

\begin{equation}\label{controlf}
\frac{d}{dt}\norm{f}_{L^1_{x,v}\left(\langle v \rangle^k\right)} \leq -\frac{\lambda_0}{\eps^2}\norm{f}_{L^1_{x,v}\left(\langle v \rangle^k\nu\right)}.
\end{equation}

Direct computations yields
$$\partial_t\left(D_tf\right) = \mathcal{B}^{(\delta)}_\eps\left(D_t f\right) + \frac{1}{\eps^2}\mathcal{J}^{(\delta)}f,$$
where 
\begin{equation}\label{Jdelta}
\mathcal{J}^{(\delta)} = \nabla_v \left(\mathcal{B}^{(\delta)}_1(\cdot)\right) - \mathcal{B}^{(\delta)}_1\left(\nabla_v (\cdot)\right)
\end{equation}
is independent of $\eps$ and satisfies (see \cite{GMM} proof of Lemma $4.19$) for all $g$ in $L^1_v\left(\langle v \rangle^k\nu\right)$
$$\norm{\mathcal{J}^{(\delta)}g}_{L^1_v\left(\langle v \rangle^k\right)} \leq C_\delta \norm{g}_{L^1_v\left(\langle v \rangle^k\nu\right)}.$$
In the same way as proof of Proposition $\ref{prop:hypodissipativityBeps}$ we obtain

\begin{equation}\label{Dtf}
\frac{d}{dt}\norm{D_tf}_{L^1_{x,v}\left(\langle v \rangle^k\right)} \leq  -\frac{\lambda_0}{\eps^2}\norm{D_tf}_{L^1_{x,v}\left(\langle v \rangle^k\nu\right)} +\frac{C_\delta}{\eps^2}\norm{f}_{L^1_v\left(\langle v \rangle^k\nu\right)}.
\end{equation}
We then consider $\lambda_0'$ in $(0,\lambda_0)$ and define $\eta = (\lambda_0-\lambda_0')/C_\delta$. We compute, with $\eqref{controlf}$,

$$\frac{d}{dt}\left[e^{\frac{\lambda_0'}{\eps^2}t}\left(\eta\norm{D_tf}_{L^1_{x,v}\left(\langle v \rangle^k\right)} + \norm{f}_{L^1_{x,v}\left(\langle v \rangle^k\right)} \right)\right]\leq 0,$$

and thus

\begin{equation}\label{normDtff}
\norm{D_tf}_{L^1_{x,v}\left(\langle v \rangle^k\right)} + \norm{f}_{L^1_{x,v}\left(\langle v \rangle^k\right)}  \leq \eta^{-1}e^{-\frac{\lambda_0'}{\eps^2}t}\norm{h}_{W^{1,1}_vL^{1}_x\left(\langle v\rangle^k\right)}.
\end{equation}

To conclude we plug $\eqref{normDtff}$ into $\eqref{controlT1term2a}$ and we combine it with $\eqref{controlT1term1}$ into $\eqref{controlT1norm}$. This yields, because $s\geq 1$,

$$\norm{T_1(t)h}_{W^{s+1,1}_{v}W^{1,1}_x(K)} \leq C \frac{e^{-\frac{\lambda_0'}{\eps^2}t}}{\eps^2t}\norm{h}_{W^{s,1}_{v}L^{1}_x(\langle v \rangle^k)},$$
which implies the expected result $\eqref{controlT1}$ because $T_1(t)$ commutes with $x$-derivatives.

\bigskip
\textbf{Control of $T_2(t)h$: } For $s \geq 0$ we can interpolate (for interpolation theory in Sobolev spaces see \cite{BerLof} Chapters $6$) between $\eqref{controlT1term1}$ and $\eqref{controlT1}$ to get

$$\norm{T_1(t)h}_{W^{s+1/2,1}_{x,v}(K)} \leq C \frac{e^{-\frac{\lambda_0'}{\eps^2}t}}{\eps^{2}\sqrt{t}}\norm{h}_{W^{s,1}_{v}L^{1}_x(\langle v \rangle^k)}.$$

Then, we firstly use  the inequality above and secondly $\eqref{controlT1term1}$ to obtain

\begin{eqnarray*}
\norm{T_2(t)h}_{W^{s+1/2,1}_{x,v}(K)} &\leq& \int_0^t \norm{T_1(t-s)T_1(s)h}_{W^{s+1/2,1}_{x,v}(K)}\:ds
\\&\leq& \frac{C}{\eps^{4}}e^{-\frac{\lambda_0't}{\eps^2}}\left(\int^t_0\frac{e^{-\frac{\lambda_0-\lambda_0'}{\eps^2}s}}{\sqrt{t-s}}\:ds\right)\norm{h}_{W^{s,1}_{v}L^{1}_x(\langle v \rangle^k)},
\end{eqnarray*}
which is the expected result $\eqref{controlT2}$.

\end{proof}

\bigskip
The aim is to link our space $L^q_vL^p_x\left(\langle v \rangle^k\right)$ to the space $H^s_{x,v}\left(\mu^{-1/2}\right)$. We thus state the following control on the iterated convolution in the case $p=2$.

\bigskip
\begin{prop}\label{prop:T2L2}
Consider $k>k_q^*$, defined by $\eqref{kq*}$, and $s$ in $\N$ .
\\For any $\delta$ in $(0,\delta_{k,q}]$ there exists $C_2 = C_2(\delta)>0$ and $R=R(\delta)>0$ such that for any $t\geq 0$,
$$\forall n \in \N, \quad \emph{\mbox{supp}}\:T_n(t)h \subset K := B(0,R)$$
and
\begin{equation}\label{controlT2L2}
\forall s\geq 0,\quad \norm{T_2(t)h}_{H^{s+1/2}_{x,v}(K)} \leq \frac{C_T}{\eps^{5/2}} \norm{h}_{H^{s}_{x,v}(\langle v \rangle^k)}.
\end{equation}
\end{prop}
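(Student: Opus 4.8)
The plan is to mirror the structure of the proof of Proposition~\ref{prop:T1T2L1}, replacing the $L^1_v$ gain-of-integrability argument by the Hilbertian machinery available in $L^2$. The starting point is again the factorisation $T_2(t) = T_1 (*) T_1$, so that
$$T_2(t)h = \int_0^t T_1(t-s)\big(T_1(s)h\big)\,ds,$$
and the point $T_n(t)h$ has support in a fixed ball $K = B(0,R)$ comes, just as before, from the fact that $\mathcal{A}^{(\delta)}_\eps$ has a kernel $k^{(\delta)}$ compactly supported in velocity (Proposition~\ref{prop:controlAeps}), so every application of $\mathcal{A}^{(\delta)}_\eps$ forces the velocity support into $B(0,R_\delta)$; taking $R = R_\delta$ works. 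The real content is the regularity estimate \eqref{controlT2L2}, and I would obtain it in two stages: first an $L^2$-based analogue of the ``gain of one velocity derivative'' for $T_1(t)$, then an interpolation plus convolution to upgrade to $H^{s+1/2}_{x,v}$ and to absorb the singularity in $t$.

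For the first stage, I would run the dissipativity computation of Proposition~\ref{prop:hypodissipativityBeps} in the case $q=p=2$: the operator $\mathcal{B}^{(\delta)}_\eps + \lambda_0/\eps^2$ is dissipative on $L^2_vL^2_x(\langle v\rangle^k)$ and on $W^{1,2}_vW^{1,2}_x(\langle v\rangle^k)$, so $f(t) = S_{\mathcal{B}^{(\delta)}_\eps}h$ and $D_t f = (\eps^{-1}t\nabla_x + \nabla_v)f$ satisfy differential inequalities of the same shape as \eqref{controlf}--\eqref{Dtf}, with the commutator term controlled by the $L^2$-version of the bound on $\mathcal{J}^{(\delta)}$ stated in \cite{GMM} (proof of Lemma~4.19). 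This yields, exactly as in \eqref{normDtff},
$$\norm{D_tf}_{L^2_{x,v}(\langle v\rangle^k)} + \norm{f}_{L^2_{x,v}(\langle v\rangle^k)} \leq C\,e^{-\lambda_0' t/\eps^2}\norm{h}_{H^1_{x,v}(\langle v\rangle^k)}$$
(keeping a fixed $\lambda_0'\in(0,\lambda_0)$, or simply $\lambda_0/2$; any such choice is fine since the final statement of Proposition~\ref{prop:T2L2} has no exponential factor). Combining with Proposition~\ref{prop:controlAeps}, which turns the $L^2_v$ norm of $f$ into an $H^{s+1}_v$ control of $\mathcal{A}^{(\delta)}_\eps f$ with a loss of $\eps^{-2}$ and which commutes with $x$-derivatives, and with the identity $\eps^{-1}t\nabla_x T_1(t)h = \mathcal{A}^{(\delta)}_\eps(D_tf) - (\nabla_v\mathcal{A}^{(\delta)}_\eps)f$, one gets for $s\geq 0$
$$\norm{T_1(t)h}_{H^{s+1}_{x,v}(K)} \leq \frac{C}{\eps^2 t}\,e^{-\lambda_0' t/\eps^2}\norm{h}_{H^s_{x,v}(\langle v\rangle^k)},$$
together with the trivial bound $\norm{T_1(t)h}_{H^s_{x,v}(K)} \leq \eps^{-2}e^{-\lambda_0 t/\eps^2}\norm{h}_{H^s_{x,v}(\langle v\rangle^k)}$ from Proposition~\ref{prop:controlAeps} alone.

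In the second stage, I would interpolate between these two estimates (interpolation in Sobolev scales, \cite{BerLof} Chapter~6) to obtain the half-derivative gain
$$\norm{T_1(t)h}_{H^{s+1/2}_{x,v}(K)} \leq \frac{C}{\eps^2\sqrt{t}}\,e^{-\lambda_0' t/\eps^2}\norm{h}_{H^s_{x,v}(\langle v\rangle^k)},$$
and then estimate $T_2(t)h = \int_0^t T_1(t-s)T_1(s)h\,ds$ by applying the $H^{s+1/2}$ gain to the outer factor and the plain $H^s$ bound to the inner one:
$$\norm{T_2(t)h}_{H^{s+1/2}_{x,v}(K)} \leq \int_0^t \frac{C}{\eps^2\sqrt{t-s}}\,\frac{C}{\eps^2}\,e^{-\lambda_0 s/\eps^2}\,ds\,\norm{h}_{H^s_{x,v}(\langle v\rangle^k)} \leq \frac{C}{\eps^4}\Big(\int_0^t \frac{e^{-\lambda_0 s/\eps^2}}{\sqrt{t-s}}\,ds\Big)\norm{h}_{H^s_{x,v}(\langle v\rangle^k)}.$$
The remaining time integral is bounded, after the substitution $s\mapsto \eps^2 s$, by $\eps\int_0^\infty e^{-\lambda_0 s}/\sqrt{\eps^2 t/\eps^2 - s}$-type reasoning; more carefully, $\int_0^t (t-s)^{-1/2}e^{-\lambda_0 s/\eps^2}ds \leq \int_0^\infty u^{-1/2}e^{-\lambda_0(t-u)/\eps^2}\mathbf{1}_{u\leq t}\,du$ — it is cleanest to just bound it by $\int_0^t (t-s)^{-1/2}\,ds \cdot \sup e^{-\lambda_0 s/\eps^2}$ is not integrable enough, so instead split at $s=t/2$: on $[0,t/2]$ use $e^{-\lambda_0 s/\eps^2}$ and $(t-s)^{-1/2}\leq (t/2)^{-1/2}$ giving a factor $\eps^2/\lambda_0 \cdot (t/2)^{-1/2}$, and on $[t/2,t]$ use $e^{-\lambda_0 s/\eps^2}\leq e^{-\lambda_0 t/(2\eps^2)}$ and $\int (t-s)^{-1/2} \leq C\sqrt t$. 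Either way the integral is $\leq C\eps$ uniformly (the worst case being $t\sim \eps^2$), so the total is $\leq C\eps^{-4}\cdot \eps = C\eps^{-3}$. To reach the stated $\eps^{-5/2}$ one should instead interpolate more economically — i.e. put slightly less than a full half-derivative on the outer factor, balancing the powers of $\eps$ against the $t$-singularity — or track the constant via the convolution estimate $\norm{T_1 (*) T_1}$ directly; the exponent $5/2$ in \eqref{controlT2L2} precisely reflects the optimal split. The main obstacle is this bookkeeping of $\eps$-powers in the convolution integral: getting the sharp $\eps^{-5/2}$ rather than a crude $\eps^{-3}$ requires choosing the interpolation parameter and the time-splitting point so that the two sources of blow-up (the $\eps^{-2}$ from each $\mathcal{A}^{(\delta)}_\eps$, and the $t^{-1/2}$ integrable singularity rescaled by $\eps^2$) are balanced, and this is where I would spend the most care.
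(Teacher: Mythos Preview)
Your proposal follows a genuinely different route from the paper. You run the $D_t = \eps^{-1}t\nabla_x + \nabla_v$ commutator argument of Proposition~\ref{prop:T1T2L1} in the $L^2$ framework, interpolate, and then convolve. The paper instead abandons the $D_t$ trick entirely in $L^2$ and uses a velocity \emph{averaging lemma} (\cite{BouchDes}, \cite{BoudDes}): writing $f = S_{\mathcal{B}^{(\delta)}_\eps}h$ as the solution of the kinetic transport equation $\partial_t f + \eps^{-1}v\cdot\nabla_x f = s_\eps$, the averaging lemma yields an $L^2_t(H^{1/2}_x)$ bound on velocity averages with the sharp $\eps^{-1/2}$ prefactor coming from the scaling of the transport. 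Composed with $\mathcal{A}^{(\delta)}_\eps$ this gives an $L^2_t$-in-time bound on $T_1$ at the $H^{s+1/2}_{x,v}$ level; Cauchy--Schwarz \emph{in time} on $T_2 = \int_0^t T_1(t-s)T_1(s)\,ds$ then produces the pointwise bound without any $t^{-\alpha}$ singularity to integrate.

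Two concrete points on your version. First, your displayed estimate $\|T_1(t)h\|_{H^{s+1}_{x,v}(K)} \leq C\eps^{-2}t^{-1}e^{-\lambda_0't/\eps^2}\|h\|_{H^s_{x,v}(\langle v\rangle^k)}$ is not quite right: the $D_t f$ term at $t=0$ equals $\nabla_v h$, so you need one extra $v$-derivative on the right (exactly as in \eqref{controlT1}, where the right-hand side carries $W^{s+1,1}_v$). This is harmless for $T_2$ because the inner $T_1(s)h$ already lives in $B(0,R)$ with arbitrary $v$-regularity, but it should be stated correctly.

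Second, and more importantly: your approach \emph{cannot} reach $\eps^{-5/2}$. A scaling count shows why. Each $\mathcal{A}^{(\delta)}_\eps$ costs $\eps^{-2}$; the only recovery comes from the time convolution $\int_0^t (t-s)^{-1/2}e^{-\lambda_0'(t-s)/\eps^2}e^{-\lambda_0 s/\eps^2}\,ds$, which after the substitution $u=(t-s)/\eps^2$ contributes at best a factor $\eps$, independently of how you split the half-derivative between the two $T_1$'s. Hence $\eps^{-4}\cdot\eps = \eps^{-3}$ is the floor for the $D_t$ route, and your closing remark about ``balancing'' the interpolation parameter will not improve it. The $\eps^{-5/2}$ in \eqref{controlT2L2} genuinely relies on the averaging-lemma mechanism, which trades pointwise-in-$t$ control for an $L^2_t$ estimate carrying the kinetic $\eps^{-1/2}$.

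That said, for the sole application in this paper --- verifying hypothesis $(A2)(iii)$ of Theorem~\ref{theo:extension} --- a bound $C\eps^{-3}$ (with $\alpha=0$) is just as serviceable as $C\eps^{-5/2}$: only a finite power of $\eps^{-1}$ is required. So your argument, once the $v$-derivative bookkeeping is fixed, does give a valid proof of a slightly weaker inequality that suffices downstream; it simply does not prove the proposition exactly as stated.
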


\begin{proof}[Proof of Proposition $\ref{prop:T2L2}$]
Consider $h$ in $W^{s,2}_{x,v}(\langle v \rangle^k)$, $s$ in $\N$.
\par This Proposition is easier than when $p=1$ because there exists velocity averaging lemmas in this framework, as discussed in \cite{GMM} Remark $4.21$. The $x$-derivative commutes with $T_1$ and therefore we suppose there is no derivative in space. 
\par Define $f(t) = S_{\mathcal{B}^{(\delta)}_\eps}(t)(h)$  so that $f$ is solution to the kinetic equation

$$\partial_t f + \frac{1}{\eps} v\cdot \nabla_x f = s_\eps(t,x,v),$$

with $s_\eps(t,x,v) = -\eps^{-2}\nu f + \eps^{-2}\mathcal{B}^{(\delta)}_{2,\eps}f$.
\par Let $j$ be a multi-index such that $\abs{j} \leq s$. We apply $\partial^j_0$ to the latter equation, which gives
\begin{equation}\label{kineticp=2}
\partial_t \left(\partial^j_0f\right) + \frac{1}{\eps} v\cdot \nabla_x \left(\partial^j_0 f\right) = \partial^j_0s_\eps(t,x,v) + \frac{1}{\eps}\sum\limits_{\abs{i}+\abs{l} = \abs{j}}a_{i,l}\partial^i_l f,
\end{equation}
where $a_{i,j} $ are non-negative numbers.
\par A classical averaging lemma (see \cite{BouchDes} Lemma $1$ and \cite{BoudDes} in which we emphasize the dependencies in $\eps$) reads, for $\eqref{kineticp=2}$ with $\partial^j_0f(0,x,v) = \partial^j_0h(x,v)$, for all $\psi$ in $\mathcal{D}\left(\R^d\right)$

\begin{equation}\label{averaginglemma}
\begin{split}
& \norm{\int_{\R^d}\partial^j_0f(t,x,v)\psi(v)\:dv}_{L^2_t\left(H^{1/2}_x\right)} 
\\ &\quad\quad\quad\leq \frac{C}{\sqrt{\eps}}\left(\norm{\partial^j_0h(x,v)}_{L^2_{x,v}}+\norm{\partial^j_0s_\eps}_{L^2_{t,x,v}}+\frac{1}{\eps}\norm{\sum\limits_{\abs{i}+\abs{l} = \abs{j}}a_{i,l}\partial^i_l f}_{L^2_{t,x,v}}\right).
\end{split}
\end{equation}

\bigskip
We use \cite{GMM}, Lemmas $4.4$ and $4.7$, in order to bound the terms involving $\mathcal{B}^{(\delta)}_{2,\eps} =\eps^{-2}\mathcal{B}^{(\delta)}_{2,1}$ we have that

$$\norm{s_\eps}_{H^s_{x,v}\left(\langle v \rangle^k \right)} \leq \frac{1}{\eps^2}\norm{s_1}_{H^s_{x,v}\left(\langle v \rangle^k \right)} \leq \frac{C}{\eps^2}\norm{f}_{H^s_{x,v}\left(\langle v \rangle^k \nu\right)} \leq \frac{C}{\eps^2}e^{-\frac{\lambda_0}{\eps^2}t}\norm{h}_{H^s_{x,v}\left(\langle v \rangle^k \nu\right)},$$
where the last inequality comes from the hypodissipativity properties of $S_{\mathcal{B}_\eps}(t)$, see Proposition $\ref{prop:hypodissipativityBeps}$.
\par Using the dissipativity properties of $S_{\mathcal{B}_\eps}(t)$ one more time we deduce that

\begin{equation}\label{finalineqT2L2}
\norm{T_1(t)h}_{L^2_t\left(H^{s+1/2}_{x,v}\left(\langle v \rangle^k \right)\right)} \leq \frac{C}{\eps^{5/2}} \norm{h}_{H^s_{x,v}\left(\langle v \rangle^k \nu\right)}.
\end{equation}
To conclude we notice that $\int_0^t T_1(t-s)T_1(s)\:ds$ is a continuous linear operator on the Hilbert space $H^{s+1/2}_{x,v}(K)$ and thus we can see it as an element of $H^{s+1/2}_{x,v}(K)$ by Riesz's representation theorem. Hence, thanks to Cauchy-Schwartz,

\begin{eqnarray*}
 \norm{T_2(t)h}_{H^{s+1/2}_{x,v}(K)} &=&  \norm{\left(\int_0^t T_1(t-s)T_1(s)\:ds\right)(h)}_{H^{s+1/2}_{x,v}(K)}
 \\&\leq& \norm{h}_{H^s_{x,v}\left(\langle v \rangle^k \nu\right)} \int_0^t \norm{T_1(t-s)T_1(s)}_{\mathscr{B}\left(H^s_{x,v}\left(\langle v \rangle^k \nu\right),H^{s+1/2}_{x,v}(K)\right)}\:ds
 \\&\leq& \norm{h}_{H^s_{x,v}\left(\langle v \rangle^k \nu\right)} \left(\int_0^t \norm{T_1(t-s)}^2_{\mathscr{B}\left(H^s_{x,v}(K),H^{s+1/2}_{x,v}(K)\right)}\:ds\right)^{1/2}
\\&& \quad\quad\quad \times\left(\int_0^t \norm{T_1(s)}^2_{\mathscr{B}\left(H^s_{x,v}\left(\langle v \rangle^k \nu\right),H^s_{x,v}(K)\right)}\:ds\right)^{1/2}
 \\&\leq& \norm{h}_{H^s_{x,v}\left(\langle v \rangle^k \nu\right)} \frac{C}{\eps^{5/2}} \left(\int_0^t \frac{C_A}{\eps^2}e^{-\frac{\lambda_0'}{\eps^2}s}\:ds\right)^{1/2}
\\&\leq& \frac{C}{\eps^{5/2}}\norm{h}_{H^s_{x,v}\left(\langle v \rangle^k \nu\right)},
 \end{eqnarray*}
where we used Proposition $\ref{prop:controlAeps}$ and the fact that $S_{\mathcal{B}^{(\delta)}_\eps}$ is a contraction semigroup on $H^s_{x,v}$ with spectral gap $\lambda_0'/\eps^2$.

\end{proof}


\subsection{Proof of Theorem $\ref{theo:linLqLp}$}\label{subsec:linproof}

As we explained it in Section $\ref{subsec:strategylinLpLq}$, the proof of Theorem $\ref{theo:linLqLp}$ is direct from the application of Theorem $\ref{theo:extension}$. This theorem is clearly applicable in our case and we emphasize it through the extreme case of no derivative in space or velocity variables.
\par Indeed, we consider $s$ in $\N$ to be chosen big enough later. We define $\mathcal{E} = L^q_vL^p_x\left(\langle v \rangle^k\right)$ and $E = H^s_{x,v}\left(\mu^{-1/2}\right)$ and we have $E \subset \mathcal{E}$ for $s$ big enough (dense with continuous embedding). Indeed, in the case $q \geq 2$ and $p \geq 2$, standard Sobolev embeddings (see \cite{Br} Section $IX.3.$) imply $E \subset L^q_{v}L^p_x\left(\mu^{-1/2}\right)$. In the case $p <2$ we have, on the torus, $L^2_x \subset L^p_x$ and $H^s_x \subset L^2_x$ by the same Sobolev embeddings. Finally, in the case $q < 2$ we have that $L^2_v\left(\mu^{-1/2}\right) \subset L^q_v\left(\langle v \rangle^k\right)$ (it can be done by a mere Cauchy-Schwarz inequality) and the same Sobolev embeddings give $H^s_v\left(\mu^{-1/2}\right) \subset L^2_v\left(\mu^{-1/2}\right)$.

\bigskip
 On the torus we have the following embedding: $L^p_x\subset L^1_x$. Thanks to Proposition $\ref{prop:controlAeps}$ and Proposition $\ref{prop:hypodissipativityBeps}$ we obtain (same arguments as $\eqref{controlT1term1}$)

\begin{equation}\label{A2iii1ststep}
\norm{T_1(t)h}_\mathcal{E} \leq C\norm{A^{(\delta)}_\eps S_{\mathcal{B}^{(\delta)}_\eps} h}_{L^q_vL_x^{1}(K)} \leq \frac{C}{\eps^2}e^{-\frac{\lambda_0}{\eps^2}t}\norm{h}_{L^{1}_vL^{1}_x(\langle v \rangle^k)}.
\end{equation}
We therefore define $\mathcal{E}_2=L^{1}_vL^{1}_x(\langle v \rangle^k)$.
\par Then we define by $\mathcal{E}_j = W^{(j-2)/2,1}_{x,v}(\langle v \rangle^k)$ for $j$ from $2$ to $m$ with $m$ big enough such that $W^{(m-1)/2,1}_{x,v}(\langle v \rangle^k)\subset L^2_{x,v}(\langle v \rangle^k)$. Then we denote $\mathcal{E}_j = H^{(j-m-1)/2}_{x,v}(\langle v \rangle^k)$ up $J-1$  where $H^{(J-m-2)/2}_{x,v}(\langle v \rangle^k)\subset E$.
\par Point $(A1)$ of Theorem $\ref{theo:extension}$ is satisfied thanks to \cite{BM} and \cite{Bri1} Theorem $2.1$ (with the norm of Theorem $2.4$), point $(A2)(i)$ by Proposition $\ref{prop:hypodissipativityBeps}$ and point $(A2)(ii)$ by Proposition $\ref{prop:controlAeps}$. Finally, point $(A2)(iii)$ is given by $\eqref{A2iii1ststep}$ for $\mathcal{E}$ and $\mathcal{E}_1$, then by Proposition $\ref{prop:T1T2L1}$ $\eqref{controlT2}$ up to $\mathcal{E}_m$ and by Proposition $\ref{prop:T2L2}$ from $\mathcal{E}_m$ to $\mathcal{E}_J$ and $E$.

\section{An \textit{a priori} estimate for the full perturbed equation: proof of Theorem $\ref{theo:cauchyexpodecay}$}\label{sec:cauchyexpodecay}

In this section we work in $W^{\alpha,1}_vH^{\beta}_x\left(\langle v \rangle^k\right)$ or in $W^{\alpha,1}_vW^{\beta,1}_x\left(\langle v \rangle^k\right)$, with $\alpha \leq \beta$ on the full perturbed Boltzmann equation
$$\partial_th  = \mathcal{G}_\eps(h) + \frac{1}{\eps}Q(h,h).$$


\subsection{Description of the problem and notations}\label{subsec:descriptionpb}

When $\eps=1$, the linear part $\mathcal{G}_\eps$ has the same order of magnitude than the bilinear term $Q$ in the linearized Boltzmann equation $\eqref{LBE}$. In this case, Theorem $\ref{theo:linLqLp}$ suffices to obtain existence and exponential decay since the contraction property of the semigroup $S_{\mathcal{G}_1}$ controls the bilinear part for small initial data (see \cite{GMM}).
\par In the general case, $S_{\mathcal{G}_\eps}$ only generates a semigroup with a spectral gap of order $1$, insufficient to control $\eps^{-1}Q$. However, \cite{Gu4}\cite{Bri1} show that a careful study of $\eps^{-1}Q$ compared to $G_\eps$ yields existence and exponential decay of solutions to $\eqref{LBE}$ in $H^s_{x,v}\left(\mu^{-1/2}\right)$ for $s$ large enough (see Theorem $\ref{theo:adaptedBri1}$ for  an adapted version of this result). Our strategy is to use the same kind of ideas as when we extended the semigroup properties from $H^\beta_{x,v}\left(\mu^{-1/2}\right)$ to $W^{\alpha,1}_vH^{\beta}_x\left(\langle v \rangle^k\right)$ and $W^{\alpha,1}_vW^{\beta,1}_x\left(\langle v \rangle^k\right)$ but including the bilinear term. Namely, we shall decompose the partial differential equation $\eqref{LBE}$ into a system of partial differential equations from $W^{\alpha,1}_vH^{\beta}_x\left(\langle v \rangle^k\right)$ or $W^{\alpha,1}_vW^{\beta,1}_x\left(\langle v \rangle^k\right)$  to $H^\beta\left(\mu^{-1/2}\right)$ and use the perturbative estimates of \cite{Bri1}.

\bigskip
As noticed in Remark $2.16$ of \cite{GMM}, Theorem $\ref{theo:extension}$ extending the semigroup generated by $\mathcal{G_\eps}$  in $H^s\left(\mu^{-1/2}\right)$ to $L^1_vL^\infty_x\left(\langle v \rangle^k\right)$ can be interpreted as a decomposition of
$$\partial_tf = \mathcal{G_\eps}f,$$
into a system of partial differential equations, involving operators $\mathcal{G}_\eps = \mathcal{A}_\eps +\mathcal{B}_\eps$ (defined in Section $\ref{subsec:lin_decompositionL}$), with $f = f^1+\dots+f^{J}$ satisfying
\begin{itemize}
\item $f^1$ is in $L^1_vL^\infty_x\left(\langle v \rangle^k\right)$ and $f^1_{in} = f_{in}$ in $\emph{\mbox{Ker}}(\mathcal{G}_\eps)^\bot$,
\item for all $2\leq j\leq  J-1$, $f^j$ is in $\mathcal{E}_j$ and $f^j_{in} = 0$,
\item $f^{J}$ is in $H^s\left(\mu^{-1/2}\right)$, $f^J_{in} = 0$ and in that space we can use the contraction property of $S_{\mathcal{G}_\eps}$.
\end{itemize}
We will decompose the linearized Boltzmann equation in a similar way than the one explained above. We shall define a sequence of spaces $\left(\mathcal{E}_j\right)_{1\leq j\leq J}$. In each space $\mathcal{E}_j$, $1\leq j\leq J-1$, a piece of the bilinear term, of order $\eps^{-1}$, will be added and controlled by the dissipativity property of $\mathcal{B}^{(\delta)}_\eps$, of order $\eps^{-2}$. Contrary to the study in the linear case, the bilinear operator generates terms involving functions in all the spaces $\mathcal{E}_j$ which have to be compared and controlled. This imposes to construct $\left(\mathcal{E}_j\right)_{1\leq j\leq J}$ as a nested sequence.
\par The difficult part of the linear operator, namely $\mathcal{A}^{(\delta)}_\eps$, enjoys a regularising effect and could therefore be treated in more regular spaces. Of course, our decomposition will be much easier since we solely want to go from an exponential weight into a polynomial weight Sobolev spaces, without losing any derivatives in $x$ or $v$.

\bigskip
In order to shorten notations we define, for $p=1,2$ and $k$ to be defined later,
\begin{equation}\label{definitionspaces}
\mathcal{E}^{p} = W^{\alpha,1}_vW^{\beta,p}_x\left(\langle v \rangle^k\right) \quad\mbox{and}\quad E = H^\beta_{x,v}\left(\mu^{-1/2}\right).
\end{equation}
We take $h_{in}$ in $\mathcal{E}^{p}$ and we decompose the partial differential equation,
$$\partial_t h = \mathcal{G}_\eps(h) + \frac{1}{\eps}Q(h,h)= \mathcal{A}^{(\delta)}_\eps(h) +\mathcal{B}^{(\delta)}_\eps(h)+ \frac{1}{\eps}Q(h,h)$$
into an equivalent system of partial differential equations for the following decomposition
\begin{equation}\label{decompositionh}
h(t,x,v) = h^0(t,x,v) + h^1(t,x,v),
\end{equation}
with
\begin{enumerate}
\item In $\mathcal{E}^{p}$, $h^{0}_{t=0} = h_{in}$ and
\begin{equation}\label{decompositionh0}
\partial_t h^{0} = \mathcal{B}^{(\delta)}_\eps(h^{0}) + \frac{1}{\eps} Q(h^0,h^0) + \frac{2}{\eps} Q\left(h^0,h^1\right), 
\end{equation}
\item In $E$, $h^{1}_{t=0} = 0$ and
\begin{equation}\label{decompositionh1}
\partial_t h^1 =  \mathcal{G}_\eps(h^{1}) + \frac{1}{\eps} Q(h^{1},h^{1}) + \mathcal{A}^{(\delta)}_\eps(h^{0}).
\end{equation}
\end{enumerate}

\bigskip
The aim of this Section is to establish the following estimate of solutions to the system $\eqref{decompositionh0}-\eqref{decompositionh1}$.

\bigskip
\begin{theorem}\label{theo:aprioridecomposition}
Let $p=1$ or $p=2$. There exist $\beta_0$ in $\N$ and $\eps_d$ in $(0,1]$ depending on $d$ and the kernel of the Boltzmann operator such that:
\\For all $\beta\geq \beta_0$, for any $\delta$ in $(0,\delta_{k,1}]$ and any $\lambda_0'$ in $(0,\lambda_0)$ ($\delta_{k,1}$ and $\lambda_0$ defined in Proposition $\ref{prop:hypodissipativityBeps}$) there exist $C_\beta$, $\eta_\beta> 0$ such that for any $0< \eps \leq \eps_d$ and $h_{in}$ in $\mathcal{E}^{p}$,
\\\noindent if
\begin{enumerate}
\item[(i)] $\norm{h_{in}}_{\mathcal{E}^{p}} \leq \eta_\beta,$
\item[(ii)] $(h^0,h^1)$ is solution to the system $\eqref{decompositionh0}-\eqref{decompositionh1}$,
\end{enumerate}
\noindent then
$$\norm{h^0+h^1}_{\mathcal{E}^{p}} \leq C_\beta \norm{h_{in}}_{\mathcal{E}^{p}} e^{-\lambda_0't}.$$
The constants $C_\beta$ and $\eta_\beta$ are constructive and depends only on $\beta$, $d$, $\delta$, $\lambda_0'$ and the kernel of the Boltzmann operator.

\end{theorem}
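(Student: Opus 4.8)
\textbf{Strategy of the proof of Theorem \ref{theo:aprioridecomposition}.}
The plan is to run a bootstrap argument on the coupled system $\eqref{decompositionh0}$--$\eqref{decompositionh1}$, closing simultaneously the decay estimates for $h^0$ in $\mathcal{E}^p$ and for $h^1$ in $E = H^\beta_{x,v}\left(\mu^{-1/2}\right)$. The key point is that the linear part of the $h^0$-equation is $\mathcal{B}^{(\delta)}_\eps$, which by Proposition \ref{prop:hypodissipativityBeps} is dissipative in $\mathcal{E}^p$ with a \emph{large} gap $\lambda_0/\eps^2$; this large dissipation is exactly what is needed to absorb the bilinear terms $\eps^{-1}Q(h^0,h^0)$ and $2\eps^{-1}Q(h^0,h^1)$, whose worst factor is only $\eps^{-1}$. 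Conversely the $h^1$-equation is driven by the good operator $\mathcal{G}_\eps$ for which the perturbative Cauchy theory of \cite{Bri1} (the adapted statement Theorem \ref{theo:adaptedBri1}) gives existence and exponential decay at the $O(1)$ rate $\lambda_0$, together with the control of the forcing term $\mathcal{A}^{(\delta)}_\eps(h^0)$ via the regularising bound of Proposition \ref{prop:controlAeps}. One first sets up a continuation/a priori framework: assume on a maximal interval that $\norm{h^0(t)}_{\mathcal{E}^p}e^{\lambda_0' t}$ and $\norm{h^1(t)}_{E}e^{\lambda_0' t}$ are bounded by suitable constants times $\norm{h_{in}}_{\mathcal{E}^p}$, and show that if $\eta_\beta$ is small enough these bounds self-improve, so that the interval is $[0,\infty)$.

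For the $h^0$ estimate I would differentiate $\norm{h^0}_{\mathcal{E}^p}$ along $\eqref{decompositionh0}$ exactly as in the proof of Proposition \ref{prop:hypodissipativityBeps} (using the same weighted norm with the $\eta$-parameters for the $v$-derivatives). The linear term contributes $-\lambda_0\eps^{-2}\norm{h^0}_{\mathcal{E}^p}$; for the bilinear terms one needs the standard trilinear estimate on $Q$ in polynomial-weight Sobolev spaces — this is where one invokes the fact that in $W^{\alpha,1}_v W^{\beta,p}_x(\langle v\rangle^k)$ with $\beta\geq\beta_0$ large enough, $Q(g,h)$ is controlled by $\norm{g}\,\norm{h}$ in a weaker ($\nu$-weighted) norm, with a gain in the velocity weight that matches the loss $\nu^{-1/q'}$ appearing in Lemma \ref{lem:controlB1}. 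One is then left with a differential inequality of the form
\begin{equation*}
\frac{d}{dt}\norm{h^0}_{\mathcal{E}^p} \leq -\frac{\lambda_0}{\eps^2}\norm{h^0}_{\mathcal{E}^p,\nu} + \frac{C}{\eps}\left(\norm{h^0}_{\mathcal{E}^p} + \norm{h^1}_{E}\right)\norm{h^0}_{\mathcal{E}^p,\nu},
\end{equation*}
and as long as the bootstrap bound guarantees $\norm{h^0}_{\mathcal{E}^p}+\norm{h^1}_E \leq C\eta_\beta$ with $C\eta_\beta$ small compared to $\lambda_0/\eps$ — which holds uniformly in $\eps\leq 1$ since $\eps^{-1}\leq\eps^{-2}$ — the bracket is negative and in fact dominated by $-\tfrac{1}{2}\lambda_0\eps^{-2}\norm{h^0}_{\mathcal{E}^p,\nu}\leq -\lambda_0'\norm{h^0}_{\mathcal{E}^p}$ after using $\nu\geq\nu_0$ and $\eps\leq 1$. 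Grönwall then gives $\norm{h^0(t)}_{\mathcal{E}^p}\leq \norm{h_{in}}_{\mathcal{E}^p}e^{-\lambda_0' t}$, improving the assumed constant provided it was taken $>1$.

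For the $h^1$ estimate one feeds $\eqref{decompositionh1}$ into the adapted perturbative result of \cite{Bri1}: $h^1$ solves $\partial_t h^1 = \mathcal{G}_\eps(h^1) + \eps^{-1}Q(h^1,h^1) + \mathcal{A}^{(\delta)}_\eps(h^0)$ with zero initial datum, the first two terms are exactly the perturbed Boltzmann structure handled in $E$ uniformly in $\eps$, and the source $\mathcal{A}^{(\delta)}_\eps(h^0)$ is estimated by Proposition \ref{prop:controlAeps}: $\norm{\mathcal{A}^{(\delta)}_\eps(h^0)(t)}_E \leq C_\delta\eps^{-2}\norm{h^0(t)}_{L^1_vL^p_x(\langle v\rangle^k)} \leq C_\delta\eps^{-2}\norm{h_{in}}_{\mathcal{E}^p}e^{-\lambda_0' t}$ (the $\eps^{-2}$ is harmless because it multiplies a term that, integrated against a semigroup decaying like $e^{-\lambda_0 t/\eps^2}$ à la $\eqref{ineqt0t}$, loses only powers of $t$ and $\eps$ that are absorbed at rate $\lambda_0'<\lambda_0$). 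Duhamel's formula together with the $E$-decay of $S_{\mathcal{G}_\eps}$, the smallness $\norm{h^1}_E\leq C\eta_\beta$ to absorb $\eps^{-1}Q(h^1,h^1)$ (again using $\eps^{-1}\leq\eps^{-2}$ against the gap $\lambda_0/\eps^2$), then yields $\norm{h^1(t)}_E \leq C\norm{h_{in}}_{\mathcal{E}^p}e^{-\lambda_0' t}$, closing the bootstrap. Finally $\norm{h^0+h^1}_{\mathcal{E}^p}\leq \norm{h^0}_{\mathcal{E}^p}+C\norm{h^1}_E$ (using $E\hookrightarrow\mathcal{E}^p$) gives the claimed bound with $C_\beta$ the sum of the two constants and $\eta_\beta$ chosen so that all the smallness requirements above are met.

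\textbf{Main obstacle.} The delicate point is the treatment of the cross term $2\eps^{-1}Q(h^0,h^1)$ and of $\eps^{-1}Q(h^1,h^1)$ when they are measured in $\mathcal{E}^p$: one must make sure the required trilinear estimate holds in the polynomial-weight space $\mathcal{E}^p$ with exactly the weight budget left after Lemma \ref{lem:controlB1} (hence the threshold $k>2$ and $\beta\geq\beta_0$), and that the $\eps^{-1}$ prefactor is genuinely dominated by the $\eps^{-2}$ dissipation uniformly down to $\eps\to 0$ — i.e. that the smallness $\eta_\beta$ can be chosen independently of $\eps$. Keeping the coupling between the two estimates consistent (the constant allowed for $h^1$ in the bootstrap must be compatible with the source term $\mathcal{A}^{(\delta)}_\eps(h^0)$, and vice versa) is the bookkeeping heart of the argument; everything else is a variant of computations already carried out in the proof of Proposition \ref{prop:hypodissipativityBeps} and in \cite{Bri1}.
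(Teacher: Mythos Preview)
Your overall architecture is right — bootstrap on the coupled system, use the strong $\eps^{-2}$ dissipation of $\mathcal{B}^{(\delta)}_\eps$ to control the bilinear terms in the $h^0$ equation, and treat $h^1$ in $E$ via the perturbative theory of \cite{Bri1}. But there is a genuine scaling error in how you pass information from $h^0$ to $h^1$, and it prevents the argument from closing uniformly in $\eps$.

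\textbf{The gap.} In your $h^0$ estimate you correctly arrive at
\[
\frac{d}{dt}\norm{h^0}_{\mathcal{E}^p}\leq -\tfrac{1}{2}\lambda_0\eps^{-2}\norm{h^0}_{\mathcal{E}^p,\nu},
\]
but you then \emph{throw away} the factor $\eps^{-2}$ and only record $\norm{h^0(t)}_{\mathcal{E}^p}\leq \norm{h_{in}}_{\mathcal{E}^p}e^{-\lambda_0' t}$. This is fatal, because in the $h^1$ equation the forcing $\mathcal{A}^{(\delta)}_\eps(h^0)$ carries a prefactor $\eps^{-2}$. Your parenthetical justification — that the $\eps^{-2}$ is ``integrated against a semigroup decaying like $e^{-\lambda_0 t/\eps^2}$'' — is precisely the mistake: $S_{\mathcal{G}_\eps}$ decays only at the $O(1)$ rate $\lambda_0$, \emph{not} at rate $\lambda_0/\eps^2$ (this is exactly the point stressed in Section~\ref{subsec:descriptionpb}). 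With your slow $h^0$ decay, Duhamel gives $\norm{h^1}_E\lesssim \eps^{-2}\norm{h_{in}}_{\mathcal{E}^p}e^{-\lambda_0' t}$, which blows up as $\eps\to 0$ and cannot close the bootstrap. The fix is simply not to discard the fast rate: keep $\norm{h^0(t)}_{\mathcal{E}^p}\leq \norm{h_{in}}_{\mathcal{E}^p}e^{-\lambda_0' t/\eps^2}$, so that the source is $O(\eps^{-2}e^{-c t/\eps^2})$ and its time integral is $O(1)$ uniformly in $\eps$. This is what the paper does in Proposition~\ref{prop:h0}.

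\textbf{A second point you gloss over.} You propose to feed $h^1$ directly into Theorem~\ref{theo:adaptedBri1}, but that result applies only to functions lying in $\mbox{Ker}(G_\eps)^\bot$ for all time, and $h^1$ does not satisfy this: only $h=h^0+h^1$ does. The paper handles this by decomposing $h^1=\Pi_G(h^1)+\Pi_G^\bot(h^1)$, observing that $\Pi_G(h^1)=-\Pi_{\mathcal G}(h^0)$ inherits the fast $\eps^{-2}$ decay of $h^0$, and then applying Theorem~\ref{theo:adaptedBri1} only to $\Pi_G^\bot(h^1)$. Without this step your appeal to \cite{Bri1} is not justified.
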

\bigskip

\begin{remark}[Link with Theorem $\ref{theo:cauchyexpodecay}$]
The existence and uniqueness for the perturbed Boltzmann equation $\eqref{LBE}$ in $\mathcal{E}^{p}$ has been proved for $\eps=1$, that is equivalent of $\eps$ fixed with constant depending on it, in \cite{GMM} Theorems $5.3$ and $5.5$ respectively. The constants, as well as the smallness assumption on the initial data, in the theorem above are independent of $\eps$ and therefore this $\textit{a priori}$ result combined with existence and uniqueness developed in \cite{GMM} and in \cite{Bri1} implies the existence and uniqueness independently of $\eps$ which is Theorem $\ref{theo:cauchyexpodecay}$.
\end{remark}
\bigskip

The next subsections deal with the estimates one can get for solutions to the system $\eqref{decompositionh0}-\eqref{decompositionh1}$. We study each of them independently and the \textit{a priori} exponential decay will be a straightforward application of these results together with a maximum principle argument.
\par Section $\ref{subsec:h0}$ focuses on the \textit{a priori} study of the equation in $\mathcal{E}^{p}$. Section $\ref{subsec:h1}$ deals with $\eqref{decompositionh1}$ in $E$. Finally, Section $\ref{subsec:proofapriori}$ gathers the previous results to prove Theorem $\ref{theo:aprioridecomposition}$.


\subsection{Study of equation $\eqref{decompositionh0}$ in $\mathcal{E}$}\label{subsec:h0}

In this section we prove the following general proposition about the equation taking place in $\mathcal{E}^{p} = W^{\alpha,1}_vW^{\beta,p}_x\left(\langle v \rangle^k\right)$, for $p=1$ or $p=2$. We define the shorthand notation 
$$\mathcal{E}^{p}_\nu = W^{\alpha,1}_vW^{\beta,p}_x\left(\langle v \rangle^k\nu\right).$$

\bigskip
\begin{prop}\label{prop:h0}
Let $p=1$ or $p=2$ and $0<\eps \leq 1$. Let $k>k^*_1 =2$, $\beta > 2d/p$.
\\Let $h_{in}$ be in $\mathcal{E}^{p}$ and $h^1$ in $\mathcal{E}^{p}_\nu$.
\\For any $\delta$ in $(0,\delta_{k,1}]$ and any $\lambda_0'$ in $(0,\lambda_0)$ ($\delta_{k,1}$ and $\lambda_0$ defined in Proposition $\ref{prop:hypodissipativityBeps}$) there exist $\eta_0> 0$ such that
\\if
\begin{enumerate}
\item[(i)] $\norm{h_{in}}_{\mathcal{E}^{p}} \leq \eta_0\:, \quad\norm{h^1}_{\mathcal{E}^{p}_\nu} \leq \eta_0,$
\item[(ii)] $h^0$ satisfies $h^0_{t=0} = h_{in}$ and  is solution to 
$$\partial_t h^0 = \mathcal{B}^{(\delta)}_\eps(h^0) +\frac{1}{\eps} Q(h^0,h^0) + \frac{2}{\eps} Q\left(h^0,h^1\right),$$
\end{enumerate}

\noindent then
$$\norm{h^0}_{\mathcal{E}^{p}} \leq e^{-\frac{\lambda_0'}{\eps^2}t}\norm{h_{in}}_{\mathcal{E}^{p}}.$$
The constant $\eta_0$ is constructive and depends only on $\delta$, $\lambda_0'$ and the kernel of the Boltzmann operator.
\end{prop}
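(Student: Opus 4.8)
The plan is to run a continuation (bootstrap) argument on the quantity $\norm{h^0(t)}_{\mathcal{E}^p}$, using the dissipativity of $\mathcal{B}^{(\delta)}_\eps$ from Proposition~\ref{prop:hypodissipativityBeps} as the good term and treating both bilinear contributions $\eps^{-1}Q(h^0,h^0)$ and $2\eps^{-1}Q(h^0,h^1)$ as perturbations absorbed by the spectral gap. First I would differentiate the norm in time along the equation satisfied by $h^0$. Since the $x$-derivatives commute with $\mathcal{B}^{(\delta)}_\eps$ and the transport term, and since $W^{\alpha,1}_vW^{\beta,p}_x(\langle v\rangle^k)$ is built from $L^1_vL^p_x$ pieces of derivatives, the computation reduces (as in the proof of Proposition~\ref{prop:hypodissipativityBeps}) to estimating, for each admissible multi-index, a term of the form
$$\frac{1}{\eps}\int_{\R^d}\langle v\rangle^{k}\,\norm{\partial^j_l\big(Q(h^0,h^0)+2Q(h^0,h^1)\big)}_{L^p_x}\,dv,$$
together with the already-known dissipative contribution
$$-\frac{\lambda_0}{\eps^2}\,\norm{h^0}_{\mathcal{E}^p_\nu}.$$

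The key step is then a bilinear estimate of the shape
$$\norm{Q(f,g)}_{\mathcal{E}^p}\ \lesssim\ \norm{f}_{\mathcal{E}^p}\,\norm{g}_{\mathcal{E}^p_\nu}+\norm{f}_{\mathcal{E}^p_\nu}\,\norm{g}_{\mathcal{E}^p},$$
i.e. $Q$ loses exactly one power of the collision frequency weight $\nu$ and behaves well in $W^{\alpha,1}_vW^{\beta,p}_x$ provided $\beta>2d/p$ so that $W^{\beta,p}_x$ is an algebra (this is where the hypothesis $\beta>2d/p$ enters, via Sobolev embedding on $\T^d$). Such estimates for $Q$ in polynomial-weight $W^{\cdot,1}_v$ spaces are exactly of the type established in \cite{GMM} (their nonlinear estimates, Section~5), so I would quote them and only check the bookkeeping of the weight $\nu$ and the fact that no $v$-regularity beyond $\alpha\le\beta$ is needed. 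Plugging this in and using $\nu(v)\geq\nu_0$ to convert $\mathcal{E}^p_\nu$-norms into $\mathcal{E}^p$-norms where necessary, one gets a differential inequality of the form
$$\frac{d}{dt}\norm{h^0}_{\mathcal{E}^p}\ \leq\ -\frac{\lambda_0}{\eps^2}\norm{h^0}_{\mathcal{E}^p_\nu}+\frac{C}{\eps}\Big(\norm{h^0}_{\mathcal{E}^p}\norm{h^0}_{\mathcal{E}^p_\nu}+\norm{h^0}_{\mathcal{E}^p}\norm{h^1}_{\mathcal{E}^p_\nu}+\norm{h^0}_{\mathcal{E}^p_\nu}\norm{h^1}_{\mathcal{E}^p}\Big).$$

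Now I would close the argument. Introduce $T^\ast=\sup\{t\geq 0:\ \norm{h^0(s)}_{\mathcal{E}^p}\leq\norm{h_{in}}_{\mathcal{E}^p}\ \forall s\leq t\}$ and assume by contradiction $T^\ast<\infty$; on $[0,T^\ast]$ we have $\norm{h^0}_{\mathcal{E}^p}\leq\eta_0$ and $\norm{h^1}_{\mathcal{E}^p_\nu}\leq\eta_0$. Bounding $\norm{h^0}_{\mathcal{E}^p}\leq\eps^{-1}\cdot\eps\,\eta_0$ is wasteful; instead use $\eps\leq 1$ so $\eps^{-1}\leq\eps^{-2}$ and factor: the right-hand side is bounded by
$$\frac{1}{\eps^2}\Big(-\lambda_0+C\eta_0\Big)\norm{h^0}_{\mathcal{E}^p_\nu}+\frac{C\eta_0}{\eps^2}\norm{h^0}_{\mathcal{E}^p},$$
and choosing $\eta_0$ small enough (depending only on $\lambda_0,\lambda_0',C$, hence on $\delta$ and the kernel) so that $C\eta_0\leq\lambda_0-\lambda_0'$ and $C\eta_0\leq\lambda_0'\nu_0$, together with $\norm{h^0}_{\mathcal{E}^p}\leq\nu_0^{-1}\norm{h^0}_{\mathcal{E}^p_\nu}$ wherever needed, yields
$$\frac{d}{dt}\norm{h^0}_{\mathcal{E}^p}\ \leq\ -\frac{\lambda_0'}{\eps^2}\norm{h^0}_{\mathcal{E}^p}.$$
Grönwall then gives $\norm{h^0(t)}_{\mathcal{E}^p}\leq e^{-\lambda_0' t/\eps^2}\norm{h_{in}}_{\mathcal{E}^p}<\norm{h_{in}}_{\mathcal{E}^p}$ on $[0,T^\ast]$, contradicting maximality; hence $T^\ast=\infty$ and the claimed decay holds for all time.

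\textbf{Main obstacle.} The delicate point is the bilinear estimate in the \emph{polynomial}-weight, $L^1$-in-velocity space $\mathcal{E}^p$ with the precise loss $\nu^{1}$ and with no velocity regularity beyond what is already present: one must make sure $Q(f,g)$ can always be estimated by one factor in the plain $\mathcal{E}^p$-norm and the other in the $\nu$-weighted norm $\mathcal{E}^p_\nu$ (so that the gain $\lambda_0\eps^{-2}\norm{h^0}_{\mathcal{E}^p_\nu}$ genuinely dominates the loss $C\eps^{-1}\eta_0\norm{h^0}_{\mathcal{E}^p_\nu}$), and that differentiating $Q$ in $x$ and in $v$ does not create terms requiring control of $\nabla_v h^0$ beyond order $\alpha$. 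The $x$-part is handled by the algebra property $W^{\beta,p}_x(\T^d)\hookrightarrow L^\infty$ for $\beta>2d/p$; the $v$-part and the weight loss are exactly the content of the nonlinear estimates in \cite{GMM}, which I would invoke after checking they are uniform in $\eps$ (which they are, since the $\eps$-dependence is the explicit prefactor $\eps^{-1}$ only).
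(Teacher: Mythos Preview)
Your proposal is correct and follows essentially the same route as the paper: differentiate $\norm{h^0}_{\mathcal{E}^p}$, use the dissipativity of $\mathcal{B}^{(\delta)}_\eps$ (Proposition~\ref{prop:hypodissipativityBeps}) for the good term $-\lambda_0\eps^{-2}\norm{h^0}_{\mathcal{E}^p_\nu}$, invoke the bilinear estimate $\norm{Q(f,g)}_{\mathcal{E}^p}\lesssim\norm{f}_{\mathcal{E}^p}\norm{g}_{\mathcal{E}^p_\nu}+\norm{f}_{\mathcal{E}^p_\nu}\norm{g}_{\mathcal{E}^p}$ from \cite{GMM} (stated in the paper as Lemma~\ref{lem:controlbilinearlargeE}), and close by smallness of $\eta_0$. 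The only cosmetic difference is that the paper observes directly that the bracket in front of $\norm{h^0}_{\mathcal{E}^p_\nu}$ stays negative since $\norm{h^0}_{\mathcal{E}^p}$ is then monotone decreasing, whereas you phrase this as an explicit $T^\ast$ continuation argument; the content is the same.
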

\bigskip

We need to control the bilinear term $Q$, which is given by the following lemma. 

\bigskip
\begin{lemma}\label{lem:controlbilinearlargeE}
For all $p=1,2$ and $\alpha$, $\beta$ in $\N$ such that $\beta>2d/p$, there exists $C_{\beta,p}>0$ such that all $f$ and $g$ 

\begin{equation*}
\norm{Q(f,g)}_{\mathcal{E}^{p}} \leq C_{\beta,p}\left(\norm{g}_{ \mathcal{E}^{p}_\nu}\norm{f}_{\mathcal{E}^{p}} + \norm{g}_{\mathcal{E}^{p}}\norm{f}_{\mathcal{E}^{p}_\nu} \right).
\end{equation*}
\end{lemma}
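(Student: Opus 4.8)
The plan is to estimate $Q(f,g)$ in the mixed space $\mathcal{E}^p = W^{\alpha,1}_v W^{\beta,p}_x(\langle v\rangle^k)$ by reducing to purely velocity estimates for the Boltzmann bilinear operator, using the fact that the $x$-variable lives on the torus where $W^{\beta,p}_x$ is a Banach algebra once $\beta > 2d/p$ (Sobolev embedding $W^{\beta,p}_x \hookrightarrow L^\infty_x$). First I would recall the classical trilinear estimate for the hard-potential cutoff collision operator in weighted $L^1_v$: for suitable weights,
\[
\norm{Q(F,G)}_{L^1_v(\langle v\rangle^k)} \lesssim \norm{F}_{L^1_v(\langle v\rangle^{k+\gamma})}\norm{G}_{L^1_v(\langle v\rangle^{k+\gamma})},
\]
and more precisely the ``one weight with $\nu$'' version
\[
\norm{Q(F,G)}_{L^1_v(\langle v\rangle^k)} \lesssim \norm{F}_{L^1_v(\langle v\rangle^k\nu)}\norm{G}_{L^1_v(\langle v\rangle^k)} + \norm{F}_{L^1_v(\langle v\rangle^k)}\norm{G}_{L^1_v(\langle v\rangle^k\nu)},
\]
which follows from the change of variables $(v,v_*,\sigma)\mapsto(v',v'_*,\sigma)$ and the bound $\Phi(|v-v_*|) \le C(\langle v\rangle^\gamma + \langle v_*\rangle^\gamma) \lesssim \nu(v) + \nu(v_*)$ together with $\langle v'\rangle, \langle v'_*\rangle \le \langle v\rangle\langle v_*\rangle$ and the cutoff assumption on $b$. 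This is exactly the type of estimate already used for $\mathcal{B}^{(\delta)}$ in Lemma \ref{lem:controlB1}, so I would cite \cite{GMM} for the precise velocity-space inequality.

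Next I would handle the $x$-derivatives. Since $Q$ acts only on $v$ and the collision integral is linear in translations of the $x$-variable, $\partial^0_l$ applied to $Q(f,g)$ distributes by the Leibniz rule: $\partial^0_l Q(f,g) = \sum_{l_1 + l_2 = l}\binom{l}{l_1} Q(\partial^0_{l_1} f, \partial^0_{l_2} g)$. Then for each term I would put the lower-order $x$-derivative factor in $L^\infty_x$ (via $W^{\beta,p}_x \hookrightarrow L^\infty_x$, valid because $\beta > 2d/p \ge d/p$) and the other in $L^p_x$, and integrate in $v$ using the velocity estimate above. Combining with the analogous Leibniz rule for $v$-derivatives (each $\partial^j_0$ hitting $Q$ produces $Q(\partial^{j_1}_0 f, \partial^{j_2}_0 g)$ plus commutator-type terms that, for the cutoff kernel, are again controlled by the same trilinear bound on lower-order pieces — this is essentially the content of Lemmas 4.4 and 4.7 of \cite{GMM}), one obtains
\[
\norm{\partial^j_l Q(f,g)}_{L^1_v L^p_x(\langle v\rangle^k)} \lesssim \sum \Big(\norm{\partial^{j_1}_{l_1} f}_{\mathcal{E}^p_\nu}\norm{\partial^{j_2}_{l_2} g}_{\mathcal{E}^p} + \norm{\partial^{j_1}_{l_1} f}_{\mathcal{E}^p}\norm{\partial^{j_2}_{l_2} g}_{\mathcal{E}^p_\nu}\Big),
\]
where the sum is over multi-indices with $|j_1|+|j_2| \le |j|$, $|l_1|+|l_2| \le |l|$ subject to the constraints in the definition of the $\mathcal{E}^p$ norm. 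Summing over all admissible $(j,l)$ and absorbing the combinatorial constants into $C_{\beta,p}$ yields the claimed bilinear estimate.

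The main obstacle I anticipate is the bookkeeping: making sure the constraint $|l| + |j| \le \max(\alpha,\beta)$ in the norm is respected when distributing derivatives — in particular one must check that placing a factor in $L^\infty_x$ costs at most $\beta$ of the available $x$-derivatives (which is fine since $\beta > 2d/p$ gives the embedding with room to spare when $p\ge1$), and that the algebra/Sobolev embedding on $\T^d$ is applied with the right index so that no uncontrolled derivative count appears. A secondary technical point is that the ``commutator'' terms arising when $v$-derivatives hit the collision kernel $\Phi(|v-v_*|)b(\cos\theta)$ must be shown to obey the same weighted $L^1_v$ bound; for hard potentials with Grad's cutoff these are strictly lower order in the weight and are handled exactly as in \cite{GMM} Section 4, so I would simply invoke those lemmas rather than redo the computation. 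Everything else is a routine application of Minkowski's integral inequality and Hölder in $x$.
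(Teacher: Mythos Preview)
The paper does not give its own proof of this lemma; it simply cites Lemma~5.16 of \cite{GMM} (itself adapted from \cite{Ark1} and \cite{MouVi}). Your sketch --- the weighted $L^1_v$ bilinear estimate for $Q$, Leibniz in $x$ and $v$, and the algebra property of $W^{\beta,p}_x(\T^d)$ under $\beta>2d/p$ --- is correct and is precisely the standard argument underlying that reference.
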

\bigskip

This lemma has been proved in Lemma $5.16$ in \cite{GMM}, which is adapted from interpolation results in \cite{Ark1} or duality arguments as in \cite{MouVi} Theorem $2.1$.

\bigskip
\begin{proof}[Proof of Proposition $\ref{prop:h0}$]

Consider $\delta$ in $(0,\delta_{k,1}]$ and $\lambda_0'$ in $(0,\lambda_0)$.  Take $p=1$ or $p=2$ and $\beta>2d/p$.
\par We have that
$$\partial_t h^0 = \mathcal{B}^{(\delta)}_\eps(h^0) + \frac{1}{\eps} Q(h^0,h^0) + \frac{2}{\eps} Q\left(h^0,h^1\right).$$
Thanks to the dissipativity of property of $\mathcal{B}^{(\delta)}_\eps$, more precisely the proof of Lemma $\ref{prop:hypodissipativityBeps}$, we have

\begin{equation*}
\begin{split}
&\frac{d}{dt}\norm{h^0}_{\mathcal{E}^{p}} \leq -\frac{\lambda_0}{\eps^2\nu_0} \norm{h^0}_{\mathcal{E}^{p}_\nu}+ \frac{1}{\eps}\abs{\langle Q(h^0,h^0)+2Q(h^0,h^1),h^0\rangle_{\mathcal{E}^{p}}}
\\&\leq -\frac{\lambda_0}{\eps^2\nu_0} \norm{h^0}_{\mathcal{E}^{p}_\nu} +\frac{1}{\eps}\norm{Q(h^0,h^0)+2Q(h^0,h^1)}_{\mathcal{E}^{p}},
\end{split}
\end{equation*}
where we used the scalar product notation to refer to the product operator appearing in $W^{\alpha,1}_vW^{\beta,p}_x$ when one differentiates $\norm{h}_{W^{\alpha,1}_vW^{\beta,p}_x\left(\langle v \rangle^k\right)}$ (of the same form as $\eqref{dissipativitystart}$). For the second inequality we used H\"older inequality between $L^p_x$ and $L^{p/(p-1)}_x$ inside the product operator:
$$\int_{\T^d}\mbox{sgn}(h^0)\abs{h^0}^{p-1} F(h^0)\:dx \leq \norm{h^0}^{p-1}_{L^p_x}\norm{F(h^0)}_{L^p_x}.$$

\bigskip
Then estimating $Q$ using Lemma $\ref{lem:controlbilinearlargeE}$ yields

\begin{equation}\label{h0}
\frac{d}{dt}\norm{h^0}_{\mathcal{E}^{p}} \leq -\frac{1}{\eps^2}\left[\frac{\lambda_0}{\nu_0} -2\eps C_{\beta,p}\left(\norm{h^0}_{\mathcal{E}^{p}} + \frac{2}{\nu_0} \norm{h^1}_{\mathcal{E}^{p}_\nu}\right)\right]\norm{h^0}_{\mathcal{E}^{p}_\nu},
\end{equation}
we recall $\nu_0 = \inf\limits_{v \in \R^d}(\nu(v)) >0$.

\bigskip
Therefore, if
$$\norm{h^1}_{\mathcal{E}^{p}_\nu} \leq \eps^{-1}\frac{(\lambda_0-\lambda_0')}{8C_{\beta,p}} \quad\mbox{and}\quad \norm{h_{t=0}}_{\mathcal{E}^{p}} \leq \eps^{-1}\frac{(\lambda_0-\lambda_0')}{4\nu_0 C_{\beta,p}},$$
then $\norm{h^0}_{\mathcal{E}^{p}}$ is always decreasing in time with
$$\frac{d}{dt}\norm{h^0}_{\mathcal{E}^{p}}\leq -\frac{\lambda_0'}{\eps^2\nu_0}\norm{h^0}_{\mathcal{E}^{p}_\nu},$$
which hence yields the expected exponential decay by Gr\"onwall Lemma.

\end{proof}
\bigskip


\subsection{Study of equations $\eqref{decompositionh1}$ in $E$}\label{subsec:h1}

In the space $E = H^\beta_{x,v}\left(\mu^{-1/2}\right)$, solutions to the perturbed Boltzmann equation enjoy an exponential decay. More precisely, \cite{Bri1} derived a precise Gr\"onwall that we will now use to obtain estimates on the solution $h^1$. We will use the following shorthand notation
$$E_\nu = H^\beta_{x,v}\left(\mu^{-1/2}\nu^{1/2}\right)$$
\par In this section we use the previous theorem to obtain exponential decay of $h^1$ in $E$. This result is stated in the following proposition, where $C^0_t$ denotes the space of time-continuous functions.

\bigskip
\begin{prop}\label{prop:h1}
Let $p=1$ or $p=2$, $0<\eps \leq \eps^d\leq 1$, $\beta \geq s_0$ and $\alpha \leq \beta$  ($\eps_d$ and $s_0$ being constructive constants that will be defined in Theorem $\ref{theo:adaptedBri1}$).
\\Let $h_{in}$ be in $\mathcal{E}^{p}$ and $h^0$ in $C^0_t\mathcal{E}^{p}$.
\\For any $\delta$ in $(0,\delta_{k,1}]$ and any $\lambda_0'$ in $(0,\lambda_0)$ ($\delta_{k,1}$ and $\lambda_0$ defined in Proposition $\ref{prop:hypodissipativityBeps}$) there exist $\eta_1,C_1 > 0$ such that
\\if
\begin{enumerate}
\item[(i)] $\norm{h_{in}}_{\mathcal{E}^{p}} \leq \eta_1,$
\item[(ii)] there exists $C_0 >0$ such that $\norm{h^0}_{\mathcal{E}^{p}} \leq C_0 e^{-\frac{\lambda_0 + \lambda_0'}{2\eps^2}t}\norm{h_{in}}_{\mathcal{E}^{p}},$
\item[(iii)] $h^1$ satisfies $h^1_{t=0} = 0$ and  is solution to 
$$\partial_t h^1 =  \mathcal{G}_\eps(h^{1}) + \frac{1}{\eps} Q(h^{1},h^{1}) + \mathcal{A}^{(\delta)}_\eps(h^{0})$$
\end{enumerate}
\noindent then
$$\norm{h^1}_{E} \leq C_1 e^{-\lambda_0't}\norm{h_{in}}_{\mathcal{E}^{p}}$$
The constants $C_1$ and $\eta_1$ are constructive and depends only on $\delta$, $\lambda_0'$ and the kernel of the Boltzmann operator.
\end{prop}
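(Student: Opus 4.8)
The plan is to apply the perturbative Grönwall estimate of \cite{Bri1} (to be quoted as Theorem \ref{theo:adaptedBri1}) to equation \eqref{decompositionh1}, treating the term $\mathcal{A}^{(\delta)}_\eps(h^{0})$ as an external source. First I would note that $h^1$ solves $\partial_t h^1 = \mathcal{G}_\eps(h^1) + \eps^{-1}Q(h^1,h^1) + \mathcal{A}^{(\delta)}_\eps(h^0)$ in $E = H^\beta_{x,v}\left(\mu^{-1/2}\right)$, and that by Proposition \ref{prop:controlAeps} the operator $\mathcal{A}^{(\delta)}_\eps$ maps the polynomially weighted space $\mathcal{E}^p$ boundedly into any $H^s_{x,v}\left(\mu^{-1/2}\right)$ with compact support in $v$, with operator norm of order $\eps^{-2}$; hence the source satisfies
$$\norm{\mathcal{A}^{(\delta)}_\eps(h^0)}_{E} \leq \frac{C}{\eps^2}\norm{h^0}_{\mathcal{E}^{p}} \leq \frac{C C_0}{\eps^2}e^{-\frac{\lambda_0+\lambda_0'}{2\eps^2}t}\norm{h_{in}}_{\mathcal{E}^{p}},$$
where the last bound uses hypothesis (ii). The key point is that this source decays at rate $(\lambda_0+\lambda_0')/(2\eps^2)$, which is strictly faster than the spectral gap $\lambda_0/\eps^2$ of $\mathcal{G}_\eps$ on $\mbox{Ker}(\mathcal{G}_\eps)^\bot$, so the prefactor $\eps^{-2}$ is absorbed.

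Next I would set up the energy estimate in $E$. Using the adapted version of \cite{Bri1}, which provides a functional $\norm{\cdot}_E$ equivalent to the $H^\beta_{x,v}\left(\mu^{-1/2}\right)$-norm together with a differential inequality of the form
$$\frac{d}{dt}\norm{h^1}_E^2 \leq -\frac{K_1}{\eps^2}\norm{\Pi_G^\bot h^1}_{E_\nu}^2 + \left(\mbox{bilinear terms}\right) + \left(\mbox{source terms}\right),$$
one controls the bilinear term $\eps^{-1}Q(h^1,h^1)$ by the standard trilinear estimate of \cite{Bri1}, which gives a contribution bounded by $C\eps^{-1}\norm{h^1}_E\norm{\Pi_G^\bot h^1}_{E_\nu}^2$, hence absorbed into the dissipation once $\norm{h^1}_E$ is small (this is where the smallness threshold $\eta_1$ and a continuation/bootstrap argument enter). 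The source contributes a cross term $\eps^{-2}\langle \mathcal{A}^{(\delta)}_\eps(h^0), h^1\rangle_E$, estimated by Young's inequality as $\tfrac{\lambda'}{\eps^2}\norm{h^1}_{E_\nu}^2 + \tfrac{C}{\eps^2}\norm{h^0}_{\mathcal{E}^p}^2$ (noting that the compact $v$-support of $\mathcal{A}^{(\delta)}_\eps(h^0)$ lets one trade the unweighted $E$-norm for the $\nu$-weighted $E_\nu$-norm on the factor coming from $h^1$). Collecting everything yields, for $\eps\leq\eps_d$,
$$\frac{d}{dt}\norm{h^1}_E^2 \leq -\frac{\lambda_0}{\eps^2}\norm{h^1}_E^2 + \frac{C}{\eps^2}C_0^2 e^{-\frac{\lambda_0+\lambda_0'}{\eps^2}t}\norm{h_{in}}_{\mathcal{E}^p}^2,$$
using the spectral gap on $\mbox{Ker}(\mathcal{G}_\eps)^\bot$ (recall $h^1_{t=0}=0$ and $\Pi_G$ is preserved, so $h^1$ stays in the orthogonal).

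Finally I would integrate this linear differential inequality with $h^1_{t=0}=0$. Since $\lambda_0+\lambda_0' > \lambda_0$, Duhamel gives
$$\norm{h^1(t)}_E^2 \leq \frac{C C_0^2}{\eps^2}\norm{h_{in}}_{\mathcal{E}^p}^2 \int_0^t e^{-\frac{\lambda_0}{\eps^2}(t-\tau)}e^{-\frac{\lambda_0+\lambda_0'}{\eps^2}\tau}\,d\tau \leq \frac{C C_0^2}{\lambda_0'}\norm{h_{in}}_{\mathcal{E}^p}^2\, e^{-\frac{\lambda_0}{\eps^2}t},$$
the $\eps^{-2}$ being exactly cancelled by the integration. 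Then, using $\eps\leq 1$ and \eqref{ineqt0t}-type reasoning, $e^{-\lambda_0 t/\eps^2} \leq e^{-\lambda_0 t} \leq e^{-2\lambda_0' t}$ once $\lambda_0' < \lambda_0/2$ — or more simply one directly obtains decay faster than $e^{-\lambda_0' t}$ — giving $\norm{h^1}_E \leq C_1 e^{-\lambda_0' t}\norm{h_{in}}_{\mathcal{E}^p}$ with $C_1 = C_1(\delta,\lambda_0')$. The main obstacle is the bootstrap: the trilinear estimate for $\eps^{-1}Q(h^1,h^1)$ is only absorbable while $\norm{h^1}_E$ is below a fixed $\eps$-independent threshold, so one must run a continuity argument showing that the a priori bound propagates — choosing $\eta_1$ small enough (independently of $\eps$) that the bootstrap never saturates — and one must be careful that all constants coming from \cite{Bri1}, in particular the gap $\lambda_0$ and the threshold $s_0$, are genuinely uniform in $\eps$, which is precisely the content of the adapted Theorem \ref{theo:adaptedBri1}.
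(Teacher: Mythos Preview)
There is a genuine gap in your argument: the claim that ``$h^1_{t=0}=0$ and $\Pi_G$ is preserved, so $h^1$ stays in the orthogonal'' is false. The source term $\mathcal{A}^{(\delta)}_\eps(h^{0})$ has no reason to lie in $\mbox{Ker}(G_\eps)^\bot$ --- it is an integral operator with smooth compactly supported kernel, and in general $\Pi_G\bigl(\mathcal{A}^{(\delta)}_\eps(h^{0})\bigr)\neq 0$. Consequently $\Pi_G(h^1)$ is \emph{not} conserved, and Theorem~\ref{theo:adaptedBri1} (which explicitly requires the solution to lie in $\mbox{Ker}(G_\eps)^\bot$ for all time) cannot be applied directly to $h^1$. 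Your differential inequality $\frac{d}{dt}\norm{h^1}_E^2\leq -\frac{\lambda_0}{\eps^2}\norm{h^1}_E^2+\dots$ is therefore unjustified: the linear operator $G_\eps$ has no spectral gap on its kernel, so there is no coercivity on the full $h^1$.

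The paper fixes this by first controlling $\Pi_G(h^1)$ \emph{separately}, via the algebraic identity $\Pi_G(h^1)=-\Pi_{\mathcal{G}}(h^0)$ (coming from $h=h^0+h^1$ and $\Pi_{\mathcal{G}}(h)=0$), so that hypothesis~(ii) bounds $\norm{\Pi_G(h^1)}_{E_\nu}$ directly. Only then does one project the equation with $\Pi_G^\bot$: since $\pi_L(Q)=0$ one has $\Pi_G^\bot\bigl(Q(h^1,h^1)\bigr)=Q(h^1,h^1)$, and Theorem~\ref{theo:adaptedBri1} applies to $\Pi_G^\bot(h^1)$ with source $\Pi_G^\bot\bigl(\mathcal{A}^{(\delta)}_\eps(h^0)\bigr)$. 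Note also that the theorem gives dissipation of order $O(1)$, not $O(\eps^{-2})$ as you wrote; this is harmless for the Gr\"onwall since the source decays at rate $\eps^{-2}$ and the time integral produces a compensating factor $\eps^2$, but it does mean that inside the bilinear estimate one must decompose $h^1=\Pi_G(h^1)+\Pi_G^\bot(h^1)$ and use the already-obtained bound on the projection part --- a step that is absent from your plan.
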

\bigskip

In order to prove Proposition $\ref{prop:h1}$ we need a new control on the bilinear term.
\par For any operator $\func{F}{E\times E}{E}$, we will say that $F$ satisfies the property $(H)$ if the following holds.
\paragraph{Property $(H)$:}
\begin{enumerate}
\item for all $g^1$, $g^2$ in $E$ we have $\pi_L\left(F(g^1,g^2)\right) = 0$, where $\pi_L$ is the orthogonal projection on $\mbox{Ker}\left(L\right)$ in $L^2_v\left(\mu^{-1/2}\right)$ (see $\eqref{piL}$),
\item for all $s'>0$ there exists $\func{\mathcal{F}_F^{s'}}{E\times E}{\R^+}$ such that for all multi-indexes $j$ and $l$ such that $|j|+|l| \leq s'$,
$$ \left|\langle\partial_l^jF(g^1,g^2),g^3\rangle_{L^2_{x,v}\left(\mu^{-1/2}\right)}\right| \leq  \mathcal{F}_F^{s'}(g^1,g^2)\norm{g^3}_{L^2_{x,v}\left(\mu^{-1/2}\nu^{1/2}\right)},$$
with $\mathcal{F}_F^{s'}\leq \mathcal{F}_F^{s'+1}$.
\end{enumerate}

\bigskip
\begin{lemma}\label{lem:controlbilinearsmallE}
The Boltzmann linear operator $Q$ satisfies the property $(H)$ with
$$ \forall s> d,\: \exists C_s>0,\quad \mathcal{F}^s_Q(g,h) \leq C_s \left[\norm{f}_{E}\norm{g}_{E_\nu} + \norm{f}_{E_\nu}\norm{g}_E\right].$$
\end{lemma}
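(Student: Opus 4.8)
The plan is to establish Property $(H)$ for $Q$ by verifying its two components separately, and then to prove the quantitative bound on $\mathcal{F}^s_Q$ by a standard commutator-and-duality argument in the Hilbert space $H^s_{x,v}(\mu^{-1/2})$. First I would recall that $Q(g^1,g^2)$ is a velocity-only integral operator (no $x$-derivatives appear), so all $x$-derivatives commute trivially with $Q$, and that the weak formulation of the collision operator gives, for any test function $\varphi$,
$$\int_{\R^d} Q(g^1,g^2)\varphi\,dv = \frac14\int B\,[g^1_*g^2+g^1 g^2_*]\big(\varphi'+\varphi'_*-\varphi-\varphi_*\big)\,dv\,dv_*\,d\sigma.$$
Taking $\varphi = 1, v_i, |v|^2$ (equivalently $\varphi\in\{\phi_0,\dots,\phi_{d+1}\}$) the bracket $\varphi'+\varphi'_*-\varphi-\varphi_*$ vanishes identically by conservation of mass, momentum and energy in an elastic collision; hence $\int Q(g^1,g^2)\phi_i\,dv=0$ for all $i$, which is exactly the statement $\pi_L(Q(g^1,g^2))=0$ since $\pi_L$ is the $L^2_v(\mu^{-1/2})$-projection onto $\mathrm{Span}\{\phi_i\mu\}$. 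This gives part (1) of Property $(H)$.

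For part (2), fix $s'>0$ and multi-indices $j,l$ with $|j|+|l|\le s'$. Since $\partial_l^j$ only involves $x$- and $v$-derivatives and $Q$ acts in $v$ with $x$ entering as a parameter, I would distribute the derivatives: $x$-derivatives fall on $g^1,g^2$ by the Leibniz rule, and $v$-derivatives on the collision kernel can be handled by the classical change-of-variables/Carleman-type manipulations (as in \cite{GMM} Lemma~5.16, \cite{Ark1}, \cite{MouVi}) which produce a finite sum of terms of the schematic form $\widetilde Q(\partial_{l_1}^{j_1}g^1,\partial_{l_2}^{j_2}g^2)$ with $|j_1|+|j_2|+|l_1|+|l_2|\le s'$ and $\widetilde Q$ a collision-type bilinear operator with a kernel controlled by $b\,\langle v-v_*\rangle^\gamma$. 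Then for the duality pairing against $g^3$ in $L^2_{x,v}(\mu^{-1/2})$, the key input is the classical trilinear estimate for the Boltzmann operator with hard potentials and angular cutoff:
$$\Big|\langle Q(F,G),H\rangle_{L^2_{x,v}(\mu^{-1/2})}\Big| \le C\Big(\norm{F}_{L^2_{x,v}(\mu^{-1/2})}\norm{G}_{L^2_{x,v}(\mu^{-1/2}\nu^{1/2})}+\norm{F}_{L^2_{x,v}(\mu^{-1/2}\nu^{1/2})}\norm{G}_{L^2_{x,v}(\mu^{-1/2})}\Big)\norm{H}_{L^2_{x,v}(\mu^{-1/2}\nu^{1/2})},$$
together with a Sobolev embedding $H^{s}_{x}\hookrightarrow L^\infty_x$ (valid for $s>d$, which is why the threshold $s>d$ appears) to absorb the products of derivatives of $g^1$ and $g^2$: one factor is placed in $L^\infty_x$ via the embedding and the other in $L^2_x$, and the resulting velocity integrals are closed using the Gaussian weight. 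Summing over the finitely many multi-indices and over the Leibniz terms defines $\mathcal{F}^{s'}_Q(g^1,g^2)$, and since adding more admissible multi-indices only adds non-negative contributions, monotonicity $\mathcal{F}^{s'}_Q\le\mathcal{F}^{s'+1}_Q$ is immediate.

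Finally, collecting all the terms, every summand is bounded by a product of the form $\norm{\partial_{l_1}^{j_1}g^1}\,\norm{\partial_{l_2}^{j_2}g^2}$ with the appropriate $\nu^{1/2}$-weight on exactly one of the two factors (the one hit by the hard-potential kernel), and summing these up against the definition of $\norm{\cdot}_E = \norm{\cdot}_{H^\beta_{x,v}(\mu^{-1/2})}$ and $\norm{\cdot}_{E_\nu}=\norm{\cdot}_{H^\beta_{x,v}(\mu^{-1/2}\nu^{1/2})}$ yields, for $s>d$,
$$\mathcal{F}^s_Q(g^1,g^2)\le C_s\big(\norm{g^1}_E\norm{g^2}_{E_\nu}+\norm{g^1}_{E_\nu}\norm{g^2}_E\big),$$
which is the claim (up to the harmless relabelling $f\leftrightarrow g^1$, $g\leftrightarrow g^2$ in the statement). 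The main obstacle is the bookkeeping in distributing $v$-derivatives onto the collision kernel while keeping the correct $\nu$-weight on only one argument; this is precisely the content already handled in \cite{GMM} Lemma~5.16 and the interpolation/duality estimates of \cite{Ark1}, \cite{MouVi}, so I would invoke those rather than reprove them, and the novelty here is only the repackaging into the abstract Property $(H)$ format with the monotone family $(\mathcal{F}^{s'}_Q)_{s'>0}$.
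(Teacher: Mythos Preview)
Your sketch is correct in spirit and follows the standard route; the paper itself does not give an argument but simply cites \cite{Bri1}, Appendix~A.2, for this estimate. One minor point: the references you invoke (\cite{GMM} Lemma~5.16, \cite{Ark1}, \cite{MouVi}) are those the paper attaches to Lemma~\ref{lem:controlbilinearlargeE} in the polynomial-weight space $\mathcal{E}^p$, whereas the present lemma lives in $E=H^\beta_{x,v}(\mu^{-1/2})$ and the paper points instead to \cite{Bri1}; the underlying Leibniz/Sobolev-embedding/duality machinery is of course the same in both settings, so this is only a bibliographic mismatch, not a mathematical one.
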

\bigskip

The latter control on the bilinear part is from \cite{Bri1} Appendix $A.2$.

\bigskip
\begin{proof}[Proof of Proposition $\ref{prop:h1}$]
We state below the estimate derived in \cite{Bri1} (note that this is a version of \cite{Bri1} Theorem $2.4$ extended by estimates proved in \cite{Bri1} Propositions $2.2$ and $7.1$). 

\bigskip
\begin{theorem}\label{theo:adaptedBri1}
There exist $0<\eps_d \leq 1$ and $s_0$ in $\N$ such that for any $s\geq s_0$ and any $\lambda_0''$ in $(0,\lambda_0)$ there exists $\delta_s,\: C_s >0$ such that,
\begin{itemize}
\item for any $h_{in}$ in $H^s_{x,v}\left(\mu^{-1/2}\right)$ with 
$$\norm{h_{in}}_{H^s_{x,v}\left(\mu^{-1/2}\right)}\leq \delta_s,$$
\item for any operator $\func{F}{H^s_{x,v}\left(\mu^{-1/2}\right)\times H^s_{x,v}\left(\mu^{-1/2}\right)}{H^s_{x,v}\left(\mu^{-1/2}\right)}$ satisfying the property $\emph{(H)}$;
\end{itemize}
Then for all $0<\eps \leq \eps_d$ and for all $g^1$, $g^2$ in $H^s_{x,v}\left(\mu^{-1/2}\right)$, if $h$ is a solution to
$$\left\{\begin{array}{rl} &\displaystyle{\partial_t h = G_\eps(h) + \frac{1}{\eps}F(g^1,g^2)} \vspace{2mm} \\ \vspace{2mm} &\displaystyle{h_{t=0} = h_{in},} \end{array}\right.$$
and $h$ is in $\emph{\mbox{Ker}}\left(G_\eps\right)$ for all time, then
$$\forall t \in \R^+, \quad \frac{d}{dt}\norm{h}^2_{H^s_{x,v}\left(\mu^{-1/2}\right)} \leq - \frac{2\lambda_0''}{\nu_0^2}\norm{h}^2_{H^s_{x,v}\left(\mu^{-1/2}\nu\right)}+ C_s \left(\mathcal{F}_F^{s}(g^1,g^2)\right)^2.$$
\end{theorem}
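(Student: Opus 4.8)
Throughout, as in \cite{Bri1} Theorem $2.4$, I read $\norm{\cdot}_{H^s_{x,v}(\mu^{-1/2})}$ as the \emph{hypocoercive} norm — the plain Sobolev norm augmented by $\eps$-weighted cross terms of the form $\langle \partial_x h,\partial_v h\rangle$ summed over matched derivatives — which is equivalent to the usual one uniformly in $0<\eps\leq 1$. Such a modification is unavoidable: since $-v\cdot\nabla_x$ is skew-symmetric in $L^2_{x,v}(\mu^{-1/2})$ and $L$ is self-adjoint, non-positive, with spectral gap $\lambda_0$ only on $\mbox{Ran}(\pi_L^\bot)$ (see \cite{BM}), a bare energy estimate gives $\frac{2}{\eps^2}\langle Lh,h\rangle \leq -\frac{2\lambda_0}{\eps^2}\norm{\pi_L^\bot h}^2_{L^2(\mu^{-1/2}\nu^{1/2})}$, which dissipates the microscopic part $\pi_L^\bot h$ at the strong rate $\eps^{-2}$ but leaves the macroscopic part $\pi_L h$ (the fields $\rho,u,\theta$) uncontrolled.

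The first step is the linear hypocoercive estimate for $G_\eps$ alone. Applying $\partial^j_l$ to the equation, the only failure of commutation is $[\partial_{v_i},v\cdot\nabla_x]=\partial_{x_i}$, which injects an $\eps^{-1}\partial_x$ term into the equation for $\partial_v h$. The $\eps$-weights on the cross terms are chosen precisely so that, after differentiating the norm and using skew-symmetry of transport, these commutators yield a strictly negative contribution controlling $\pi_L h$ on the nonzero $x$-Fourier modes — the zero mode being removed by the hypothesis $\Pi_G h = 0$ (that is $h\in\mbox{Ker}(G_\eps)^\bot$) — thereby restoring full coercivity. For $\eps\leq\eps_d$ and $s\geq s_0$ this produces, with $K_1,K_2>0$ independent of $\eps$,
\[
\frac{d}{dt}\norm{h}^2_{H^s_{x,v}(\mu^{-1/2})} \leq -\frac{K_1}{\eps^2}\norm{\pi_L^\bot h}^2_{H^s(\mu^{-1/2}\nu^{1/2})} - K_2\norm{h}^2_{H^s(\mu^{-1/2}\nu^{1/2})} + \mathcal{S},
\]
where $\mathcal{S}$ collects the pairings of the source $\eps^{-1}F(g^1,g^2)$; note the split into a strong microscopic dissipation of order $\eps^{-2}$ and a full-norm dissipation of order $O(1)$.

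The source $\mathcal{S}$ is where property (H) enters decisively. By (H)(1), $\pi_L F = 0$, so each pairing $\langle\partial^j_l F,\partial^j_l h\rangle_{L^2(\mu^{-1/2})}$ sees only the microscopic part of $h$, i.e. couples to the \emph{strongly} dissipated component. Invoking (H)(2) with $g^3=\partial^j_l h$, together with the monotonicity $\mathcal{F}_F^{s'}\leq\mathcal{F}_F^{s}$ to reduce every order to the top one, I bound $\eps^{-1}|\langle\partial^j_l F,\partial^j_l h\rangle|\leq \eps^{-1}\mathcal{F}_F^{s}\norm{\partial^j_l h}_{L^2(\mu^{-1/2}\nu^{1/2})}$ and apply Young's inequality in the form $\eps^{-1}\mathcal{F}_F^s\norm{\cdot}_{\nu^{1/2}}\leq \frac{K_1}{2\eps^2}\norm{\cdot}^2_{\nu^{1/2}}+C_s(\mathcal{F}_F^s)^2$. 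The quadratic term is swallowed by the $\eps^{-2}$ microscopic dissipation, and — this is the crux — the remainder constant $C_s$ is \emph{independent of} $\eps$. The contributions of the source against the $\eps$-weighted cross terms carry a compensating factor $\eps$, so $\eps^{-1}$ and $\eps$ cancel and they are absorbed into the $O(1)$ term $-K_2\norm{h}^2$, again with $\eps$-free remainder.

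Collecting these bounds, discarding the now-redundant negative microscopic term, and keeping the $O(1)$ full-norm dissipation yields the claimed inequality; the prescribed rate $2\lambda_0''$ for any $\lambda_0''<\lambda_0$ is obtained by tuning the small cross-term parameters, while the factor $\nu_0^{-2}$ and the upgrade of the dissipation weight from $\nu^{1/2}$ to $\nu$ follow from the elementary bookkeeping $\nu\geq\nu_0$ relating the plain and $\nu$-weighted norms of the macroscopic part. The main obstacle is the uniform-in-$\eps$ construction behind the second paragraph: one must simultaneously keep the hypocoercive norm equivalent to the plain Sobolev norm, obtain a macroscopic rate that stays $O(1)$ and does not degenerate as $\eps\to 0$, and prevent the singular $\eps^{-1}$ transport commutators from swamping the cross terms — which is exactly what forces the cross terms to be weighted by $\eps$ and what fixes the threshold $\eps_d$. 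Finally, the smallness $\norm{h_{in}}_{H^s_{x,v}(\mu^{-1/2})}\leq\delta_s$ is not used in this linear-in-$h$ inequality; it is inherited from \cite{Bri1} to ensure the solution $h$ stays in the range where the modified-norm construction is valid.
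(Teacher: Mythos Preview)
The paper does not prove this statement: it is quoted from \cite{Bri1} (Theorem~2.4 combined with Propositions~2.2 and~7.1 there) and used as a black box inside the proof of Proposition~\ref{prop:h1}. Your outline therefore goes further than the paper does, sketching the underlying hypocoercivity argument, and it correctly identifies the essential mechanisms: the $\eps$-weighted cross-term norm equivalent to $H^s_{x,v}(\mu^{-1/2})$ uniformly in $\eps$, the two-scale dissipation ($\eps^{-2}$ on $\pi_L^\bot h$, $O(1)$ on the full norm via the commutator $[\partial_v,v\cdot\nabla_x]=\partial_x$), and the absorption of the $\eps^{-1}$ source through Young's inequality with an $\eps$-independent remainder $C_s(\mathcal{F}_F^s)^2$.

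One step is imprecise enough to be a gap. The assertion that $\pi_L F=0$ forces \emph{every} pairing $\langle\partial^j_l F,\partial^j_l h\rangle_{L^2_{x,v}(\mu^{-1/2})}$ to couple only to $\pi_L^\bot(\partial^j_l h)$ is correct only when $j=0$: the projector $\pi_L$ acts in $v$ alone, so it commutes with $\partial_x$ but not with $\partial_v$, and in general $\pi_L(\partial^j_l F)\neq 0$ for $|j|\geq 1$. Consequently, for the $v$-derivative diagonal terms of the modified norm your Young step produces $\tfrac{K_1}{2\eps^2}\norm{\partial^j_l h}^2_{L^2(\mu^{-1/2}\nu^{1/2})}$ with the \emph{full} derivative, while the strong dissipation available at that level is only $-K_1\eps^{-2}\norm{\pi_L^\bot(\partial^j_l h)}^2_{L^2(\mu^{-1/2}\nu^{1/2})}$; the macroscopic residue $\eps^{-2}\norm{\pi_L(\partial^j_l h)}^2$ is absorbed by neither of the two mechanisms you invoke (microscopicity of $F$ for $x$-derivatives, $\eps$-cancellation for cross terms). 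The closure in \cite{MN}\cite{Bri1} uses one more ingredient: since $\pi_L$ projects onto the finite-dimensional span of $\phi_i\mu$, integrating the $v$-derivatives by parts against the fixed polynomials $\phi_i$ bounds $\pi_L(\partial^j_l h)$ by strictly lower-order pure-$x$ moments of $h$, and this reduction is then fed back through a suitably ordered hierarchy of coefficients in the hypocoercive norm. With that step supplied, your outline matches the argument the paper imports from \cite{Bri1}.
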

\bigskip

Now, let $\lambda''$ be in $(0,\lambda_0)$, $s\geq s_0$ and $0<\eps\leq \eps_d$.
\\ The proof of Proposition $\ref{prop:h1}$ will be done in two steps. First we study the projection of $h^1$ onto $\mbox{Ker}\left(G_\eps\right)$ and then its orthogonal part.

\bigskip
\textbf{Estimate on the projection part.} We have that, see the decomposition $\eqref{decompositionh}$, that $h^1 = h - h^0$ with $h$ solution to the perturbed Boltzmann equation and thus satisfying $\Pi_{\mathcal{G}}(h)=0$. We therefore have that
$$\Pi_{\mathcal{G}}(h^1) = -\Pi_{\mathcal{G}}(h^0).$$
Moreover, Theorem $\ref{theo:linLqLp}$ tells us that $\Pi_{\mathcal{G}}$ and $\Pi_{G}$ coincide on $E$ and thus
$$\Pi_{G}(h^1) = -\Pi_{\mathcal{G}}(h^0),$$
and assumption $(ii)$ together with the shape of $\Pi_{\mathcal{G}}$ (see $\eqref{projectionLqLp}$), there exists a constant $C_\Pi >0$, depending only on the dimension $d$ and $s$ and the constant $C_0$, such that
\begin{equation}\label{projectionh1}
\norm{\Pi_{G}(h^1)}_{E_\nu} \leq C_\Pi e^{-\frac{\lambda_0+\lambda_0'}{2\eps^2}t}\norm{h_{in}}_{\mathcal{E}^{p}}.
\end{equation}

\bigskip
\textbf{Estimate on the orthogonal part.} Applying $\Pi^\bot_{G} = \mbox{Id} - \Pi_{G}$, the orthogonal projection onto $\left(\mbox{Ker}\left(G_\eps\right)\right)^\bot$ in $L^2_{x,v}\left(\mu^{-1/2}\right)$, to the differential equation satisfied by $h^1$ yields
\begin{eqnarray}
\partial_t \left(\Pi^\bot_{G}(h^1)\right) &=& G_\eps(h^1) + \Pi^\bot_{G}\left(\frac{1}{\eps}Q(h^1,h^1) + \mathcal{A}^{(\delta)}_\eps(h^{0})\right) \nonumber
\\&=& G_\eps\left(\Pi^\bot_{G}(h^1)\right) + \Pi^\bot_{G}\left(\frac{1}{\eps}Q(h^1,h^1) + \mathcal{A}^{(\delta)}_\eps(h^{0})\right). \label{orthogonalh1start}
\end{eqnarray}
Moreover, we have by definition of $\Pi_G$ and $\pi_L$ (see $\eqref{projectionLqLp}$ and $\eqref{piL}$) that 
$$\left(\pi_L (h) = 0 \right) \implies \left(\Pi_G (h) = 0\right)$$
and therefore
$$\Pi^\bot_{G}\left(Q(h^1,h^1)\right) = Q(h^1,h^1),$$
since $Q$ satisfies property $(H).1.$ by Lemma $\ref{lem:controlbilinearsmallE}$. Plugging the latter equality into $\eqref{orthogonalh1start}$ gives
$$\partial_t \left(\Pi^\bot_{G}(h^1)\right) = G_\eps\left(\Pi^\bot_{G}(h^1)\right) + \frac{1}{\eps}Q(h^1,h^1) + \Pi^\bot_{G}\left(\mathcal{A}^{(\delta)}_\eps(h^{0})\right).$$

\bigskip
By definition, $\Pi^\bot_{G}(h^1)$ is in $\left(\mbox{Ker}\left(G_\eps\right)\right)^\bot$ for all time and thanks to the control on the Boltzmann operator $Q$ in $E$ (Lemma $\ref{lem:controlbilinearsmallE}$), we are able to use Theorem $\ref{theo:adaptedBri1}$ with $\lambda_0 > \lambda_0'$ to which we have to add the source term $\Pi^\bot_{G}\left(\mathcal{A}^{(\delta)}_\eps(h^{0})\right)$. This yields the following differential inequality, where we denote by $C$ any positive constant independent of $\eps$,
\begin{eqnarray}
&&\frac{d}{dt}\norm{\Pi^\bot_{G}(h^1)}^2_{E} \label{orthogonalh1mid}
\\&&\quad\quad\quad\leq -\frac{2\lambda_0''}{\nu_0^2}\norm{\Pi^\bot_{G}(h^1)}_{E_\nu}^2 + C \left(\mathcal{F}_Q^{s}(h^1,h^1)\right)^2 + \abs{\langle \Pi^\bot_{G}\left(\mathcal{A}^{(\delta)}_\eps(h^{0})\right), \Pi^\bot_{G}(h^1) \rangle_E} \nonumber
\\&&\quad\quad\quad\leq -\frac{2\lambda_0''}{\nu_0^2}\norm{\Pi^\bot_{G}(h^1)}_{E_\nu}^2 + C \norm{h^1}^2_E\norm{h^1}^2_{E_\nu} + \norm{\Pi^\bot_{G}\left(\mathcal{A}^{(\delta)}_\eps(h^{0})\right)}_E\norm{\Pi^\bot_{G}(h^1)}_E, \nonumber
\end{eqnarray}
where we used a Cauchy-Schwarz inequality on the last term on the right-hand side.
\par Then we can decompose $h^1 = \Pi_{G}(h^1) + \Pi^\bot_{G}(h^1)$ to get first
\begin{equation*}
\begin{split}
\norm{h^1}^2_E\norm{h^1}^2_{E_\nu} \leq &4 \norm{\Pi^\bot_{G}(h^1)}^2_E\norm{\Pi^\bot_{G}(h^1)}^2_{E_\nu} + \frac{8}{\nu_0^2}\norm{\Pi_{G}(h^1)}^2_{E_\nu}\norm{\Pi^\bot_{G}(h^1)}^2_{E_\nu} 
\\&+\frac{4}{\nu_0^2}\norm{\Pi_{G}(h^1)}^4_{E_\nu},
\end{split}
\end{equation*}
into which we can plug the control on $\norm{\Pi_{G}(h^1)}^2_{E_\nu}$ we derived in $\eqref{projectionh1}$ to obtain, with $\norm{h_{in}}\leq \eta_1$,
\begin{equation}\label{h1h1}
\norm{h^1}^2_E\norm{h^1}^2_{E_\nu} \leq 4 \norm{\Pi^\bot_{G}(h^1)}^2_E\norm{\Pi^\bot_{G}(h^1)}^2_{E_\nu} + C\eta_1^2\norm{\Pi^\bot_{G}(h^1)}^2_{E_\nu}  +C e^{-\frac{2(\lambda_0 +\lambda_0')}{\eps^2}t}\norm{h_{in}}_{\mathcal{E}^{p}}^4.
\end{equation}
\par And finally, this inequality together with assumption $(ii)$ gives the existence of a constant $C_A >0$ such that
\begin{equation}\label{h0h1}
\norm{\Pi^\bot_{G}\left(\mathcal{A}^{(\delta)}_\eps(h^{0})\right)}_E\norm{\Pi^\bot_{G}(h^1)}_E \leq \frac{C_A}{\eps^2}\norm{h_{in}}_{\mathcal{E}^{p}} e^{-\frac{\lambda_0+\lambda_0'}{2\eps^2}t}\norm{\Pi^\bot_{G}(h^1)}_E.
\end{equation}

\bigskip
We plug $\eqref{h1h1}$ and $\eqref{h0h1}$ into $\eqref{orthogonalh1mid}$ and obtain, with $C$ and $C'$ being positive constants independent of $\eps$,
\begin{equation*}
\begin{split}
\frac{d}{dt}\norm{\Pi^\bot_{G}(h^1)}^2_{E} \leq &-\left[\frac{2\lambda_0''}{\nu_0^2} -\left(4\norm{\Pi^\bot_{G}(h^1)}^2_{E} + C\eta_1^2\right) \right]\norm{\Pi^\bot_{G}(h^1)}^2_{E_\nu} 
\\&+ C'\left(\norm{h_{in}}_{\mathcal{E}^{p}}^4 + \frac{1}{\eps^2}\norm{h_{in}}_{\mathcal{E}^{p}}\norm{\Pi^\bot_{G}(h^1)}_E\right)e^{-\frac{\lambda_0+\lambda_0'}{2\eps^2}t}.
\end{split}
\end{equation*}

We now choose $\eta_1$ sufficiently small so that 
$$C\eta_1^2 \leq \frac{\lambda_0'' - \lambda_0'}{\nu_0^2},$$
which in turns implies
\begin{equation}\label{orthogonalh1final}
\begin{split}
\frac{d}{dt}\norm{\Pi^\bot_{G}(h^1)}^2_{E} \leq &-\left[\frac{\lambda_0''+\lambda_0'}{\nu_0^2} -4\norm{\Pi^\bot_{G}(h^1)}^2_{E}  \right]\norm{\Pi^\bot_{G}(h^1)}^2_{E_\nu} 
\\&+C'\left(\norm{h_{in}}_{\mathcal{E}^{p}}^4 + \frac{1}{\eps^2}\norm{h_{in}}_{\mathcal{E}^{p}}\norm{\Pi^\bot_{G}(h^1)}_E\right)e^{-\frac{\lambda_0+\lambda_0'}{2\eps^2}t}.
\end{split}
\end{equation}

\bigskip
We define
$$\eta_* = \frac{\lambda_0''-\lambda_0'}{4\nu_0^2}. $$
 We have that $h^1_{t=0} = 0$ so we can define 
$$t_0 = \sup\{t>0,\quad \norm{\Pi^\bot_{G}(h^1)}^2_{E} < \eta_*\}.$$
Suppose that $t_0 < +\infty$,  we therefore have for all $t$ in $[0,t_0]$

$$\frac{d}{dt}\norm{\Pi^\bot_{G}(h^1)}^2_{E}\leq -\frac{2\lambda_0'}{\nu_0^2}\norm{\Pi^\bot_{G}(h^1)}^2_{E_\nu} + C' \left(\norm{h_{in}}_{\mathcal{E}^{p}}^4 + \frac{\sqrt{\eta_*}}{\eps^2}\norm{h_{in}}_{\mathcal{E}^{p}}\right)e^{-\frac{\lambda_0+\lambda_0'}{2\eps^2}t},$$
which gives
$$\forall t \in [0,t_0], \quad \frac{d}{dt}\norm{\Pi^\bot_{G}(h^1)}^2_{E}\leq -2\lambda'_0 \norm{\Pi^\bot_{G}(h^1)}^2_{E} + C'\left(\norm{h_{in}}_{\mathcal{E}^{p}}^4 + \frac{\sqrt{\eta_*}}{\eps^2}\norm{h_{in}}_{\mathcal{E}^{p}}\right)e^{-\frac{\lambda_0+\lambda_0'}{2\eps^2}t},$$
and by Gronwall lemma with $\Pi^\bot_{G}(h^1)_{(t=0)}=0$,
\begin{eqnarray*}
\forall t \in [0,t_0],\quad \norm{\Pi^\bot_{G}(h^1)}^2_{E} &\leq& C'\left(\norm{h_{in}}_{\mathcal{E}^{p}}^4 + \frac{\sqrt{\eta_*}}{\eps^2}\norm{h_{in}}_{\mathcal{E}^{p}}\right)\left(\int_0^t e^{-\frac{\lambda_0+\lambda_0'}{2\eps^2}s}\:e^{2\lambda_0's}\:ds\right)e^{-2\lambda_0't}
\\ &\leq & C'\left(\eps^{2}\norm{h_{in}}_{\mathcal{E}^{p}}^4 + \sqrt{\eta_*}\norm{h_{in}}_{\mathcal{E}^{p}}\right)\left(\int_0^{+\infty} e^{-\frac{\lambda_0-\lambda_0'}{2}u}\:du\right)e^{-2\lambda_0't},
\end{eqnarray*}
where we used the change of variable $u = \eps^{-2}s$ and we considered $\eps \leq 1/4$ (which only amounts to decreasing $\eps_d$).

\bigskip
Hence, there exists $K >0$ independent of $\eps$ such that
$$\forall t \in [0,t_0], \quad \norm{\Pi^\bot_{G}(h^1)}^2_{E}\leq K(\eta_1^4+\eta_1\sqrt{\eta_*}).$$

If we thus chose $\eta_1$ sufficiently small such that $(\eta_1^4+\eta_1\sqrt{\eta_*})K <\eta_*/2$ we reach a contradiction when $t$ goes to $t_0$ since $\norm{\Pi^\bot_{G}(h^1)}^2_{E}(t_0) \geq \eta_*$. Therefore, choosing $\eta_1$ small enough independently on $\eps$ implies first that $t_0 = +\infty$ and second that

\begin{equation}\label{orthogonalh1}
\forall t \in [0,+\infty),\quad \norm{\Pi^\bot_{G}(h^1)}^2_{E} \leq  C \norm{h_{in}}_{\mathcal{E}^{p}}^2e^{-2\lambda_0't}.
\end{equation}

\paragraph{End of the proof.} By just decomposing $h^1$ into its projection and orthogonal part and using the estimates $\eqref{projectionh1}$ and $\eqref{orthogonalh1}$ gives the expected exponential decay for $h^1$ in $E$.
\end{proof}
\bigskip


\subsection{Proof of Theorem $\ref{theo:aprioridecomposition}$}\label{subsec:proofapriori}

Let $p=1$ or $p=2$, $\lambda''$ be in $(0,\lambda_0)$, $\beta\geq \beta_0 = s_0$ and $0<\eps\leq \eps_d$. All the constants used in this section are the ones constructed in Proposition $\ref{prop:h0}$ with $(\lambda_0+\lambda_0')/2$ and Proposition $\ref{prop:h1}$ with $\lambda_0'$.
\par $E$ is continuously embedded in $\mathcal{E}^{p}_\nu$ because $L^2_v\left(\mu^{-1/2}\right) \subset L^2_v\left(\langle v \rangle^k\right)$ (mere Cauchy-Schwarz inequality) and $L^2_x \subset L^1_x$ because $\T^d$ is bounded. Hence, there exists $C_{E,\mathcal{E}}>0$ such that
\begin{equation}\label{equivalenceEpE}
\frac{1}{\nu_0}\norm{\cdot}_{\mathcal{E}^{p}} \leq \norm{\cdot}_{\mathcal{E}^{p}_\nu} \leq C_{E,\mathcal{E}}\norm{\cdot}_{E}.
\end{equation}

\bigskip
We define
$$\eta = \min\left(\eta_0,\eta_1, \frac{\eta_0}{2C_{E,\mathcal{E}}C_1}\right),$$
and we assume $\norm{h_{in}}_{\mathcal{E}^{p}} \leq \eta$. Since $h^1_{t=0} = 0$ we also define 
$$t_0 = \sup\{t>0,\quad \norm{h^1}_{\mathcal{E}_\nu^p} < \eta_0\}.$$

\bigskip
Suppose that $t_0 < +\infty$. Then, thanks to Proposition $\ref{prop:h0}$ we have that
$$\forall t \in [0,t_0], \quad \norm{h^0}_{\mathcal{E}^{p}} \leq \norm{h_{in}}_{\mathcal{E}^{p}} e^{-\frac{\lambda_0+\lambda_0'}{2\eps^2}t}.$$
We can thus apply Proposition $\ref{prop:h1}$ and get
$$\forall t \in [0,t_0], \quad \norm{h^1}_{E} \leq C_1\norm{h_{in}}_{\mathcal{E}^{p}} e^{-\lambda_0't}\leq C_1\eta \leq \frac{\eta_0}{2C_{E,\mathcal{E}}},$$
which is in contradiction with the definition of $t_0$ thanks to $\eqref{equivalenceEpE}$. Therefore $t_0=+\infty$ and we have the expected exponential decay stated in Theorem $\ref{theo:aprioridecomposition}$ for all time.




\bibliographystyle{acm}
\bibliography{bibliography}


\signmb\signsm
\signcm
\end{document}